\algnewcommand\And{\textbf{and}}
\algnewcommand\Or{\textbf{or}}
\newcolumntype{Y}{>{\centering\arraybackslash}X}
\tikzset{draw half paths/.style 2 args={%
  decoration={show path construction,
    lineto code={
      \draw [#1] (\tikzinputsegmentfirst) --
         ($(\tikzinputsegmentfirst)!0.2!(\tikzinputsegmentlast)$);
      \draw [#2] ($(\tikzinputsegmentfirst)!0.2!(\tikzinputsegmentlast)$)
        -- (\tikzinputsegmentlast);
    }
  }, decorate
}}
\tikzset{
    my box/.style = {
        , line cap = round
        , line join = round
    }
  }
\newcommand{\highlight}[3]{
  \path [my box, line width = #1, draw = #2, transparency group, opacity=1] #3;
}
\tikzset{wavy/.style={decorate,decoration={snake,amplitude=.4mm,segment length=2mm,post length=0mm,pre length=0mm},line width=.5}}
\tikzset{myDotted/.style={line width=1,dash pattern={on 1pt off 3pt}}}
\tikzset{myDashed/.style={line width=1,dash pattern={on 3pt off 3pt}}}
\tikzset{myDashDotted/.style={line width=1,dash pattern={on 3pt off 3pt on 1pt off 3pt}}}
\renewcommand*{\write@math}[3]{%
            \pgfmathtruncatemacro{\printindex}{#3+1}
            \Vertex[x = #1,y = #2,%
                    L = \cmdGR@cl@prefix\grMathSep{\printindex}]{\cmdGR@cl@prefix#3}}
\def\R{\mathbb R}
\def\Z{\mathbb Z}
\def\N{\mathbb N}
\def\E{\mathbb E}
\def\1{\mathbb 1}
\def\mc{\mathcal}
\def\symmgroup{\mathfrak{S}}
\def\cyclicSymmgroup{\symmgroup_{C}}
\DeclarePairedDelimiter\ceil{\lceil}{\rceil}
\DeclarePairedDelimiter\floor{\lfloor}{\rfloor}
\tikzset{v 0/.style={fill=white}, v 1/.style={fill=black!40},
   venn path 1/.style={insert path={
     (90:1/sqrt 3) arc (60:120:1) arc (180:0:1) arc (60:120:1) -- cycle }},
   venn path 2/.style={insert path={
     (90:1/sqrt 3) arc (120:180:1) arc (240:180:1) arc (120:60:1) -- cycle }},
   venn path 3/.style={insert path={
     (90:1/sqrt 3) arc (120:180:1) arc (240:300:1) arc (0:60:1) -- cycle }},
   pics/venn 3/.style args={#1#2#3#4#5#6#7#8}{code={%
     \fill [v #1] (-1.5-.5, -1.3-.5) rectangle (1.5+.5,1.6+.5);
     \fill [v #2, rotate=240, venn path 1];
     \fill [v #3, rotate=120, venn path 1];
     \fill [v #4, venn path 1];
     \fill [v #5, rotate=240, venn path 2];
     \fill [v #6, rotate=120, venn path 2];
     \fill [v #7, venn path 2];
     \fill [v #8, venn path 3];
     \draw [line width = .05mm] (90:1/sqrt 3) circle [radius=1] (210:1/sqrt 3) circle [radius=1]
     (330:1/sqrt 3) circle [radius=1] (-1.5-.5, -1.3-.5) rectangle (1.5+.5,1.6+.5);
     \node at (0,1.65){\Large $E_1$};
     \begin{scope}[rotate=120]
       \node at (0,1.65){\Large $E_2$};
     \end{scope}
     \begin{scope}[rotate=-120]
       \node at (0,1.65){\Large $E_3$};
     \end{scope}
}}}
\tikzset{every venn/.style={x=1em, y=1em, baseline=-.666ex,
    v 1/.style={fill=gray}}}
\def\mytransformation{%
\pgfmathsetmacro{\myX}{\pgf@x}
\pgfmathsetmacro{\myY}{\pgf@y}
\setlength{\pgf@x}{\myX pt}
\setlength{\pgf@y}{\myY pt}
}
\spnewtheorem{numberedClaim}{Claim}{\bfseries}{\itshape}
\begin{document}
\title{Matchings under distance constraints II.}
\author{P\'eter Madarasi}

\institute{P. Madarasi \at
  Department of Operations Research, ELTE E\"otv\"os Lor\'and University and the HUN-REN–ELTE Egerváry Research Group on Combinatorial Optimization, P\'azm\'any P\'eter s\'et\'any 1/C, 1117 Budapest, Hungary. \email{madarasip@staff.elte.hu}
}

\date{}

\maketitle

\thispagestyle{empty}

\begin{abstract}
  This paper introduces the \emph{$d$-distance $b$-matching problem}, in which we are given a bipartite graph $G=(S,T;E)$ with $S=\{s_1,\dots,s_n\}$, a weight function on the edges, an integer $d\in\Z_+$ and a degree bound function $b:S\cup T\to\Z_+$.
  The goal is to find a maximum-weight subset $M\subseteq E$ of the edges satisfying the following two conditions: 1) the degree of each node $v\in S\cup T$ is at most $b(v)$ in $M$, 2) if $s_it,s_jt\in M$, then $|i-j|\geq d$.
  In the cyclic version of the problem, the nodes in $S$ are considered to be in cyclic order.
  We get back the \emph{(cyclic) $d$-distance matching problem} when $b(s) = 1$ for $s\in S$ and $b(t) = \infty$ for $t\in T$.

  We prove that the $d$-distance matching problem is APX-hard, even in the unweighted case.
  We show that $(2-\frac{1}{d})$ is a tight upper bound on the integrality gap of the natural integer programming model for the cyclic $d$-distance $b$-matching problem provided that $(2d-1)$ divides the size of $S$.
  For the non-cyclic case, the integrality gap is shown to be at most $(2-\frac{2}{d})$.
  The proofs give approximation algorithms with guarantees matching these bounds, and also improve the best known algorithms for the (cyclic) $d$-distance matching problem.

  In a related problem, our goal is to find a permutation of $S$ maximizing the weight of the optimal $d$-distance $b$-matching.
  This problem can be solved in polynomial time for the (cyclic) $d$-distance matching problem --- even though the (cyclic) $d$-distance matching problem itself is NP-hard and also hard to approximate arbitrarily.
  For (cyclic) $d$-distance $b$-matchings, however, we prove that finding the best permutation is NP-hard, even if $b\equiv 2$ or $d=2$, and we give $e$-approximation algorithms.

\end{abstract}

\newpage

\section{Introduction}\label{sec:distance2:intro}
In this paper, we introduce a natural generalization of the \emph{$d$-distance matching problem}~\cite{madarasi2021matchings}, where the degree upper bound function in $S$ can be other than the all-one function, and degree bounds can be posed on the nodes in $T$ as well.
Given a bipartite graph $G=(S,T;E)$ with $S=\{s_1,\ldots, s_n\}$, a positive integer $d\in\N$ and a function $b: S\cup T \to \Z_+$, an edge set $M \subseteq E$ is called \emph{$d$-distance $b$-matching} if 1) the degree of each node $v \in S\cup T$ is at most $b(v)$ in $M$ and 2) if $s_it,s_jt \in M$ for $i\neq j$, then $|i-j|\geq d$.
A $d$-distance $b$-matching is called \emph{perfect} if the degree of each node $s\in S$ is exactly $b(s)$.
In the \emph{cyclic} version of the problem, the nodes in $S$ are considered to be in a cyclic order, and 2) is required cyclically, that is, if $s_it,s_jt\in M$ for $i\neq j$, then both $|i-j|\geq d$ and $|i-j|\leq n-d$ must hold.

In the (cyclic) $d$-distance $b$-matching problem, the goal is to find a maximum-weight (cyclic) $d$-distance $b$-matching for a given weight function $w: E \to \R$.
The special case $w\equiv 1$ is referred to as the \emph{unweighted} problem.
Note that the special case when $b(v) = 1$ if $v\in S$ and $\infty$ if $v\in T$ is the $d$-distance matching problem, which was introduced in an earlier article from the same author~\cite{madarasi2021matchings}.

The (perfect) $d$-distance $b$-matching problem is not only a natural problem extending the literature of matchings, but it also appears in several applications, which are natural extensions of the situations in which the $d$-distance matching problem could be applied, see~\cite{madarasi2021matchings}.
For example, imagine $n$ consecutive all-day events $s_1,\dots,s_n$, each of which must be assigned $b(s_i)$ of watchmen $t_1,\dots,t_k$.
For each event $s_i$, a set of possible watchmen is given --- those who are qualified to be on guard at event $s_i$.
Appoint exactly $b(s_i)$ watchmen to event $s_i$ such that no watchman is assigned to more than one of any $d$ consecutive events, where $d\in\N$ is given, and each watchman $t_j$ is on guard at most $b(t_j)$ events.
In the \emph{weighted} version of the problem, let $w_{s_it_j}$ denote the level of safety of event $s_i$ if watchman $t_j$ is on watch, and the objective is to maximize the level of overall safety.

As another application of the above question, consider $n$ items $s_1,\dots,s_n$ one after another on a conveyor belt, and $k$ machines $t_1,\dots,t_k$.
Each item $s_i$ is to be processed on the conveyor belt by $b(s_i)$ of the qualified machines ${N(s_i)\subseteq\{t_1,\dots,t_n\}}$ such that if a machine processes item $s_i$, then it cannot process any of the next $(d-1)$ items --- because the conveyor belt is running.

\paragraph{\normalfont\textbf{Previous work}}
In the special case $d=|S|$, one gets the classic $b$-matching problem in bipartite graphs.
For $d=1$, the problem reduces to the \mbox{$b$-matching problem} in bipartite graphs, and we will see that it is a special case of the circulation problem for $d=2$.

A feasible $d$-distance $b$-matching $M$ can be thought of as a $b$-matching that does not contain the edge set $\{s_it,s_jt\}$ for any $t\in T$ and $|i-j|\leq d$.
A similar problem is the \emph{$K_{p,p}$-free $p$-matching problem}~\cite{Makai07}.
Here one is given an arbitrary family $\mc T$ of the subgraphs of $G$ isomorphic with $K_{p,p}$, and the goal is to find a maximum-cardinality $b$-matching which does not induce any subgraph in $\mc T$, where $b:S\cup T\to\{0,\dots,p\}$.
This problem can be solved in polynomial time.
Note that in the $d$-distance $b$-matching problem, $b$ is arbitrary and the type of the forbidden subgraphs is $K_{2,1}$.
Another similar problem is the following.
Given a partition $E_1,\dots,E_k$ of $E$ and positive integers $r_1,\dots,r_k$, find a perfect matching $M$ for which $|M\cap E_i|\leq r_i$.
The problem is introduced and shown to be NP-complete in~\cite{Itai78}.
Note that the side constraints in the distance matching problem are similar, but the degree constraints are different and our edge sets do not form a partition of $E$.
Several other versions of the ``restricted'' ($b$-)matching problem have been introduced, for example in~\cite{baste2019uniquelyRestrMatch,BercziVegh10,furst2019acyclicMatching,pap2005alternating}.

The perfect $d$-distance matching problem is a special case of the list-coloring problem on interval graphs~\cite{ZeithoferWess03} and also of the frequency assignment problem~\cite{Aardal2007}, as it was shown in~\cite{madarasi2021matchings}.

The $d$-distance matching problem was shown to be NP-hard and an FPT algorithm parameterized by $d$ was given in~\cite{madarasi2021matchings}.
An efficient algorithm was also described for the case when the size of $T$ is a constant.
A $(2-\frac{1}{2d-1})$-approximation algorithm for the weighted $d$-distance matching problem was given, which also implies that the integrality gap of the natural IP model is at most this value.
We also gave a $(3/2+\epsilon)$-approximation algorithm for any constant $\epsilon>0$ in the unweighted case.

\paragraph{\normalfont\textbf{Our results}}
We investigate the integrality gap of the natural IP model and give approximation algorithms for the (cyclic) $d$-distance $b$-matching problem in Section~\ref{sec:distance2:approxAndGap}.
In particular, we show that $(2-\frac{1}{d})$ is a tight upper bound on the integrality gap in the cyclic case provided that $(2d-1)$ divides the size of $S$.
Concerning the non-cyclic case, the integrality gap is shown to be at most $(2-\frac{2}{d})$ for $d\geq 2$.
In addition, the proofs provide approximation algorithms with approximation factors matching the bounds above, further improving the algorithms for the (cyclic) $d$-distance matching problem given in an earlier article~\cite{madarasi2021matchings}.
As a special case, this further improves the bound on the integrality gap and the approximation factor for the $d$-distance matching problem from $(2-\frac{1}{2d-1})$ to $(2-\frac{2}{d})$.

Answering an open question from~\cite{madarasi2021matchings}, Section~\ref{sec:distance2:hardness} proves that the (cyclic) $d$-distance matching problem is APX-hard, even in the unweighted case.

In Section~\ref{sec:distance2:optimalPermutations}, motivated by the second application mentioned above, a different aspect of the problem is considered, in which our goal is to find a permutation of $S$ maximizing the weight of the optimal $d$-distance $b$-matching.
We prove that a permutation of $S$ maximizing the weight of the optimal $d$-distance matching can be found in polynomial time --- even though the (cyclic) $d$-distance matching problem itself is NP-hard and also hard to approximate arbitrarily.
For (cyclic) $d$-distance $b$-matchings, however, we prove that finding the best permutation is NP-hard, even when $b(s)=2$ for all $s\in S$ or $d=2$, and we give $e$-approximation algorithms for both the cyclic and the non-cyclic cases.

\paragraph{\normalfont\textbf{Notation}}
Throughout the paper, assume that $G=(S,T;E)$ is a bipartite graph without loops or parallel edges, unless stated otherwise.
Let $\Delta(v)$ and $N(v)$ denote the set of edges incident to node $v$ and the neighbors of $v$, respectively.
For a subset $X\subseteq E$ of the edges, $N_X(v)$ denotes the neighbors of $v$ for edge set $X$.
We use $\deg(v)$ to denote the degree of node $v$.
Let $L_d(s_i)$ and $R_d(s_i)$ denote the nodes in the interval of length (at most) $d$ ending and starting at $s_i$, respectively, that is, $L_d(s_i)=\{s_{\max(i-d+1,1)},\dots,s_i\}$ and $R_d(s_i)=\{s_i,\dots,s_{\min(i+d-1,|S|)}\}$.
In the cyclic case, $L_d(s_i)=\{s_{i-d+1},\dots,s_i\}$ and $R_d(s_i)=\{s_i,\dots,s_{i+d-1}\}$, where the indices are taken modulo $|S|$.
By definition, the minimum and the maximum of the empty set are $\infty$ and $-\infty$, respectively.
Given a function $f:A\to B$, both $f(a)$ and $f_a$ denote the value $f$ assigns to $a\in A$, and let $f(X)=\sum_{a\in X}f(a)$ for $X\subseteq A$.
Let $\chi_Z$ denote the characteristic vector of set $Z$, that is, $\chi_Z(y)=1$ if $y\in Z$, and $0$ otherwise.
Occasionally, the braces around sets consisting of a single element are omitted, for example $\chi_e=\chi_{\{e\}}$ for $e\in E$.
Let $\N$ and $\Z_+$ denote the set of positive and non-negative integers, respectively.

\section{Integrality gap and approximation algorithms}\label{sec:distance2:approxAndGap}
In this section, we prove that $(2-\frac{1}{d})$ is a tight upper bound on the integrality gap of the natural IP model of the $d$-distance $b$-matching problem provided that the size of $S$ is divisible by $(2d-1)$.
Then, we show that $(2-\frac{2}{d})$ is an upper bound on the integrality gap of the non-cyclic version for $d\geq 2$ --- without any restriction on the size of $S$.
The proofs also give two approximation algorithms with approximation factors matching the bounds above.

Consider the following LP-relaxation of the natural IP model for the weighted $d$-distance $b$-matching problem.
\begin{subequations}
  \begin{align}\label{lp:distance2:dmLp}
    \tag{LP1}
    \max&\sum_{e \in  E} w_{e} x_{e}\\
    \mbox{s.t.}\quad\quad\quad\quad\quad&&\nonumber\\
    x&\in\R_+^{E}&\\
    \sum_{e \in \Delta(v)} x_{e} &\leq b(v) &\forall v\in S\cup T\label{lp:distance2:dmLp:eq:degree}\\
    \sum_{\substack{st \in \Delta(t) \\ s \in R_d(s_i)}} x_{st} &\leq 1&\forall t \in T\ \forall i \in \{1,\ldots, n-d\}.\label{lp:distance2:dmLp:eq:distanceConstr}
  \end{align}
\end{subequations}
The LP model for the cyclic case consists of the same conditions but~(\ref{lp:distance2:dmLp:eq:distanceConstr}) is required for all $t \in T$ and for all $i \in \{1,\ldots, n\}$.
This model will be denoted by \textit{(LP1')}.
When $b(s)=1$ for all $s\in S$ and $b(t)=\infty$ for all $t\in T$, the integer solutions to~(\ref{lp:distance2:dmLp}) correspond to the feasible $d$-distance matchings, and we get back the linear program investigated in~\cite{madarasi2021matchings}.

The following theorem gives an upper bound on the integrality gap for the cyclic $d$-distance $b$-matching problem.

\begin{theorem}\label{thm:distance2:gapCyclicDistance}
  If $(2d-1)$ divides $|S|$, then the integrality gap of~{(LP1')} for the weighted cyclic $d$-distance $b$-matching problem is at most $(2-\frac{1}{d})$, and this bound is tight.
  Furthermore, there exists a polynomial-time approximation algorithm with the same guarantee.
\end{theorem}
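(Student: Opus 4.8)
The statement has two halves — an upper bound on the gap with a matching algorithm, and a tightness construction — and I would observe that a single rounding procedure yields both halves at once. Concretely, let $x^*$ be an optimal solution of \textit{(LP1')}. It suffices to produce, in polynomial time, a feasible (integral) cyclic $d$-distance $b$-matching $M$ with $w(M)\ge \frac{d}{2d-1}\,w(x^*)$, since then $\frac{w(x^*)}{w(M)}\le \frac{2d-1}{d}=2-\frac1d$ bounds the integrality gap, and the same inequality certifies the approximation factor (as $w(x^*)$ dominates the integer optimum).

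\textbf{Upper bound and algorithm.} The plan is to exploit the hypothesis $(2d-1)\mid n$ by partitioning the cyclic index set $\{1,\dots,n\}$ into $k=n/(2d-1)$ consecutive blocks $B_1,\dots,B_k$, each of length $2d-1$. Two structural facts drive the argument. First, in any feasible solution each $t\in T$ is joined to at most two nodes of a given block $B_\ell$, since three nodes $a<b<c$ inside a length-$(2d-1)$ block would force $c-a\ge 2d$ while $c-a\le 2d-2$. Second, on the fractional side, if $s_p$ is the first node of $B_\ell$, then the two windows $R_d(s_p)=\{s_p,\dots,s_{p+d-1}\}$ and $R_d(s_{p+d-1})=\{s_{p+d-1},\dots,s_{p+2d-2}\}$ together cover $B_\ell$, so constraint~(\ref{lp:distance2:dmLp:eq:distanceConstr}) gives $\sum_{s_i\in B_\ell}x^*_{s_it}\le 2$ for every $t$. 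I would use these two facts to set up an auxiliary integral selection problem carrying one budget of two edges per block–$t$ pair, and round $x^*$ into it while retaining a $\frac{d}{2d-1}$-fraction of the weight; the length ratio $d:(2d-1)$ between a single window and a whole block is exactly what produces the factor $2-\frac1d$, and the divisibility hypothesis is what lets the blocks align with the cyclic window pattern so that this accounting carries no wrap-around defect.

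\textbf{Tightness.} For the lower bound I would exhibit one instance that satisfies the hypothesis and attains the gap. Take $n=2d-1$ (so $k=1$ and $(2d-1)\mid|S|$), a single node $t$ with $b(t)$ large, $b(s_i)=1$ for all $i$, every edge $s_it$ present, and $w\equiv 1$. Any two of the $2d-1$ cyclically ordered nodes lie at cyclic distance at most $d-1<d$, so no two edges can be chosen together and the integer optimum is $1$; on the other hand $x_{s_it}=\frac1d$ satisfies every window constraint with equality and is \textit{(LP1')}-feasible of value $\frac{2d-1}{d}=2-\frac1d$. Thus the gap equals $2-\frac1d$ on this instance.

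\textbf{Main obstacle.} The crux is the rounding step. The per-block budget of two edges is purely local, whereas a feasible $M$ must simultaneously respect (a) the window constraints that straddle two consecutive blocks, (b) the bounds $b(t)$ that couple all blocks meeting a common $t$, and (c) the bounds $b(s)$ that couple the distinct $t$'s at a single node $s$. Extracting a genuinely feasible integral solution that still keeps a $\frac{d}{2d-1}$-fraction of $w(x^*)$ across all three couplings at once is the hard part, and it is precisely here that $(2d-1)\mid n$ must be used in an essential way — to charge the straddling windows without loss — and where one must verify that the factor $2-\frac1d$, and nothing smaller, is forced, in agreement with the tightness instance above.
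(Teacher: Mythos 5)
Your tightness construction is correct (it is in fact the paper's own example: $n=2d-1$, a single node $t$, all edges present, $b(s)=1$ on $S$, $w\equiv 1$, with $x\equiv\frac{1}{d}$ as the fractional solution), and your target inequality $w(M)\ge\frac{d}{2d-1}\,w(x^*)$ is the right thing to aim for. But the upper-bound half of your proposal has a genuine gap, and you name it yourself: the ``auxiliary integral selection problem'' with a budget of two edges per block--$t$ pair, and the rounding that ``retains a $\frac{d}{2d-1}$-fraction of the weight,'' are never constructed. The two structural facts you do establish --- at most two integral edges per block per $t$, and fractional mass at most $2$ per block per $t$ --- are correct but supply no selection mechanism, and the three couplings you list as the ``main obstacle'' (windows straddling consecutive blocks, the bounds $b(t)$ across blocks, the bounds $b(s)$ across terminals) are exactly what any such rounding must resolve. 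As written, the core of the argument is a statement of intent rather than a proof, and I know of no way to make a direct per-block rounding of a single optimal $x^*$ work.

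The paper closes this gap by a mechanism that avoids rounding altogether. For each shift $i\in\{1,\dots,2d-1\}$ let $S_i$ be the union of the cyclic intervals $R_d(s_{i+q(2d-1)})$, $q\in\{0,\dots,\frac{n}{2d-1}-1\}$ --- length-$d$ intervals separated by gaps of $d-1$ nodes, which is where the divisibility hypothesis enters --- and let $G_i$ be the subgraph of edges induced by $S_i$ and $T$. Two facts then do all the work: (i) the linear program restricted to $G_i$ is \emph{integral}, because after discarding redundant window constraints its matrix is a network matrix (the transpose of the incidence matrices of two laminar families), so a maximum-weight feasible solution $M_i$ in $G_i$ equals the LP optimum on $G_i$ and is computable in polynomial time; (ii) every edge of $G$ lies in exactly $d$ of the $2d-1$ subgraphs, whence
\begin{equation*}
  wx^* \;=\; \frac{1}{d}\sum_{i=1}^{2d-1}\sum_{e\in E_i}w_ex^*_e
  \;\le\; \frac{1}{d}\sum_{i=1}^{2d-1}w(M_i)
  \;\le\; \frac{2d-1}{d}\,\max_i w(M_i).
\end{equation*}
Taking $M$ to be the best $M_i$ yields both the gap bound and the approximation algorithm, and all three of your couplings are handled automatically because each $M_i$ is an exactly solved feasible solution of the \emph{whole} problem, not a block-by-block assembly. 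The idea missing from your proposal is precisely this shifted decomposition into subinstances on which the LP is integral (in the language of a remark in the paper, an $(m,\ell)$-cover with $(m,\ell)=(2d-1,d)$); without it, or some substitute for it, the $\frac{d}{2d-1}$ accounting you describe cannot be completed.
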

\begin{proof}
  For every $i\in\{1,\ldots,2d-1\}$, let $S_i \subseteq S$ denote the union of the sets $R_d(s_{i+q(2d-1)})$ for $q\in\{0, \ldots, \frac{n}{2d-1}-1\}$.
  Since the size of $S$ is divisible by $(2d-1)$, the nodes in $S_i$ form intervals of length $d$ in $s_1,\dots,s_n$ and each of these intervals is followed by $(d-1)$ nodes of $S \setminus S_i$ cyclically.
  For each $i\in\{1,\ldots,2d-1\}$, let $G_i=(S,T;E_i)$ be the subgraph of $G$ on the node set of $G$ whose edge set $E_i$ consists of the edges induced by $S_i$ and $T$.

  First, we prove that the polytope given by~{(LP1')} for $G_i$ is the convex hull of its integer solutions.
  Observe that constraints~(\ref{lp:distance2:dmLp:eq:distanceConstr}) need to be required only for those intervals that are fully included in $S_i$, because these immediately imply that the constraints hold for the rest of the intervals.
  The matrix of this reduced linear program is the transpose of the incidence matrices of two laminar families written under each other, which is a well-known network matrix, and the right-hand side is integer, hence the polytope is integer~\cite[Page 152]{AF11}.

  Next, we prove the bound on the integrality gap.
  Let $M_i$ denote a maximum-weight cyclic $d$-distance $b$-matching in $G_i$, and let $M$ be a maximum-weight solution among $M_1, \ldots , M_{2d-1}$.
  Let $x \in \R_+^{E}$ be an optimal LP solution for $G$ and let $M^*$ be a maximum-weight $d$-distance $b$-matching.
  It is easy to see that all edges of $G$ appear in exactly $d$ of the graphs $G_1, \ldots , G_{2d-1}$, which means that
  \[
  wx = \sum_{e \in E} w_ex_e = \frac{1}{d} \sum_{i=1}^{2d-1} \sum_{e \in E_i} w_ex_e
  \]
  holds.
  Restricting an optimal LP solution for $G$ to the edge set of $G_i$, a feasible LP solution is obtained for $G_i$, so the objective value of this restricted solution can be bounded from above by the LP optimum for $G_i$, which is equal to the IP optimum $w(M_i)$.
  From these, one gets that
  \begin{equation}\label{eq:distance2:gapCyclicDistance:mainineq}
    wx \leq \frac{1}{d} \sum_{i=1}^{2d-1} w(M_i) \leq \frac{2d-1}{d} w(M) \leq \frac{2d-1}{d} w(M^*),
  \end{equation}
  since $M$ is a feasible integer solution for $G$.
  This means that the integrality gap is at most $(2-\frac{1}{d})$, which was to be proven.

  Next, we give a tight example for every $d\in\N$.
  Let $G=(S,T;E)$ be a complete bipartite graph, where $S=\{s_1, \ldots, s_{2d-1}\}$ and $T=\{t\}$.
  Let $b(s)=1$ for all $s\in S$, and let $b(t)=\infty$.
  For $w\equiv 1$, the IP optimum is $1$, and $x \equiv \frac{1}{d}$ is an optimal LP solution, meaning that the LP optimum is $(2-\frac{1}{d})$, hence the bound above is tight.

  In fact, this proof shows that $M$ is a $(2-\frac{1}{d})$-approximate solution, which can be found in polynomial time by solving~{(LP1')}~\cite{tardos85CombinatorialLP} for every graph $G_i$, therefore we also obtain an approximation algorithm for those instances of the maximum-weight cyclic $d$-distance $b$-matching problem in which $(2d-1)|n$, which completes the proof.
  \qed
\end{proof}

The next theorem improves this upper bound in the non-cyclic case when $b(t)=\infty$ for all $t\in T$.
As a special case, this also improves the best known upper bound on the integrality gap for the non-cyclic $d$-distance matching problem from $(2-\frac{1}{2d-1})$~\cite{madarasi2021matchings} to $(2-\frac{2}{d})$.

\begin{theorem}\label{thm:distance2:gapDistance}
  Let $b:S\cup T\to\Z_+$ be such that $b(t)=\infty$ for all $t\in T$, and let $d\geq 2$.
  The integrality gap of (\ref{lp:distance2:dmLp}) for the weighted $d$-distance $b$-matching problem is at most $(2-\frac{2}{d})$.
  Furthermore, there exists a polynomial-time approximation algorithm with the same guarantee.
\end{theorem}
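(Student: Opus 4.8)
The plan is to follow the same decomposition-and-averaging scheme as in the proof of Theorem~\ref{thm:distance2:gapCyclicDistance}, but to build \emph{denser} auxiliary subgraphs so that only $(2d-2)$ of them, rather than $(2d-1)$, suffice while each edge still occurs in exactly $d$ of them; the ratio then improves from $\frac{2d-1}{d}$ to $\frac{2d-2}{d}=2-\frac2d$. Denser subgraphs are forced by a counting obstruction: in Theorem~\ref{thm:distance2:gapCyclicDistance} each $S_i$ is a union of length-$d$ blocks separated by gaps of length $(d-1)$, so one subgraph covers at most a $\frac{d}{2d-1}$ fraction of $S$, and $(2d-2)$ such subgraphs can never cover every node $d$ times. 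I would therefore shorten the gaps to $(d-2)$, i.e.\ let each $S_i$ be a union of length-$d$ intervals separated by $(d-2)$ omitted nodes; this gives density $\frac{d}{2d-2}$, exactly enough for $d$-fold coverage with $(2d-2)$ subgraphs. For $d=2$ the gaps vanish and each $G_i$ coincides with $G$, recovering the integral circulation structure mentioned in the introduction.

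Concretely, for $i\in\{1,\dots,2d-2\}$ I would let $S_i$ be the union of the length-$d$ intervals starting at the positions congruent to $i$ modulo $(2d-2)$, and set $G_i=(S,T;E_i)$ with $E_i$ the edges induced by $S_i$ and $T$; padding $s_1,\dots,s_n$ with $(d-1)$ edgeless dummy nodes on each side (and up to a multiple of $(2d-2)$) makes every real node lie in exactly $d$ of the sets $S_i$, so every edge of $G$ lies in exactly $d$ of the $E_i$. Assuming $w\ge 0$ without loss of generality, the optimal solution $x$ of~(\ref{lp:distance2:dmLp}) satisfies $wx=\frac1d\sum_{i=1}^{2d-2}\sum_{e\in E_i}w_ex_e$; restricting $x$ to $E_i$ gives a feasible solution of~(\ref{lp:distance2:dmLp}) for $G_i$, whose optimum equals the integer optimum $w(M_i)$ once integrality of $G_i$ is established. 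Taking the best $M\in\{M_1,\dots,M_{2d-2}\}$ then yields $wx\le\frac1d\sum_i w(M_i)\le\frac{2d-2}{d}\,w(M)\le\frac{2d-2}{d}\,w(M^*)$ for a maximum-weight $d$-distance $b$-matching $M^*$ of $G$, which at once bounds the integrality gap and exhibits $M$ as a $(2-\frac2d)$-approximate solution computable in polynomial time.

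The main obstacle, and the only place differing substantially from Theorem~\ref{thm:distance2:gapCyclicDistance}, is proving that the program for each $G_i$ is integral. With gaps of only $(d-2)$, constraints~(\ref{lp:distance2:dmLp:eq:distanceConstr}) no longer reduce to \emph{disjoint} per-interval inequalities: for each $t$ and each pair of consecutive intervals $I,I'$ of $S_i$ there is exactly one surviving ``linking'' window, giving $x_{st}+x_{s't}\le 1$ where $s,s'$ are the last node of $I$ and the first node of $I'$. Together with the per-interval inequalities these form, for each fixed $t$, a consecutive-ones (interval) system on the nodes of $S_i$ in order, which is totally unimodular; the degree constraints~(\ref{lp:distance2:dmLp:eq:degree}) on $S$ couple the different $t$'s, and since $b(t)=\infty$ there are no constraints on $T$ to interfere. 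I expect to show that the full constraint matrix is a network matrix --- as in Theorem~\ref{thm:distance2:gapCyclicDistance}, but now crucially using that in the non-cyclic setting the linking constraints string the intervals of each $S_i$ into a \emph{path} rather than a cycle, so the underlying digraph stays a forest. This is precisely why the argument needs both $b(t)=\infty$ and the non-cyclic order: the wrap-around link present in the cyclic problem would close an odd cycle and destroy integrality at this higher density, which is consistent with the cyclic bound $2-\frac1d$ being tight. Verifying the network-matrix (equivalently, total-unimodularity) claim for the combined interval-plus-linking-plus-degree system is the technical heart of the proof.
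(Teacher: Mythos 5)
Your high-level plan coincides with the paper's own proof: the same $(2d-2)$ subgraphs $G_i$ built from length-$d$ blocks of $S$ separated by gaps of $(d-2)$ nodes, the same padding of $S$ to make the shifts uniform, the same observation that every edge lies in exactly $d$ of the edge sets $E_i$, and the same restrict-and-average chain $wx\le\frac{1}{d}\sum_i w(M_i)\le\frac{2d-2}{d}\,w(M)\le\frac{2d-2}{d}\,w(M^*)$, which simultaneously bounds the integrality gap and yields the polynomial-time algorithm. Your structural remarks --- that $b(t)=\infty$ and the non-cyclic order are what permit the denser decomposition, and that $d=2$ gives $G_i=G$ and an integral LP --- also agree with the paper.

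The gap is that you never prove the one statement on which everything rests: that (\ref{lp:distance2:dmLp}) is integral for each $G_i$ (the paper's Claim~\ref{ncl:distance2:gap}). You reduce it correctly --- for each $t$ the surviving constraints~(\ref{lp:distance2:dmLp:eq:distanceConstr}) are the per-block inequalities plus one two-term linking inequality for each pair of consecutive blocks, an interval system, coupled across different $t$'s by the degree rows~(\ref{lp:distance2:dmLp:eq:degree}) --- but then you only say you ``expect to show'' that the combined matrix is a network matrix. This is not a step one can wave at: each piece separately (the consecutive-ones blocks $B_t$, and the side-by-side identity matrices coming from the degree constraints on $S$) is totally unimodular, but total unimodularity is not preserved when such systems are stacked, so the combination genuinely requires a construction; note also that in Theorem~\ref{thm:distance2:gapCyclicDistance} the analogous claim was easy precisely because the distance windows there were \emph{disjoint} per $t$, which is no longer true here. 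This verification is where the paper spends most of its proof: it exhibits an explicit tree --- a path whose arcs correspond to the degree rows, oriented alternately, with leaf arcs attached for the distance rows of each $B_t$ --- and checks that every column of the matrix (having two or three ones) closes a directed path in this tree, certifying the network-matrix property. Without this construction (or an equivalent total-unimodularity argument for the combined system), the integrality of the LP for $G_i$, and hence the bound $2-\frac{2}{d}$, remains unproven; the rest of your argument is correct and matches the paper, so supplying this construction is exactly what is missing.
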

\begin{proof}
  Let $G=(S,T;E)$ be a bipartite graph, where $S=\{s_1,\ldots, s_n\}$, and let $d\in\N$ and $w: E \to \R_+$.
  If $(2d-2) \nmid n$, then add $(2d-2-r)$ new isolated nodes to the end of $S$, where $r$ is such that $0<r<2d-2$ and $n=k(2d-2) + r$ for some $k\in\Z_+$.
  This leaves the feasible $d$-distance $b$-matchings in $G$ unchanged, therefore one can assume without loss of generality that $(2d-2)|n$.

  We proceed similarly to the first part of the proof of Theorem~\ref{thm:distance2:gapCyclicDistance}, but now we leave out only $(d-2)$ consecutive nodes --- instead of $(d-1)$ --- between disjoint intervals of length $d$ of $S$.
  That is, let $S_i \subseteq S$ denote the union of the sets $R_d(s_{i+q(2d-2)})$ for $q\in\{0,\ldots, \frac{n}{2d-2}-1\}$, where $i\in\{1, \ldots, 2d-2\}$ and $R_d$ is to be taken in the cyclic sense.
  Let $E_i$ consist of the edges induced by $S_i$ and $T$, and let $G_i=(S,T;E_i)$ for $i\in\{1, \ldots, 2d-2\}$.
  Just as in the proof of Theorem~\ref{thm:distance2:gapCyclicDistance}, we prove that the polytope defined by~(\ref{lp:distance2:dmLp}) is integer for $G_i$.
  \begin{numberedClaim}\label{ncl:distance2:gap}
    For each $i\in\{1,\dots,2d-2\}$, the polytope defined by~(\ref{lp:distance2:dmLp}) is integer for the subgraph $G_i$.
  \end{numberedClaim}
  \begin{proof}
    Without loss of generality, we can assume that $G$ is a complete bipartite graph, and hence the edge set of $G_i$ is the complete graph between $S_i$ and $T$.
    Notice that~(\ref{lp:distance2:dmLp:eq:degree}) is required only for $s\in S_i$, and~(\ref{lp:distance2:dmLp:eq:distanceConstr}) only for the intervals of $S$ contained in $S_i$ and for the intervals of length $d$ containing the last and the first nodes in two consecutive intervals in $S_i$, since the rest of the constraints are redundant.
    It suffices to prove that the matrix of this reduced~(\ref{lp:distance2:dmLp}) is a network matrix~\cite[Page 152]{AF11}.
    Fix an arbitrary order $t_1, \ldots , t_k$ of the nodes in $T$.
    The columns of the matrix of the program, that is the variables, are ordered as follows.
    Let the columns corresponding to the edges incident to $t_j$ form an interval for all $j\in\{1,\dots,k\}$, which appear in the order given by $t_1,\dots,t_k$, and for each $t_j$, sort the interval of the edges incident to $t_j$ by the index of their endpoint in $S$.
    Assume that the rows corresponding to constraints~(\ref{lp:distance2:dmLp:eq:distanceConstr}) appear first, in lexicographical order, and then the rows corresponding to constraints~(\ref{lp:distance2:dmLp:eq:degree}) follow, also in lexicographical order.
    Let $L$ denote the matrix obtained this way, and let $B_t$ denote the submatrix of $L$ given by the edges incident to $t$ and by the constraints~(\ref{lp:distance2:dmLp:eq:distanceConstr}) for $t$.
    By construction,
    \[
    B_t=\left[
      \arraycolsep=1.4pt
      \begin{array}{*{20}c}
        1&\ldots&1&&&&&&&&&&\\
         &&1&1&&&&&&&&&\\
         &&&1&\ldots&1&&&&&&&\\
         &&&&&1&1&&&&&&\\
         &&&&&&&\ddots&&&\\
         &&&&&&&&1&1&&&\\
         &&&&&&&&&1&\ldots&1
      \end{array}
    \right],
    \]
    where the zero entries are omitted, and each row contains either two or $d$ one entries.
    The lines correspond to constraints~(\ref{lp:distance2:dmLp:eq:distanceConstr}) alternately for an interval of $S_i$ and for the interval containing the last and the first node of two consecutive intervals in $S_i$.
    For every $t \in T$, there is one such block $B_t$ in $L$ placed diagonally.
    Furthermore, the rows corresponding to constraints~(\ref{lp:distance2:dmLp:eq:degree}) give $k$ identity matrices side by side, one under each $B_t$, that is, $L$ looks as follows.
    \[
    L=
    \begin{bmatrix}
      \begin{bmatrix}B_{t_1}\end{bmatrix}&&&\\
      &\begin{bmatrix}B_{t_2}\end{bmatrix}&&\\
      &&\ddots &\\
      &&&\begin{bmatrix}B_{t_k}\end{bmatrix}\\
      \\
      \begin{bmatrix}~I~\end{bmatrix}&\begin{bmatrix}~I~\end{bmatrix}& \ldots &\begin{bmatrix}~I~\end{bmatrix}
    \end{bmatrix}
    \]
    Now, we prove that $L$ is a network matrix.
    Note that each column of $L$ contains either two or three ones.
    First, consider the submatrix $L'$ formed by the columns of $L$ containing exactly three ones --- we will handle the rest of the columns later.
    Deleting the full-zero rows from the matrix $L'$, which were created by deleting some of the columns from $L$, we get that
    \[
    L'=
    \begin{bmatrix}
      \begin{bmatrix}A_{t_1}\end{bmatrix}&&&\\
      &\begin{bmatrix}A_{t_2}\end{bmatrix}&&\\
      &&\ddots &\\
      &&&\begin{bmatrix}A_{t_k}\end{bmatrix}\\
      \\
      \begin{bmatrix}~I~\end{bmatrix}&\begin{bmatrix}~I~\end{bmatrix}& \ldots &\begin{bmatrix}~I~\end{bmatrix}
    \end{bmatrix}\text{, where }
    A_t=
    \arraycolsep=1.4pt
    \begin{bmatrix}
      1&&&&&\\
      1&1&&&&\\
      &1&1&&&\\
      &&&\ddots&&\\
      &&&&1&1\\
      &&&&&1
    \end{bmatrix}
    \]
    for each $t \in T$.

    We prove that $L'$ is a network matrix.
    Denote the size of the identity matrices in the last rows of $L'$ by $m$.
    Then each block $A_t$ consists of $(m+1)$ rows and $m$ columns.
    Let $M'$ denote the submatrix given by the last $m$ rows of $L'$, that is, the identity matrices.
    Let the tree $F$ be a path $P$ with $m$ (undirected) edges $f_1, \ldots, f_{m}$, which will correspond to the rows of $M'$.
    The orientation of these edges will be given later.
    For each node of the path $P$, add $k$ new leaves connected to that node.
    Let $e_1^i,  \ldots, e_{k}^i$ denote those newly added leaf edges which are incident to the $i^{\text{th}}$ node of $P$ for $i\in\{1,\ldots,m+1\}$.
    Let edge $e_j^i$ correspond to the $((m+1)(j-1) + i)^{\text{th}}$ row of $L'$, in other words, $e_j^i$ belongs to the row containing the $i^{\text{th}}$ row of $A_{t_j}$.
    Orient the edges of $P$ alternately along the path --- the first edge can be oriented arbitrarily, and this determines the direction of the other edges along the path.
    If the (at most) two arcs of $P$ adjacent to arc $e_i^j$ are oriented towards $e_i^j$, then we orient $e_i^j$ outwards from the common node.
    Otherwise, if the (at most) two arcs of $P$ are oriented away from $e_i^j$, then $e_i^j$ is oriented inwards --- since the arcs of $P$ are alternately oriented, only these two cases are possible.

    Now, we define the non-tree arcs, which correspond to the columns of the matrix.
    Each column intersects exactly one of the matrices in the diagonal, say $A_{t_j}$, and each column contains exactly three non-zero elements.
    Suppose that the $r^{\text{th}}$ column is the $i^{\text{th}}$ column of $A_{t_j}$ for some $j\in\{1,\ldots,k\}$.
    Then two of the three ones in the column are in the $i^{\text{th}}$ and $(i+1)^{\text{st}}$ rows of $A_{t_j}$, to which rows the corresponding arcs are $e_j^{i}$ and $e_j^{i+1}$, respectively.
    The third one is in the  $j^{\text{th}}$ identity matrix, in the row corresponding to arc $f_i$.
    In the tree $F$, arcs $e_j^{i}$, $f_i$ and $e_j^{i+1}$  form a directed path, hence one can add a non-tree arc from the target of this path to its source, which corresponds to the $r^{\text{th}}$ column.
    This shows that $L'$ is a network matrix.

    Next, we prove that the original $L$ is also a network matrix by a simple extension of the tree $F$ and the non-tree arcs defined above.
    Let $M$ denote the submatrix of the last $|E_i|$ rows of $L$, that is, $M$ consists of the identity matrices of size $|E_i| \times |E_i|$ written side by side.

    Observe that if there is exactly one non-zero element in the $r^{\text{th}}$ column of $B_{t_j}$ in $L$ for some $j$, then there is exactly one non-zero element in the $r^{\text{th}}$ column of every block $B_{t_{j'}}$ for $j'\in\{1,\dots,k\}$.
    Each of these columns in $L$ contains exactly two ones.
    The first one is in the $i^{\text{th}}$ row of the corresponding block $B_{t_j}$, which is associated with arc $e_j^i\in F$.
    The other one is in the $r^{\text{th}}$ row of $M$.
    For each $j\in\{1,\ldots, k\}$, add a new leaf to $F$ connected to the endpoint of $e_j^i$ belonging to $P$, and associate it with the $r^{\text{th}}$ row of $M$.
    Orient it such that it forms a directed path of length two with $e_j^i$.
    Finally, add a non-tree arc from the source of this path to its target, which corresponds to the column of $L$ containing the $r^{\text{th}}$ column of $B_{t_j}$.
    This shows that $L$ is indeed a network matrix.
    As the right-hand side of~(\ref{lp:distance2:dmLp}) is integer, we get that the polytope defined by~(\ref{lp:distance2:dmLp}) is integer for $G_i$, which completes the proof of the lemma.
    \qed
  \end{proof}

  We continue the proof of Theorem~\ref{thm:distance2:gapDistance}.
  Let $x \in \R_+^{E}$ be an optimal LP solution, and let $x^{(i)} \in \R_+^{E}$ be an optimal LP solution for $G_i$, where $i\in\{1, \ldots, 2d-2\}$.
  Let $M^*$ be a maximum-weight $d$-distance $b$-matching, and let $M_i$ be a maximum-weight $d$-distance $b$-matching in $G_i$, where $i\in\{1, \ldots, 2d-2\}$.
  Chose a maximum-weight solution among $M_1, \ldots , M_{2d-2}$, and denote it by $M$.

  Similarly to~(\ref{eq:distance2:gapCyclicDistance:mainineq}), as each edge of $G$ is contained in exactly $d$ of the graphs $G_1,  \ldots, G_{2d-2}$, and by Claim~\ref{ncl:distance2:gap}, we get that
  \begin{equation*}
    wx
    \leq \frac{1}{d} \sum_{i=1}^{2d-2} \sum_{e \in E_i} w_ex^{(i)}_e
    =\frac{1}{d} \sum_{i=1}^{2d-2} w(M_i)
    \leq \frac{2d-2}{d}  w(M)
    \leq \frac{2d-2}{d}  w(M^*),
  \end{equation*}
  which means that the integrality gap is at most $(2-\frac{2}{d})$.
  The proof also shows that the edge set $M$ is a $(2-\frac{2}{d})$-approximate solution, which can be found in polynomial time, so the proof also gives an approximation algorithm for the maximum-weight $d$-distance $b$-matching problem with the same guarantee, provided that $b(t)=\infty$ for all $t\in T$.
  \qed
\end{proof}

Observe that, for $d=2$, Claim~\ref{ncl:distance2:gap} holds for arbitrary $b$, therefore~(\ref{lp:distance2:dmLp}) describes the $d$-distance $b$-matching polytope.

\begin{remark}
  In the proofs of Theorems~\ref{thm:distance2:gapCyclicDistance}~and~\ref{thm:distance2:gapDistance}, we constructed a collection of edge sets such that the linear program becomes integer when the problem is restricted to any of them, and every edge is contained in $d$ of the selected edge sets.
  This means that these two collections of edge sets form so-called \emph{$(m,\ell)$-covers} for $(m,\ell)=(2d-1,d)$ and $(m,\ell)=(2d-2,d)$, respectively, which gives an alternative way to finish the proofs, because the existence of an $(m,\ell)$-cover implies that the integrality gap is at most $\frac{m}{\ell}$~\cite{simultaneousMatchingArxiv}.
\end{remark}

\begin{remark}
  The local search algorithm given for the unweighted $d$-distance matching problem~\cite{madarasi2021matchings} also works for the unweighted $d$-distance $b$-matching problem.
  Therefore, we have a $(3/2+\varepsilon)$-approximation algorithm for the latter problem.
\end{remark}

\section{Hardness of approximation}\label{sec:distance2:hardness}
In the \emph{double matching problem}, we are given a bipartite graph $G=(S,T;E)$ and two sets $S_1,S_2\subseteq S$ such that ${S_1\cup S_2=S}$.
The goal is to find a maximum weight (size) subset $M$ of the edges for which both $M\cap E_1$ and $M\cap E_2$ are matchings, where $E_i$ denotes the edges induced by $S_i$ and $T$ for $i\in\{1,2\}$.
The double matching problem is known to be APX-hard in the weighted case~\cite{simultaneousMatchingArxiv}.
This implies that the weighted $d$-distance matching problem and the unweighted cyclic distance matching problems are APX-hard by a weight-preserving reduction from the double matching problem~\cite{madarasi2021matchings}.

In what follows, we prove that the hardness of approximation also applies to the unweighted non-cyclic case.

\begin{theorem}\label{thm:distance2:doubleUnweightedApxHard}
  The unweighted double matching problem is NP-hard to $\alpha$-approximate for any $\alpha < \frac{950}{949}$.
\end{theorem}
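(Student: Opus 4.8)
The plan is to prove the bound by a \emph{gap-preserving} reduction from a combinatorial packing problem that is already known to be NP-hard to approximate within an \emph{explicit} constant. The arithmetic of the target ratio points to a source with a gap of the form $95/94$: natural candidates are bounded-occurrence maximum $3$-dimensional matching or maximum independent set on bounded-degree graphs, for which Berman--Karpinski / Chleb\'ik--Chleb\'ikov\'a-type results supply such a constant. First I would fix a source instance together with the promise that its optimum is either at least $a=95k$ or at most $b=94k$, where $k$ grows linearly with the instance size, so that distinguishing the two cases is NP-hard.

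The reduction should exploit the two structural features of double matching: every $s\in S$ can occupy at most one edge of $M$, whereas every $t\in T$ may simultaneously absorb one edge to $S_1$ and one edge to $S_2$. I would encode each element/vertex $x$ of the source by a \emph{selection} edge whose inclusion is an all-or-nothing decision, and encode each conflict (triple or edge) by a $T$-vertex joined to the representatives of the involved elements through edges competing for the \emph{same} port (all lying in $E_1$); since $M\cap E_1$ must be a matching, at most one of these competing edges survives, which is exactly the packing/independence constraint. Because each element participates in several conflicts but its representative has degree at most one in $M$, I would tie the copies of $x$ together by a \emph{consistency gadget} — a short alternating path or cycle of auxiliary unit edges that uses the second port and has precisely two maximum states, ``$x$ selected'' and ``$x$ unselected.'' The auxiliary edges of these gadgets form a fixed \emph{background} contribution $C$ present in every canonical solution.

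Completeness is then routine: a source solution of value $k$ yields a double matching of size $k+C$ by turning on the corresponding selection edges and filling in all background edges. The substance is soundness: I would argue that any double matching $M$ can be transformed, without decreasing $|M|$, into one in which every gadget sits in one of its two canonical states, and then decoded into a source solution of value at least $|M|-C$. This calls for a local-correction (uncrossing) argument showing that two selection edges in conflict at a shared $T$-vertex can always be traded for background edges, so that no double matching can exceed $k+C$ when the true source optimum is $k$. Verifying that these exchanges never double-count at the shared $T$-vertices, and that the gadgets admit no illegitimate configuration that inflates the optimum in the NO case, is where I expect the main difficulty to lie.

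Finally I would carry out the gap computation. With the promise $a=95k$ versus $b=94k$ and the background calibrated to $C=855k$, completeness and soundness give double-matching optima of at least $a+C=950k$ versus at most $b+C=949k$; hence distinguishing the two cases, and therefore $\alpha$-approximating the unweighted double matching problem for any $\alpha<\frac{950}{949}$, is NP-hard. The two delicate points are pinning the background size $C$ to exactly the value that contracts $95/94$ to $\frac{950}{949}$, and ruling out the spurious NO-case solutions above $b+C$; both are settled inside the soundness analysis rather than in the easy completeness direction.
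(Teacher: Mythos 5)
Your high-level strategy coincides with the paper's: reduce from bounded-occurrence maximum $3$-dimensional matching with the Chleb\'{\i}k--Chleb\'{\i}kov\'a threshold $\frac{95}{94}$ (the paper uses the $2$-regular case), build per-triple gadgets with exactly two canonical states (``selected'' contributing one extra edge, ``unselected''), so that the double-matching optimum equals the $3$DM optimum plus an additive background, and let that background dilute $\frac{95}{94}$ into $\frac{950}{949}$. The paper's gadget is precisely of this type: for each hyperedge $e=(x,y,z)$ it takes the five edges $K_e=\{e^{(X)}e^{(Y)},e^{(Y)}e^{(Z)},xe^{(X)},ye^{(Y)},ze^{(Z)}\}$, whose canonical states are the two spine edges (unselected) or the three element edges (selected), and soundness is a two-line local correction: whenever $|K_e\cap M|<3$, replace $K_e\cap M$ by the spine, which never decreases $|M|$.

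The genuine gap in your proposal is that both of the points you flag as ``delicate'' are in fact the entire content of the theorem, and neither is carried out. First, no gadget is constructed and no soundness argument is given; without them there is nothing to check. Second, and more concretely, the calibration $C=855k$ is not a free design parameter: the background is forced by the construction and by the size of the source instance, while $k$ is the promised optimum, and the inequality you actually need, namely $949a-950b\ge C$ (equivalently $C\le 9a$ when $a/b=95/94$), must be \emph{proven} for the hard instances; if the natural background of your gadgets exceeds $9a$, the reduction yields a constant strictly weaker than $\frac{950}{949}$ and the stated theorem is not established. The paper sidesteps any dependence on where the source gap is located by using two universal bounds valid for every $2$-regular instance --- the greedy bound $|F^*|\ge|\mathcal{H}|/4=|Z|/2$ and the trivial bound $|M^*|\le|S|=5|Z|$ (every node of $S$ has degree at most one in a double matching) --- which give $|M^*|\le 10|F^*|$ and hence the ratio transfer $\frac{|F|}{|F^*|}\ge \frac{10}{\alpha}-9$; for $\alpha<\frac{950}{949}$ this exceeds $\frac{94}{95}$, contradicting the $\frac{95}{94}$-hardness. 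Supplying either such instance-independent bounds or an explicit verification that your background satisfies $C\le 9a$ on the hard instances, together with a concrete gadget and its soundness proof, is what your argument is missing.
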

\begin{proof}
  Given are three finite disjoint sets $X,Y,Z$ and a set of hyperedges $\mc E\subseteq X\times Y\times Z$, a subset of the hyperedges $F\subseteq\mc E$ is called \emph{$3$-dimensional matching} if $x_1\neq x_2, y_1\neq y_2$ and $z_1\neq z_2$ for any two distinct triples $(x_1, y_1, z_1), (x_2, y_2, z_2) \in F$.
  Finding a maximum-size $3$-dimensional matching $F\subseteq\mc E$ cannot be approximated arbitrarily unless P=NP~\cite{kann1991maximumBoudned}.
  In fact, the problem remains NP-hard to approximate better than $\frac{95}{94}$ even for \emph{$2$-regular} instances, that is, when each element of $X\cup Y\cup Z$ occurs in exactly two triples in $\mc E$~\cite{CHLEBIK2006320}.
  To reduce the $2$-regular $3$-dimensional matching problem to the double matching problem, consider the following construction.

  Let $\mc{H}_X, \mc{H}_Y$ and $\mc{H}_Z$ denote three copies of the set of hyperedges $\mc H$, where the three versions of a hyperedge $e \in \mc{H}$ are $e^{(X)}\in\mc{H}_X,\ e^{(Y)}\in\mc{H}_Y$ and $e^{(Z)}\in\mc{H}_Z$.
  Define a bipartite graph $G=(S,T;E)$, where $S=\mc{H}_X \cup Y \cup \mc{H}_Z$, $T=X \cup \mc{H}_Y \cup Z$ and $E$ is as follows.
  For each $e \in \mc{H}$, add edges $e^{(X)}e^{(Y)}$ and $ e^{(Y)}e^{(Z)} \in E$, furthermore, add an edge to $G$ between $u$ and the two hyperedges in $\mc{H}_U$ incident to $u$ in $H$ for each $u \in U$, where $U \in \{X,Y,Z\}$.
  Let $S_1=\mc{H}_X \cup Y$ and $S_2=Y \cup \mc{H}_Z$.
  For a hyperedge $e=(x,y,z) \in \mc{H}$, let $K_e=\{e^{(X)}e^{(Y)}, e^{(Z)}e^{(Y)},xe^{(X)}, ye^{(Y)}, ze^{(Z)}\}\subseteq E$.
  Figures~\ref{fig:distance2:3dimMatching}~and~\ref{fig:distance2:3dimMatchingConstruction2} give an example for the construction.

  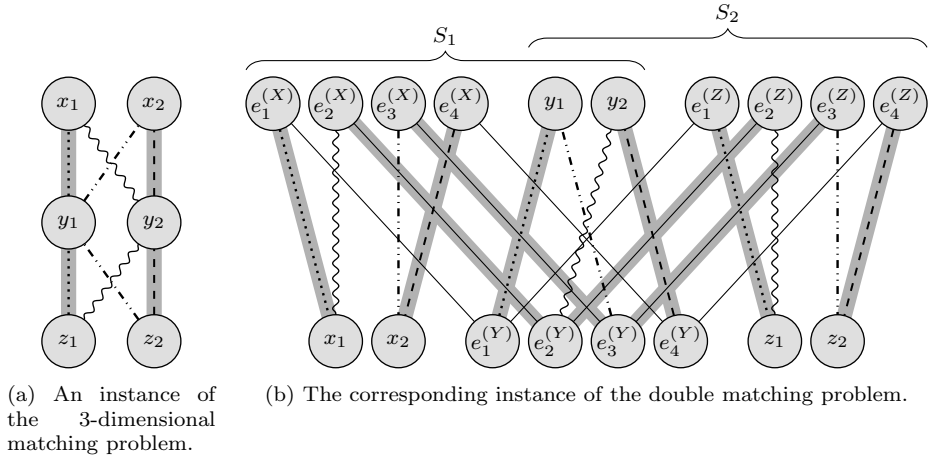
\begin{figure}
    \begin{subfigure}[t]{.225\textwidth}
      \centering
      \begin{tikzpicture}[scale=.75,xscale=1.5,yscale=1.2]
        \SetVertexMath
        \tikzset{VertexStyle/.append style = {minimum size = 20pt,inner sep=0pt}}
        \grEmptyPath[form=1,x=0,y=0,RA=1,rotation=0,prefix=z]{2}
        \grEmptyPath[form=1,x=0,y=1.75,RA=1,rotation=0,prefix=y]{2}
        \grEmptyPath[form=1,x=0,y=3.5,RA=1,rotation=0,prefix=x]{2}

        \draw[dotted,line width=.9] (z0) -- (y0);
        \draw[dotted,line width=.9] (y0) -- (x0);

        \draw[dashed,line width=.75] (z1) -- (y1);
        \draw[dashed,line width=.75] (y1) -- (x1);

        \draw[wavy] (z0) -- (y1);
        \draw[wavy] (y1) -- (x0);

        \draw[dashdotdotted,line width=.9] (z1) -- (y0);
        \draw[dashdotdotted,line width=.9] (y0) -- (x1);

        \begin{pgfonlayer}{background}
          \highlight{2mm}{black!30}{(z0.center) -- (y0.center) -- (x0.center)}
          \highlight{2mm}{black!30}{(z1.center) -- (y1.center) -- (x1.center)}
        \end{pgfonlayer}

      \end{tikzpicture}
      \caption{An instance of the $3$-dimensional matching problem.}\label{fig:distance2:3dimMatching}
    \end{subfigure}
    \hfill
    \begin{subfigure}[t]{.75\textwidth}
      \centering
      \begin{tikzpicture}[scale=.75,xscale=1.1,yscale=1.2]
        \SetVertexMath
        \tikzset{VertexStyle/.append style = {minimum size = 20pt,inner sep=0pt}}

        \grEmptyPath[form=1,x=0,y=3.5,RA=1,rotation=0,prefix=e^{(X)}]{4}
        \grEmptyPath[form=1,x=4.5,y=3.5,RA=1,rotation=0,prefix=y]{2}
        \grEmptyPath[form=1,x=7,y=3.5,RA=1,rotation=0,prefix=e^{(Z)}]{4}

        \grEmptyPath[form=1,x=1,y=0,RA=1,rotation=0,prefix=x]{2}
        \grEmptyPath[form=1,x=3.5,y=0,RA=1,rotation=0,prefix=e^{(Y)}]{4}
        \grEmptyPath[form=1,x=8,y=0,RA=1,rotation=0,prefix=z]{2}

        \draw[] (e^{(X)}0) -- (e^{(Y)}0);
        \draw[] (e^{(Z)}0) -- (e^{(Y)}0);
        \draw[dotted,line width=.9] (e^{(X)}0) -- (x0);
        \draw[dotted,line width=.9] (e^{(Y)}0) -- (y0);
        \draw[dotted,line width=.9] (e^{(Z)}0) -- (z0);

        \draw[] (e^{(X)}1) -- (e^{(Y)}1);
        \draw[] (e^{(Z)}1) -- (e^{(Y)}1);
        \draw[wavy] (e^{(X)}1) -- (x0);
        \draw[wavy] (e^{(Y)}1) -- (y1);
        \draw[wavy] (e^{(Z)}1) -- (z0);

        \draw[] (e^{(X)}2) -- (e^{(Y)}2);
        \draw[] (e^{(Z)}2) -- (e^{(Y)}2);
        \draw[dashdotdotted,line width=.9] (e^{(X)}2) -- (x1);
        \draw[dashdotdotted,line width=.9] (e^{(Y)}2) -- (y0);
        \draw[dashdotdotted,line width=.9] (e^{(Z)}2) -- (z1);

        \draw[] (e^{(X)}3) -- (e^{(Y)}3);
        \draw[] (e^{(Z)}3) -- (e^{(Y)}3);
        \draw[dashed,line width=.75] (e^{(X)}3) -- (x1);
        \draw[dashed,line width=.75] (e^{(Y)}3) -- (y1);
        \draw[dashed,line width=.75] (e^{(Z)}3) -- (z1);

        \begin{pgfonlayer}{background}
          \highlight{2mm}{black!30}{(e^{(X)}0.center) -- (x0.center)}
          \highlight{2mm}{black!30}{(e^{(Y)}0.center) -- (y0.center)}
          \highlight{2mm}{black!30}{(e^{(Z)}0.center) -- (z0.center)}

          \highlight{2mm}{black!30}{(e^{(X)}3.center) -- (x1.center)}
          \highlight{2mm}{black!30}{(e^{(Y)}3.center) -- (y1.center)}
          \highlight{2mm}{black!30}{(e^{(Z)}3.center) -- (z1.center)}

          \highlight{2mm}{black!30}{(e^{(X)}1.center) -- (e^{(Y)}1.center)}
          \highlight{2mm}{black!30}{(e^{(Z)}1.center) -- (e^{(Y)}1.center)}

          \highlight{2mm}{black!30}{(e^{(X)}2.center) -- (e^{(Y)}2.center)}
          \highlight{2mm}{black!30}{(e^{(Z)}2.center) -- (e^{(Y)}2.center)}
        \end{pgfonlayer}

        \draw [decorate,decoration={brace,amplitude=7pt,raise=1.5em}] (e^{(X)}0.west) -- (y1.east) node[midway,yshift=3em]{$S_1$};
        \draw [decorate,decoration={brace,amplitude=7pt,raise=2.5em}] (y0.west) -- (e^{(Z)}3.east) node[midway,yshift=4em]{$S_2$};
      \end{tikzpicture}
      \caption{The corresponding instance of the double matching problem.}\label{fig:distance2:3dimMatchingConstruction2}
    \end{subfigure}
    \caption{Illustration of the proof of Theorem~\ref{thm:distance2:doubleUnweightedApxHard}.
      Each hyperedge is represented by a unique line style.
      The highlighted $3$-dimensional matching in (a) corresponds to the highlighted solution in (b).}
  \end{figure}

  Assume that there exists an $\alpha$-approximation algorithm for the maximum double matching problem and let $M$ be an $\alpha$-approximate solution in $G$.
  We prove that one can construct a $(\frac{1}{10/\alpha-9})$-approximate $3$-dimensional matching in polynomial time using $M$, provided that $\alpha < \frac{10}{9}$.

  First, consider the following transformation of $M$.
  For each $e \in \mc{H}$, if $|K_e \cap M| < 3$, then add edges $e^{(X)}e^{(Y)}$ and $e^{(Y)}e^{(Z)}$ to $M$, and remove all other edges of $K_e$.
  After these operations, $M$ remains feasible and its size does not decrease, hence it remains an $\alpha$-approximate double matching.
  Observe that after the transformation we have either $|K_e \cap M|=2$ and hence $K_e \cap M = \{e^{(X)}e^{(Y)}, e^{(Y)}e^{(Z)}\}$, or $|K_e \cap M|=3$ and hence $K_e \cap M = \{xe^{(X)}, ye^{(Y)}, ze^{(Z)}\}$ for each $e=(x,y,z) \in \mc{H}$.

  Construct the $3$-dimensional matching $F\subseteq\mc H$ as the set of those hyperedges for which $|K_e \cap M|=3$.
  Note that $F$ is feasible, because $K_{e_1}\cap M= \{x_1e_1^{(X)}, y_1e_1^{(Y)}, z_1e_1^{(Z)}\}$ and $K_{e_2}\cap M= \{x_2e_2^{(X)}, y_2e_2^{(Y)}, z_2e_2^{(Z)}\}$ can hold simultaneously only if $x_1\neq x_2$, $y_1 \neq y_2$ and $z_1 \neq z_2$ --- as the degrees of these nodes are at most one in $M$.

  That is, we can construct a $3$-dimensional matching $F$ in $H$ such that
  \[
  |M|=3|F|+2(2|Z|-|F|) = |F|+4|Z|,
  \]
  since exactly three edges belong to each hyperedge in $F$, and two edges belong to each hyperedge in $\mc{H} \setminus F$.
  Applying this for a maximum double matching $M^*$, we get that
  \[|M^*| = |F'|+4|Z| \leq|F^*|+4|Z|,\]
  where $F'$ denotes the $3$-dimensional matching constructed from $M^*$, and $F^*$ is a maximum $3$-dimensional matching in $H$.
  Similarly, for any $3$-dimensional matching $F$ in $H$, we can create a double matching $M$ in $G$ such that $|M|=|F|+4|Z|$, so $|M^*| \geq |F^*|+4|Z|$, therefore
  \[
  |M^*| = |F^*|+4|Z|.
  \]
  Hence, for the $\alpha$-approximate double matching $M$, the $3$-dimensional matching $F$ constructed from $M$, and for optimal $M^*$ and $F^*$ solutions,
  \begin{equation}\label{eq:distance2:dmAPX:diff}
    |M^*|-|M| = |F^*|-|F|
  \end{equation}
  holds.

  With the greedy method, we can always construct a $3$-dimensional matching of size at least $\frac{|\mc{H}|}{4}$, therefore $|F^*|\geq\frac{|\mc{H}|}{4}=\frac{|Z|}{2}$.
  Using that $|S|=2|\mc{H}|+|Z|=5|Z|$ and that in a double matching $M$ each $s \in S$ has degree at most $1$, one gets that $|M^*| \leq 5|Z|$.
  Therefore,
  \begin{equation}\label{eq:distance2:dmAPX:boundOnOptM}
    |M^*| \leq 5|Z| \leq 10|F^*|
  \end{equation}
  holds.
  It follows from these observations that
  \begin{multline*}
    \frac{|F|}{|F^*|}
    = \frac{|F^*|-(|M^*|-|M|)}{|F^*|}
    = 1-\frac{|M^*|-|M|}{|F^*|}
    \geq 1-10\frac{|M^*|-|M|}{|M^*|}\\
    = 1-10\left(1-\frac{|M|}{|M^*|}\right)
    = 1-10\left(1-\frac{1}{\alpha}\right)
    = \frac{10}{\alpha}-9
  \end{multline*}
  where the first inequality follows from~(\ref{eq:distance2:dmAPX:diff}), and the second one holds by (\ref{eq:distance2:dmAPX:boundOnOptM}).
  If $\alpha < \frac{10}{9}$ and $F\neq\emptyset$, then $\frac{|F^*|}{|F|} \leq \frac{1}{10/\alpha-9}$, so if we had an $\alpha$-approximate algorithm for the double matching problem, then we could construct a $(\frac{1}{10/\alpha-9})$-approximate $3$-dimensional matching in polynomial time.
  The $2$-regular $3$-dimensional matching problem is NP-hard to $\beta$-approximate for any $\beta < \frac{95}{94}$, which implies that the double matching problem is NP-hard to $\alpha$-approximate for $\alpha<\frac{950}{949}$.
  \qed
\end{proof}

By the size-preserving reduction from the double matching problem to the distance matching problem given in~\cite{simultaneousMatchingArxiv}, the previous theorem implies the following.
\begin{theorem}\label{thm:distance2:distanceUnweightedApxHard}
  The unweighted distance matching problem is NP-hard to $\alpha$-approximate for any $\alpha < \frac{950}{949}$.
\end{theorem}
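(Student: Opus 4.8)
The plan is to transfer the inapproximability of Theorem~\ref{thm:distance2:doubleUnweightedApxHard} through a reduction that maps feasible double matchings to feasible (non-cyclic) distance matchings of exactly the same size. Because the correspondence preserves cardinality edge-for-edge, the optima of the two instances coincide, so any $\alpha$-approximation for the distance matching problem yields an $\alpha$-approximation for the unweighted double matching problem, and the bound $\alpha<\frac{950}{949}$ carries over verbatim. This is precisely the size-preserving reduction of~\cite{simultaneousMatchingArxiv}, which I would invoke; below I outline how it is set up.

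Given a double matching instance $(G=(S,T;E),S_1,S_2)$, I would partition $S$ into $A=S_1\setminus S_2$, $B=S_1\cap S_2$ and $C=S_2\setminus S_1$, and consider the conflict graph $H$ on vertex set $S$ in which $s$ and $s'$ are adjacent exactly when they both lie in $S_1$ or both lie in $S_2$. Thus $H$ is the union of the two (overlapping) cliques on $S_1$ and on $S_2$. The key structural observation is that $H$ is a unit interval graph: it is an interval graph (place $A$, $B$, $C$ on three consecutive overlapping unit windows) and it is claw-free (any independent set among the neighbors of a vertex has size at most two, since $S_1$ and $S_2$ are cliques). Fixing a unit-interval representation with pairwise distinct centers and rescaling to distinct integers, I would place the vertices of $S$ at these integer positions, pad the unused positions with isolated dummy vertices of $S$, and choose the distance parameter $d$ equal to the rescaled threshold, so that $|i-j|<d$ holds if and only if $s_i$ and $s_j$ are adjacent in $H$. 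The edges of $G$ and the set $T$ are left unchanged, with $b(s)=1$ for $s\in S$ and $b(t)=\infty$ for $t\in T$.

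It then remains to check the correspondence. Since the dummy vertices are isolated, the feasible edge sets of the two instances are literally the same subsets of $E$. For a fixed $t\in T$, the distance condition ``$s_it,s_jt\in M$ implies $|i-j|\ge d$'' becomes, by the choice of positions, exactly the requirement that $M$ contains at most one edge from $t$ to $S_1$ and at most one edge from $t$ to $S_2$; that is, $M\cap E_1$ and $M\cap E_2$ are matchings. The degree bound $b(s)=1$ is automatically satisfied by any double matching, because every $s\in S=S_1\cup S_2$ lies in $S_1$ or in $S_2$ and hence has degree at most one. Consequently the feasible solutions and their sizes coincide, the two optima are equal, and the inapproximability of Theorem~\ref{thm:distance2:doubleUnweightedApxHard} transfers with the same constant.

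The main obstacle is the structural step: realizing $H$ as an indifference graph on distinct integer positions so that the uniform window of length $d$ reproduces the union-of-two-cliques adjacency exactly. Concretely, one must separate $A$ from $C$ by distance at least $d$ while keeping all of $S_1$ and all of $S_2$ inside windows of length $d$; this is possible precisely because $B=S_1\cap S_2$ lets the two cliques overlap, and the dummy spacers absorb the slack forced by distinct positions. One also has to confirm that in the non-cyclic model no wraparound conflicts appear, and that the correspondence is exactly (not merely approximately) size-preserving, which is what keeps the ratio at $\frac{950}{949}$.
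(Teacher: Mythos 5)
Your proposal is correct and takes essentially the same route as the paper: the paper's proof of this theorem is exactly the one-line appeal to the size-preserving reduction from the double matching problem given in~\cite{simultaneousMatchingArxiv}, applied to Theorem~\ref{thm:distance2:doubleUnweightedApxHard}. Your explicit construction --- ordering $S$ as $S_1\setminus S_2$, then $S_1\cap S_2$, then $S_2\setminus S_1$ with dummy spacer nodes and a suitable threshold $d$ so that the distance constraint at each $t\in T$ coincides with requiring $M\cap E_1$ and $M\cap E_2$ to be matchings --- is a faithful reconstruction of that cited reduction, so the argument is complete.
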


Clearly, this result also applies to the more general unweighted cyclic version of the problem.
Note that the proof given in~\cite{simultaneousMatchingArxiv} gives a slightly larger threshold of $\frac{760}{759}$ in the weighted non-cyclic and in the unweighted cyclic cases.

\section{Optimal permutations}\label{sec:distance2:optimalPermutations}

This section investigates a slightly different problem, which is motivated by the second application presented in the introduction.
It is a natural question whether we can find a permutation of $S$ --- which corresponds to the items on the conveyor belt --- maximizing the weight of the optimal $d$-distance $b$-matchings.
Formally, let $M_\pi^*$ denote an optimal $d$-distance $b$-matching under the permutation $\pi$ of $S$.
We want to find a permutation of $S$ and a $d$-distance $b$-matching $M^*$ with respect to this permutation such that $w(M^*)=\max_{\pi\in\symmgroup}w(M_\pi^*)$, where $\symmgroup$ is the set of all permutations of $S$.

In the next section, a polynomial-time algorithm is described for finding an optimal permutation and an optimal $d$-distance matching under this permutation (that is, when $b(s)=1$ for all $s\in S$ and $b(t)=\infty$ for all $t\in T$).
Section~\ref{sec:distance2:permDbMAndCDbM}, however, proves that the analogous problem is NP-hard for $d$-distance $b$-matchings even if $b\equiv 2$ or $d=2$, and gives $e$-approximation algorithms for general $b$ in both the cyclic and the non-cyclic cases, where $e$ is Euler's number.

As we have already seen, for a given permutation of $S$, it is NP-complete to decide whether a perfect (cyclic) $d$-distance matching exists; and finding a largest one is APX-hard.
In this light, it is quite surprising that we can find a permutation of $S$ which maximizes the weight of the maximum-weight (cyclic) $d$-distance matching --- furthermore, an optimal distance matching under the optimal permutation can be found as well.

Before entering the details of this method, we need the following lemma, which is easy to prove by a straightforward reduction to the circulation problem.
\begin{lemma}\label{lem:distance2:perm:degreeconstrainededgeset}
  For a bipartite graph $G=(S,T;E)$, a weight function $w: E\to \R_+$ on its edges and integers $k, r \in \Z_+$, we can find a maximum-weight subset of the edges in polynomial time satisfying the following three conditions:
  \begin{enumerate}[1)]
    \itemsep0em
  \item the degrees of the nodes in $S$ are at most $1$,
  \item the degrees of the nodes in $T$ are at most $(k+1)$, and
  \item there are at most $r$ nodes in $T$ with degree exactly $(k+1)$.
  \end{enumerate}
\end{lemma}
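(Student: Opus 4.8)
The plan is to reduce the problem to a single minimum-cost flow (circulation) computation on an auxiliary directed network, exploiting the fact that all three side conditions are degree constraints that translate cleanly into arc capacities. First I would build a network with a source $p$, a sink $q$, a node for each $s\in S$ and each $t\in T$, plus one auxiliary ``overflow'' node $a$ used to enforce the global cap of condition~3. The edges of $G$ become arcs from the $S$-side to the $T$-side; since we want to \emph{maximize} weight, I will pass to costs by negating $w$ (or equivalently set up a max-cost circulation), and because $w$ is nonnegative I also give each $E$-arc an optional ``do-nothing'' alternative so that a cheaper empty subset is always feasible.

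The key modeling step is to split the capacity entering each $t\in T$ into a ``free'' part of capacity $k$ and a ``premium'' part of capacity $1$, so that together they realize the degree bound $k+1$ of condition~2. Concretely, I would route the premium unit of every $t$ through the shared overflow node $a$, and then put a single arc from $a$ to the sink $q$ with capacity exactly $r$. A node $t$ attains degree $k+1$ precisely when its premium unit is used, i.e. when flow passes through its premium arc into $a$; the capacity-$r$ bottleneck on the $a\to q$ arc therefore enforces condition~3 (at most $r$ nodes of $T$ reach degree $k+1$) automatically. Condition~1 is enforced by giving each $s\in S$ an incoming arc from the source $p$ of capacity~$1$. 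Each arc is given cost $0$ except the copies of the $E$-arcs, which carry the (negated) edge weights.

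The main steps in order: (i) construct the network as above, with $p\to s$ arcs of capacity~$1$, the duplicated $E$-arcs carrying weights, the split $t$-capacities ($k$ free plus $1$ premium via $a$), and the capacity-$r$ arc $a\to q$; (ii) verify that integer $p$--$q$ flows of value up to $|S|$ are in bijection with edge subsets satisfying 1)--3), with flow cost equal to the negative of the subset's weight; (iii) compute a minimum-cost flow (equivalently a max-weight circulation), which is polynomial by any standard algorithm; (iv) read off the chosen edges from the support of the flow on the $E$-arcs. Because all capacities are integers and the network is integral, an optimal integral flow exists and is found in polynomial time, giving the desired maximum-weight subset.

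The part that needs the most care is the correctness of the correspondence in step~(ii), specifically the equivalence between ``$t$ has degree $k+1$'' and ``$t$'s premium arc carries flow.'' One must check that in an optimal integral flow a node never wastefully uses its premium arc while its free capacity is slack --- this holds because any unit of flow into $t$ can be rerouted through the free part at no extra cost, so without loss of generality the premium arc of $t$ is used only when all $k$ free units are already saturated; hence premium usage coincides exactly with attaining degree $k+1$, and the $a\to q$ capacity genuinely counts the number of such nodes. I expect this rerouting/normalization argument, together with confirming that negating $w$ turns max-weight into min-cost without introducing spurious negative cycles (guaranteed since the graph is bipartite and acyclic from $p$ to $q$), to be the only nontrivial point; the rest is a routine translation of degree bounds into capacities.
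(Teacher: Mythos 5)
Your construction is correct and is exactly the route the paper takes: the paper proves this lemma by simply asserting "a straightforward reduction to the circulation problem," and your network (unit-capacity arcs into $S$, each $t$'s capacity split into $k$ free units plus one premium unit routed through a shared node with a capacity-$r$ arc to the sink) is a valid instantiation of that reduction. In fact your normalization worry in step (ii) is unnecessary: for feasibility you only need the implication that a degree-$(k+1)$ node must use its premium arc, which is automatic from the capacities, so the correspondence between integral flows and feasible edge sets already preserves optimal values without any rerouting argument.
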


Next, we prove that one can find a permutation maximizing the size of the optimal $d$-distance matching.
Note that the first algorithm for the non-cyclic case appeared in~\cite{DmISCO2020}.
In the rest of this section, a revised, more intuitive approach is presented, which will be modified to handle the cyclic case as well.

\begin{theorem}\label{thm:distance2:perm:nonCyclicDMSolvable}
  For a bipartite graph $G=(S,T;E)$, a weight function $w: E\to \R_+$ on its edges and a positive integer $d\in\N$, we can find a permutation of $S$ along with a $d$-distance matching $M$ in polynomial time such that the weight of $M$ is the largest among all $d$-distance matchings over all permutations of $S$.
\end{theorem}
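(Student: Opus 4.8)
The plan is to reduce the optimization over all permutations to a single degree-constrained edge-selection problem solved by Lemma~\ref{lem:distance2:perm:degreeconstrainededgeset}. The key observation is that whether an edge set $M$ with $\deg_M(s)\le 1$ for all $s\in S$ arises as a $d$-distance matching under \emph{some} permutation depends only on the degrees $\deg_M(t)=|N_M(t)|$ for $t\in T$. Writing $n=|S|=kd+r$ with $1\le r\le d$ (so that $k=\lceil n/d\rceil-1$), I will show that $M$ is realizable by some permutation if and only if $\deg_M(t)\le k+1$ for every $t\in T$ and at most $r$ nodes $t$ attain $\deg_M(t)=k+1$. These are exactly conditions 1)--3) of Lemma~\ref{lem:distance2:perm:degreeconstrainededgeset} for this choice of $k$ and $r$. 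Hence the algorithm is: compute $k,r$ from $n$, apply the lemma to obtain a maximum-weight edge set $M$ with $\deg_M(s)\le 1$, $\deg_M(t)\le k+1$ and at most $r$ nodes of degree $k+1$, and then convert $M$ into an explicit permutation. Since $w\ge 0$, correctness follows from the characterization: every $d$-distance matching under any permutation is a feasible edge set for the lemma (necessity), so its weight is at most $w(M)$, while $M$ itself is realizable (sufficiency), so the maximum is attained by $M$ under the constructed permutation.

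For the necessity direction I would partition the positions $1,\dots,n$ into the $k+1$ blocks $\{1,\dots,d\},\{d+1,\dots,2d\},\dots,\{(k-1)d+1,\dots,kd\}$ together with a final block of size $r$. Any two positions inside a single block are at distance at most $d-1$, so for each $t\in T$ the nodes matched to $t$ lie in distinct blocks; this forces $\deg_M(t)\le k+1$, and equality forces one such node in the final block, whence at most $r$ nodes can reach degree $k+1$. This part is routine bookkeeping.

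The substantive part is sufficiency, i.e.\ turning the degree bounds into an actual permutation. I would recognize this as the classical problem of arranging a multiset so that equal elements are at distance at least $d$ (scheduling with cooldown $d-1$): the nodes matched to $t$ play the role of copies of ``type $t$'', and the unmatched nodes of $S$ act as distinct fillers, of which there are exactly the right number to occupy all remaining slots. Letting $f_{\max}=\max_t\deg_M(t)$ and $c_{\max}$ the number of $t$ attaining it, the degree conditions give $(f_{\max}-1)d+c_{\max}\le n$: if $f_{\max}=k+1$ then $c_{\max}\le r$ yields $(f_{\max}-1)d+c_{\max}\le kd+r=n$, and if $f_{\max}\le k$ then $f_{\max}\le n/d$, so from $f_{\max}c_{\max}\le n$ the same bound follows. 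This inequality, together with the trivial bound by the total count $n$, is precisely the feasibility threshold for cooldown scheduling, so a valid arrangement exists and can be produced greedily (for instance by repeatedly placing a type of largest remaining multiplicity that is not within distance $d$ of its previous occurrence, and padding the idle slots with unmatched nodes). I expect this sufficiency construction to be the main obstacle --- specifically, verifying that the greedy never stalls and that the unmatched nodes exactly fill the idle slots --- whereas the reduction to the lemma and the necessity argument are then straightforward.
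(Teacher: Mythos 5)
Your proposal is correct and its core is the same as the paper's: both reduce the problem to Lemma~\ref{lem:distance2:perm:degreeconstrainededgeset} via the identical characterization (degree at most $k+1$ in $T$, at most $r$ nodes of degree exactly $k+1$, with $n=kd+r$), and both then must show that any edge set satisfying these bounds is realizable under some permutation. Your necessity argument via blocks of $d$ consecutive positions is exactly the (unstated) justification behind the paper's ``Clearly, a maximum-weight $d$-distance matching under all permutations fulfills these three conditions,'' and your derivation of the inequality $(f_{\max}-1)d+c_{\max}\le n$ from the degree conditions (including the case split on $f_{\max}=k+1$ versus $f_{\max}\le k$, using $f_{\max}c_{\max}\le n$ and $f_{\max}d\le n$) is correct. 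Where you diverge is the realization step: you recognize it as the classical rearrangement-with-cooldown problem and invoke its known feasibility threshold $\max\{n,(f_{\max}-1)d+c_{\max}\}$, producing the permutation by the most-frequent-available-first greedy, whereas the paper gives a self-contained construction --- order the $T$-nodes as (degree $k+1$, then degree $<k$, then degree $=k$), fill a $d\times(k+1)$ table (last column truncated to $r$ cells) row-major with the neighbors grouped by $T$-node, and read it off column-major --- followed by a three-case verification. These are not really competing ideas: the paper's table-filling is in effect a direct proof of precisely the scheduling threshold you cite, specialized to this degree profile. What your route buys is brevity and a conceptual link to a standard problem; what it costs is that the one genuinely nontrivial step (that the greedy never stalls when the threshold holds, which you yourself flag as ``the main obstacle'') is outsourced to folklore rather than proven, so a self-contained write-up would still need either a citation for that scheduling result or a construction like the paper's table.
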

\begin{proof}
  Let $k,r\in\Z_+$ be such that $|S|=kd + r$, where $0 \leq r<d$.
  Find a maximum-weight edge set $M$ in $G$ such that the degrees of the nodes in $S$ are at most $1$, the degrees of the nodes in $T$ are at most $(k+1)$, and there are at most $r$ nodes in $T$ with degree exactly $(k+1)$.
  Such an edge set $M$ can be found in polynomial time by Lemma~\ref{lem:distance2:perm:degreeconstrainededgeset}.

  Clearly, a maximum-weight $d$-distance matchings under all permutations fulfill these three conditions, so for this largest possible weight $W$,
  \begin{equation}\label{eq:distance2:perm:noncyclicdmsolvable:mainineq}
    w(M) \geq W
  \end{equation}
  holds.
  To show equality, it suffices to construct a permutation of $S$ such that $M$ is a feasible $d$-distance matching in $G$.
  Let $t_1,\dots,t_{|T|}$ be a permutation of the nodes in $T$ which lists the nodes of degree $(k+1)$ first, then the nodes with degree smaller than $k$, and finally the nodes of degree $k$.
  Let $s'_1,\dots,s'_n$ be a permutation of $S$ in which the neighbors in $M$ of $t_j$ form an interval for all $j\in\{1,\dots,|T|\}$ and these intervals appear in the order given by $t_1,\dots,t_{|T|}$ (the order of the neighbors of any $t_j$ is arbitrary).
  Figure~\ref{fig:distance2:perm:nonCyclicDMSolvable:a} shows an example for the construction.
  Now, take a table of size $d\times (k+1)$, and remove all cells from the last column except for the first $r$.
  Fill the remaining cells of the table in row-major order with $s'_1,\dots,s'_n$.
  Let $s_1\dots,s_n$ be the permutation of $S$ obtained by reading the table in column-major order.
  Figure~\ref{fig:distance2:perm:nonCyclicDMSolvable:b} shows an example for the table-filling step.
  \begin{figure}[t]
    \begin{subfigure}[t]{.6\linewidth}
      \centering
      \begin{tikzpicture}[scale=.9,yscale=.9]
        \SetVertexMath
        \tikzset{VertexStyle/.append style = {minimum size = 16pt,inner sep=0pt}}

        \def\nodeDist{.7}

        \newcounter{cx}
        \forLoop{1}{11}{cx}{%
          \Vertex[x=\nodeDist*\thecx,y=2,L=s'_{\thecx}]{s_\thecx}
        }
        \forLoop{1}{6}{cx}{%
          \Vertex[x=2.5*\nodeDist+\nodeDist*\thecx,y=0,L=t_{\thecx}]{t_\thecx}
        }

        \draw (t_1) -- (s_1);
        \draw (t_1) -- (s_2);
        \draw (t_1) -- (s_3);

        \draw (t_2) -- (s_4);

        \draw (t_3) -- (s_5);

        \draw (t_4) -- (s_6);
        \draw (t_4) -- (s_7);

        \draw (t_5) -- (s_8);
        \draw (t_5) -- (s_9);

        \draw (t_6) -- (s_10);
        \draw (t_6) -- (s_11);

        \node(t1) at (3,-1.5) {$ $}; 
      \end{tikzpicture}
      \caption{A feasible matching for $d=4$.
        The indeces of the nodes correspond to the first step of the construction described in the proof.
        In this case, $k=2$ and $r=3$.}\label{fig:distance2:perm:nonCyclicDMSolvable:a}
    \end{subfigure}
    \hfill
    \begin{subfigure}[t]{0.36\linewidth}
      \centering
      \begin{tikzpicture}[scale=.8,yscale=.9,xscale=.9]
        \SetVertexMath
        \tikzset{VertexStyle/.append style = {minimum size = 16pt,inner sep=0pt}}

        \Vertex[x=1.5,y=3,L=s'_1]{a}
        \Vertex[x=3,y=3,L=s'_2]{e}
        \Vertex[x=4.5,y=3,L=s'_3]{i}

        \Vertex[x=1.5,y=1.5,L=s'_4]{b}
        \Vertex[x=3,y=1.5,L=s'_5]{f}
        \Vertex[x=4.5,y=1.5,L=s'_6]{j}

        \Vertex[x=1.5,y=0,L=s'_7]{c}
        \Vertex[x=3,y=0,L=s'_8]{g}
        \Vertex[x=4.5,y=0,L=s'_9]{k}

        \Vertex[x=1.5,y=-1.5,L=s'_{10}]{d}
        \Vertex[x=3,y=-1.5,L=s'_{11}]{h}

        \begin{pgfonlayer}{background}
          \highlight{9mm}{black!20}{(a.center) -- (i.center)}
          \highlight{9mm}{black!20}{(b.center) -- (b.center)}
          \highlight{9mm}{black!20}{(f.center) -- (f.center)}
          \highlight{9mm}{black!20}{(j.center) -- (j.center)}
          \fill [black!20] (4.5,1.5-.625) rectangle (4.5+.625,1.5+.625);
          \highlight{9mm}{black!20}{(c.center) -- (c.center)}
          \fill [black!20] (1.5-.625,-.625) rectangle (1.5,+.625);
          \highlight{9mm}{black!20}{(g.center) -- (k.center)}
          \highlight{9mm}{black!20}{(d.center) -- (h.center)}
        \end{pgfonlayer}


        \node(t1) at (3,3.6) {$t_1$};
        \node(t2) at (1.1,1.1) {$t_2$};
        \node(t3) at (2.6,1.1) {$t_3$};
        \node(t4) at (4.1,1.1) {$t_4$};
        \node(t4_) at (1.1,-.4) {$t_4$};
        \node(t5) at (3.75,-.4) {$t_5$};
        \node(t5) at (2.25,-1.85) {$t_6$};

      \end{tikzpicture}
      \caption{The table corresponding to the graph shown on the left.
        The optimal permutation is $s'_1,s'_4,s'_7,$  $s'_{10},s'_2,s'_5,s'_8,s'_{11},s'_3,s'_6,s'_9$}\label{fig:distance2:perm:nonCyclicDMSolvable:b}
    \end{subfigure}
    \caption{Illustration of the proof of Theorem~\ref{thm:distance2:perm:nonCyclicDMSolvable}.}
  \end{figure}
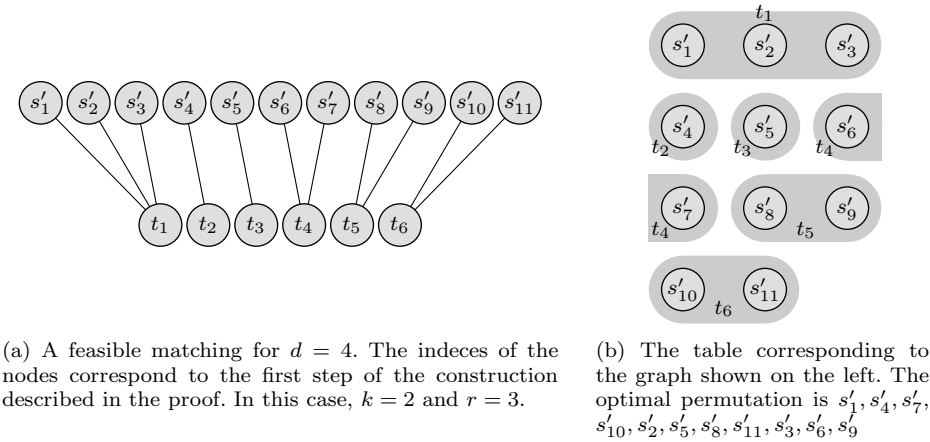

  We claim that $M$ is a feasible $d$-distance matching under the permutation $s_1,\dots,s_n$.
  The degrees in $M$ of the nodes in $S$ are clearly at most 1.
  To see that the distance constraints are met at each node $t\in T$, consider the following three cases.
  1) The degree of $t$ in $M$ is $(k+1)$.
  This means that the neighbors of $t$ occupy one of the first $r$ rows of the table, which are of length $(k+1)$.
  In column-major order, there are $(d-1)$ other nodes between any two consecutive neighbors of $t$, which was to be shown.
  2) The degree of $t$ in $M$ is smaller than $k$.
  The neighbors of $t$ are placed to at most two rows of the table in a row-major manner.
  Each of these rows is of length $k$ or $(k+1)$, therefore any two consecutive neighbors in row-major manner have at least $(d-1)$ other nodes between them.
  The first and the last neighbors of $t$ have a column between them which contains no neighbors of $t$, and hence there are more than $(d-1)$ nodes between them in column-major order.
  3) The degree of $t$ in $M$ is $k$.
  By construction, the neighbors of $t$ either occupy one of the last $(d-r)$ rows of the table --- which are of length $k$ ---, or they are placed to at most two rows in row-major manner such that the upper row is of length $(k+1)$.
  Similarly to the previous case, there are $(d-1)$ other nodes between any two consecutive neighbors and also between the first and the last one.
  These three cases prove that $M$ is feasible under the permutation $s_1,\dots,s_n$ of $S$.
  By~(\ref{eq:distance2:perm:noncyclicdmsolvable:mainineq}), this means that $M$ is a heaviest $d$-distance matching among all distance matchings under all permutations, which completes the proof of the theorem.
  \qed
\end{proof}

Now we prove the analogous theorem for the cyclic case.

\begin{theorem}\label{thm:distance2:perm:cyclicDMSolvable}
  For a bipartite graph $G=(S,T;E)$, a weight function $w: E\to \R_+$ on its edges and a positive integer $d\in\N$, we can find a permutation of $S$ and a maximum-weight cyclic $d$-distance matching $M$ with respect to this permutation in polynomial time such that the weight of $M$ is the largest over all permutations~of~$S$.
\end{theorem}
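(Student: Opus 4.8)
The plan is to mirror the proof of Theorem~\ref{thm:distance2:perm:nonCyclicDMSolvable}, replacing its degree analysis by the one forced by the \emph{cyclic} distance constraint. First I would write $|S|=n=kd+r$ with $0\le r<d$. The key observation is that the neighbor set of a single $t\in T$ is a cyclically $d$-spread set of positions, and such a set has size at most $\lfloor n/d\rfloor=k$: if $t$ had $m$ neighbors, the $m$ cyclic gaps between consecutive ones would each be at least $d$ and sum to $n$, forcing $md\le n$. Hence every cyclic $d$-distance matching satisfies $\deg(s)\le1$ for $s\in S$ and $\deg(t)\le k$ for $t\in T$. Using Lemma~\ref{lem:distance2:perm:degreeconstrainededgeset} with the role of $k+1$ played by $k$ (now allowing arbitrarily many $T$-nodes of degree $k$, so no ``at most $r$'' restriction is needed), or directly a circulation computation, I would find in polynomial time a maximum-weight edge set $M$ obeying these two bounds. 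Since any maximum-weight cyclic $d$-distance matching $M^*_\pi$ under any permutation $\pi$ obeys the same bounds, $M^*_\pi$ is feasible for this relaxed problem, so $w(M)\ge w(M^*_\pi)$ for every $\pi$, and thus $w(M)\ge W:=\max_\pi w(M^*_\pi)$. It then suffices to exhibit one permutation of $S$ under which $M$ is itself a feasible cyclic $d$-distance matching, for then $w(M)\le W$, equality holds, and $(\pi,M)$ is an optimal pair.

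Realizing $M$ amounts to arranging $n$ symbols on a cycle --- namely $\deg(t)$ copies of each $t$ together with the $n-|M|$ idle (unmatched) symbols --- so that equal symbols are cyclically at distance $\ge d$. This is a cyclic scheduling-with-cooldown problem, feasible precisely when the maximum multiplicity is at most $\lfloor n/d\rfloor=k$, which is exactly the degree bound we enforced. I would grid the matched neighbors into blocks (fixing an order $t_1,\dots,t_{|T|}$ of $T$ and making the $M$-neighbors of each $t_j$ a contiguous block), and then spread each block around the cycle. The genuinely new ingredient, absent in the non-cyclic case of Theorem~\ref{thm:distance2:perm:nonCyclicDMSolvable}, is the \emph{wraparound} constraint: for each $t$ the cyclic gap between its last and first neighbor must also be at least $d$.

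The main obstacle is precisely this wraparound/realization step. It is tempting to place each block as an arithmetic progression of common difference $d$; each such progression is cyclically $d$-spread, since its wraparound gap is $n-(\deg(t)-1)d\ge n-(k-1)d=d+r$. But these progressions need not tile $\Z_n$: for $n=6,\ d=2,\ k=3$, three blocks of size $2$ cannot be packed into difference-$2$ progressions, yet the valid arrangement $\{0,2\},\{3,5\},\{1,4\}$ (which uses one gap of $3$) exists. Hence the construction must allow gaps larger than $d$, and the crux is to show that a valid arrangement always exists whenever $\max_{t}\deg(t)\le k$. I would establish this either by a greedy most-remaining-occurrences-first placement around the cycle, maintaining the invariant that no symbol is ever blocked as long as its multiplicity stays $\le k$, or by adapting the table of Theorem~\ref{thm:distance2:perm:nonCyclicDMSolvable} with the $r$ slack cells distributed across separate rows and a case analysis (degree $=k$ versus $<k$) that now additionally checks the seam where the cyclic order closes up. I expect the seam/wraparound bookkeeping and the small-$k$ corner cases (where a block straddling a row boundary can land two neighbors at distance $d-1$) to require the real care; the flow computation, the polynomial running time, and the optimality identity $w(M)=W$ are routine once the realization is in hand.
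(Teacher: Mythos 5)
Your first half coincides exactly with the paper's proof: writing $n=kd+r$ with $0\le r<d$, observing that every cyclic $d$-distance matching has $\deg(s)\le 1$ on $S$ and $\deg(t)\le\lfloor n/d\rfloor=k$ on $T$ (your gap-counting argument for this bound is correct), computing a maximum-weight edge set $M$ under these relaxed degree bounds in polynomial time, and reducing everything to exhibiting one permutation under which $M$ is itself a feasible cyclic $d$-distance matching. All of this is sound and is precisely how the paper begins.

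The gap is the realization step, which you explicitly leave open (``the crux is to show that a valid arrangement always exists,'' ``I expect the seam/wraparound bookkeeping \dots{} to require the real care''). That step is the entire non-routine content of the theorem: your assertion that a cyclic arrangement exists precisely when the maximum multiplicity is at most $\lfloor n/d\rfloor$ is exactly what must be proved, and neither of your sketches proves it --- the greedy invariant is stated but not established, and ``adapting the table with the $r$ slack cells distributed across separate rows'' is not worked out. The paper closes this gap with a concrete, \emph{transposed} variant of the non-cyclic table: order $T$ so that the degree-$k$ nodes come first, make the $M$-neighbors of each $t_j$ a contiguous block in $s'_1,\dots,s'_n$, take a table with $d+1$ rows and $k$ columns, delete all cells of the last \emph{row} except the first $r$ (in the non-cyclic case it was the last \emph{column} that was truncated), fill it in row-major order, and read the permutation off in column-major order. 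Since the truncated line is a row, every column has length $d$ or $d+1$, so in the column-major reading any two same-row entries in adjacent columns are at least $d$ apart, and --- this is what resolves the wraparound issue you identified --- the cyclic gap between the last and the first neighbor of each $t$ also comes out at least $d$. Your worry about a block straddling a row boundary and landing two neighbors at distance $d-1$ is legitimate, but the ordering kills it: degree-$k$ blocks occupy full rows and never straddle, while a straddling block has length at most $k-1$, which forces its ``jump'' gap to be at least $2d-1\ge d$. Without this (or an equally explicit) construction together with its case analysis, the proof is incomplete.
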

\begin{proof}
  We follow the same principle as in the proof of Theorem~\ref{thm:distance2:perm:nonCyclicDMSolvable}.
  Let $|S|=k d +r$, where $0\leq r<d$.
  Find a maximum-weight edge set $M$ in $G$ such that the degrees of the nodes in $S$ are at most $1$, and the degrees of the nodes in $T$ are at most $k$.
  Such an edge set can be found in polynomial time.
  Similarly to the proof of Theorem~\ref{thm:distance2:perm:nonCyclicDMSolvable}, it suffices to construct a permutation of $S$ such that $M$ is a feasible $d$-distance matching in $G$.

  Let $t_1,\dots,t_{|T|}$ be a permutation of the nodes in $T$ which lists the nodes of degree $k$ first, then the rest of the nodes.
  Let $s'_1,\dots,s'_n$ be a permutation of $S$ in which the neighbors in $M$ of $t_j$ form an interval for all $j\in\{1,\dots,k\}$ and these intervals appear in the order given by $t_1,\dots,t_{|T|}$ (the order of the neighbors of any $t_j$ is arbitrary).
  Now, take a table of size $(d+1)\times k$, and remove all cells from the last row except for the first $r$.
  Fill the remaining cells of the table in row-major order with $s'_1,\dots,s'_n$.
  Let $s_1\dots,s_n$ be the permutation of $S$ obtained by reading the table in column-major order.
  Similarly to the proof for the non-cyclic case, one can prove that $M$ is a feasible $d$-distance matching under this permutation.
  \qed
\end{proof}

Observe that Theorems~\ref{thm:distance2:perm:nonCyclicDMSolvable}~and~\ref{thm:distance2:perm:cyclicDMSolvable} extend to the case when degree bounds are also given for the nodes in $T$. To prove this, one can require that in the initial edge set $M$, found in the first steps of the proofs, the degree of each node $t\in T$ is also at most $b(t)$ --- Lemma~\ref{lem:distance2:perm:degreeconstrainededgeset} is easy to modify for finding such an edge set.

\subsection{(Cyclic) $d$-distance $b$-matchings}\label{sec:distance2:permDbMAndCDbM}
This section proves that the analogous problem for $d$-distance $b$-matchings is hard, even if $b\equiv 2$ or $d=2$.

\subsubsection{Hardness results}\label{sec:distance2:permDbMAndCDbMNPC}
We saw that an optimal permutation can be found in polynomial time for both cyclic and non-cyclic $d$-distance matchings.
This section investigates the complexity of the analogous problem for the $d$-distance $b$-matching problem.
First, we show that finding an optimal permutation is already hard when $b(s)=2$ for all $s\in S$ and $b(t)=\infty$ for all $t\in T$, that is, we consider the slight modification of the $d$-distance matching problem where the degree bound for each node in $S$ is two --- instead of the all-one bound.

\begin{theorem}\label{thm:distance2:perm:nonCyclicDbMNPC}
  It is NP-complete to decide whether there exists a permutation of $S$ such that there is a perfect $d$-distance $b'$-matching under this permutation, where $d=|S|/2$ and
  \begin{equation}\label{eq:distance2:perm:nonCyclicNPCDefOfB}
    b'(v)
    = \begin{cases}
      2&\text{ if } v\in S,\\
      \infty& \text{ if } v\in T
    \end{cases}
  \end{equation}
  for $v\in S\cup T$.
\end{theorem}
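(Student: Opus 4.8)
The plan is to prove NP-completeness by reduction from a known NP-complete problem. Membership in NP is immediate: given a permutation of $S$, one can verify in polynomial time whether a perfect $d$-distance $b'$-matching exists under it (for example, by the circulation-based methods used elsewhere in the paper, or directly since feasibility is a simple combinatorial check). So the heart of the argument is the hardness direction.

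First I would analyze the structure forced by the parameters. Since $d=|S|/2$ and $b'(s)=2$, a perfect matching must pick exactly $2$ edges at each $s_i$, and the distance constraint $|i-j|\geq d=|S|/2$ means that for any $t\in T$, two neighbors $s_i,s_j$ of $t$ used in $M$ must satisfy $|i-j|\geq |S|/2$. With $n=|S|=2d$ positions, this is extremely restrictive: the positions split into the first half $\{1,\dots,d\}$ and the second half $\{d+1,\dots,2d\}$, and for a fixed $t$, any two of its chosen neighbors must lie in opposite halves at distance at least $d$ — in fact position $i$ in the first half can only be paired (at the same $t$) with positions $i+d$ (and nearby positions in the second half satisfying the bound). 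I would work out exactly which pairs $(i,j)$ are compatible at a single $t$, which essentially pairs up position $i$ with the ``antipodal'' position $i+d$. This reveals that choosing a permutation amounts to deciding which element of $S$ goes into each of the $2d$ slots, and the feasibility of a perfect $b'$-matching becomes a constraint about pairing up the two slot-halves consistently with the graph.

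The reduction target I would choose is a problem whose combinatorial core matches this pairing structure — a natural candidate is a partition/pairing problem such as a restricted perfect-matching or a hypergraph 2-coloring / exact-cover type problem, or possibly a direct reduction from a problem like \textsc{3-Dimensional Matching} or a graph Hamiltonicity/partition-into-pairs problem. The key construction step is to encode instance elements as nodes of $S$ and encode the allowed adjacencies (which $T$-nodes exist and which $S$-nodes they connect to) so that a valid permutation placing elements into antipodal slot-pairs corresponds exactly to a solution of the source instance. Each $T$-node acts as a gadget enforcing that its (at most two, used) neighbors are placed antipodally, i.e. that a certain pair of elements is ``matched''; the requirement that \emph{every} $s$ gets degree exactly $2$ forces a global consistency (each element participates in exactly two such pairings), which is where the source problem's constraints get enforced.

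The main obstacle I expect is controlling the interaction between the freedom to permute and the rigidity of the distance constraint simultaneously: the permutation is what we get to choose, but the feasibility of the $b'$-matching depends on that choice through the antipodal-pairing condition, so I must design the gadgets so that (i) any YES-instance of the source problem yields a permutation admitting a perfect $b'$-matching, and (ii) conversely any permutation admitting a perfect $b'$-matching can be decoded back into a source solution, with no ``spurious'' permutations slipping through that satisfy the matching constraints without corresponding to a genuine solution. Establishing the converse direction cleanly — arguing that the only way to achieve degree exactly $2$ everywhere under the $d=|S|/2$ distance bound is via the intended pairing pattern — will require the careful case analysis of admissible position pairs sketched above, and that is where I would spend most of the effort.
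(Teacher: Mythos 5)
Your proposal contains a correct structural analysis but stops exactly where the proof begins. You correctly establish NP membership, you correctly observe that with $d=|S|/2$ each $t\in T$ can use at most two edges and that its two used neighbors must sit in opposite halves of the permutation (the ``antipodal'' pairing), and you correctly frame the question as a pairing problem between the two halves. But you never commit to a source problem and never construct the reduction: you list several candidates (\textsc{3-Dimensional Matching}, Hamiltonicity, exact cover, hypergraph $2$-coloring) and explicitly defer the gadget design and the converse direction to future effort. None of the listed candidates is the one that makes the argument go through, and the step you defer is precisely the mathematical content of the theorem.

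The missing idea is this: since a perfect $b'$-matching has degree exactly $2$ at every node of $S$, and (in the relevant part of the construction) degree exactly $2$ at the relevant nodes of $T$, the matching is a disjoint union of cycles, i.e.\ a $2$-factor; the antipodal condition forces the $S$-nodes of each cycle to alternate between the two halves of the permutation as one traverses the cycle, which is possible if and only if each cycle has an even number of $S$-nodes, i.e.\ length divisible by $4$. So the natural source problem is the \emph{$C_{4k+2}$-free $2$-factor problem} (decide whether a bipartite graph has a $2$-factor all of whose cycles have length divisible by $4$), which is known to be NP-complete; the paper reduces from exactly this problem, adding $|S'|$ node-disjoint paths of length two (middle nodes in $S$, endpoints in $T$) as padding so that $|S|=2|S'|$ and hence $d=|S|/2=|S'|$, which calibrates the distance bound so that ``opposite halves'' means exactly ``different color classes of the alternating $2$-coloring of the cycle's $S$-nodes.'' Without identifying this target problem (whose NP-completeness is itself a nontrivial cited result) and the padding gadget, the two directions you describe as the ``main obstacle'' cannot be carried out, so the proposal as written is a plan rather than a proof.
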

\begin{proof}
  In the \emph{$C_{4k+2}$Free2Factor problem}, a bipartite graph $G'=(S',T';E')$ is given and the goal is to decide whether it contains a $2$-factor (that is, a subgraph in which the degree of each node is exactly two) such that the length of every cycle in it is a multiple of $4$.
  This problem is known to be NP-complete~\cite{berczi2021complexity}, therefore, it suffices to reduce it to the problem defined in the theorem.

  Without loss of generality, we can assume that $|S'|=|T'|$ --- otherwise, the instance of the $C_{4k+2}$Free2Factor problem is not solvable.
  Let $G=(S,T;E)$ be a copy of $G'$, and add $|S'|$ new nodes to $S$, and $2|S|$ new nodes to $|T|$.
  Add $|S'|$ node-disjoint paths of length two on the newly added nodes such that the nodes in the middle of the paths are in $S$ and their endpoints are in $T$.
  We show that $G'$ has a $2$-factor consisting of cycles whose length is divisible by 4 if and only if there is a permutation of $S$ such that a perfect $d$-distance $b'$-matching exists in $G$, where $b'$ is as defined in~(\ref{eq:distance2:perm:nonCyclicNPCDefOfB}) and $d=|S|/2$.

  Let $s_1,\dots,s_{n}$ be a permutation of the nodes in $S$ such that there exists a perfect $d$-distance $b'$-matching $M \subseteq E$.
  The degrees in $M$ of the nodes in $S$ are exactly $2$, so $M$ contains all the edges of the paths of length $2$ added to $G$, and the degrees of the nodes in $T'\subseteq T$ are also exactly $2$ since $|S'|=|T'|$.
  Therefore, restricting $M$ to the edge set of the original graph $G'$, we get a $2$-factor in $G'$.
  We prove that the length of each cycle is a multiple of $4$.
  For all $t \in T$, if $s_it, s_jt \in M$ for some $i\neq j$ and hence $t \in T'$, then $|i-j| \geq |S'|$.
  But $|S|=2|S'|$, thus one of the indices $i$ and $j$ is in $\{1,\dots,|S|\}$, and the other one is in $\{|S'|+1,\dots,2|S'|\}$.
  This means that the nodes of $S$ can be divided into two disjoint sets $S_1$ and $S_2$ such that one of the neighbors of $t$ is in $S_1$ and the other one is in $S_2$ for all $t \in T'$.
  From this, one gets that every second node of any cycle in $M'$ is in $S$, and the nodes of the cycle in $S$ are alternately in $S_1$ and in $S_2$, so the length of every cycle must be a multiple of $4$.

  To finish the proof, we show that if there is a $2$-factor in $G'$ with cycles whose length is divisible by 4, then there exists a permutation of the nodes in $S$ such that there is a $d$-distance $b'$-matching in $G$.
  For each cycle of the $2$-factor, divide its nodes belonging to $S$ into two sets $S_1$ and $S_2$ alternately.
  Construct a permutation by enumerating the nodes of $S_1$ in arbitrary order, then the middle nodes of the paths of length two in arbitrary order, and finally the nodes of $S_2$ in arbitrary order.
  Under this permutation, the union of the edge set of the $2$-factor and the edges of the paths of length two form a $d$-distance $b'$-matching, where $b'$ is as defined in~(\ref{eq:distance2:perm:nonCyclicNPCDefOfB}) and $d=|S|/2=|S'|$.
  \qed
\end{proof}

To prove a similar theorem for the cyclic case, we need the following lemma.
\begin{lemma}\label{lem:distance2:vertexCoverByDisj4CyclesNPC}
  It is NP-complete to decide whether all nodes of a bipartite graph $G=(S,T;E)$ can be covered by node-disjoint cycles of length $4$.
\end{lemma}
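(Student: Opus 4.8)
The plan is to establish membership in NP and then reduce from the \emph{$3$-dimensional matching problem}, which is NP-complete~\cite{kann1991maximumBoudned} and which the paper has already been exploiting elsewhere. Membership in NP is immediate: a family of node-disjoint $4$-cycles is a certificate of polynomial size, and one can check in polynomial time that the cycles are pairwise node-disjoint, that each is a genuine $C_4$, and that together they cover $S\cup T$. So the whole task reduces to proving NP-hardness. To keep element degrees bounded --- which makes the forcing arguments cleaner --- I would start from the $2$-regular $3$-dimensional matching problem (each element of $X\cup Y\cup Z$ lying in exactly two triples), whose hardness is recorded in~\cite{CHLEBIK2006320}.

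Given an instance with disjoint sets $X,Y,Z$ and hyperedges $\mc E\subseteq X\times Y\times Z$, I would build a bipartite graph $G=(S,T;E)$ in which every element of $X\cup Y\cup Z$ becomes a single node and every hyperedge $e=(x,y,z)$ is replaced by a gadget $G_e$ attached to the three nodes $x,y,z$. The gadget is to be engineered so that, in \emph{any} $C_4$-cover of $G$, it can be in exactly one of two states: a \emph{selecting} state whose $4$-cycles cover the three element-nodes $x,y,z$ together with the auxiliary nodes of $G_e$, and an \emph{idle} state whose $4$-cycles cover only the auxiliary nodes of $G_e$. The crucial point is that the element-nodes are given \emph{no} private way of being covered, so each element-node is forced into the selecting cycle of exactly one incident gadget; since each element lies in two triples, this is a binary choice, and the global consistency of these choices is precisely the statement that the chosen hyperedges form an exact cover, i.e. a perfect $3$-dimensional matching. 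The equivalence then splits into two routine directions: from a perfect matching, put the chosen gadgets in the selecting state and all others in the idle state to obtain a $C_4$-cover; conversely, from a $C_4$-cover, read off the selecting gadgets and argue they cover every element exactly once, hence form a perfect matching.

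The main obstacle is the gadget design, and it springs from a parity mismatch: a hyperedge carries three element-nodes, while every $C_4$ uses exactly two nodes of $S$ and two of $T$. Thus the selecting and idle states cannot cover the same auxiliary set, and I must insert auxiliary nodes --- distributed across the two sides and a fixed number per gadget --- so that (i) in the idle state the auxiliary nodes of $G_e$ tile by themselves into $4$-cycles, and (ii) in the selecting state the three element-nodes together with the auxiliary nodes again split, with the two sides balanced, into $4$-cycles. Reconciling these two balanced decompositions against the odd split forced by three element-nodes is the heart of the construction, and I expect it to be where most of the work lies; I would resolve it by routing each element-node through a small ``switch'' tied to its two incident gadgets, so that the leftover imbalance of one gadget is absorbed through the shared element-node rather than internally. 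The two verifications that must then be carried out are, first, that the only $C_4$'s present in $G$ are the intended selecting and idle cycles --- so that no ``mixed'' $4$-cycle straddling two gadgets can cheat the exact-cover semantics, which I would enforce by making the gadgets edge-disjoint apart from the shared element-nodes --- and second, that $|S|=|T|$ with the node counts balanced on both sides in every admissible state. Once the gadget meets these requirements, the equivalence ``$G$ admits a $C_4$-cover if and only if $\mc E$ has a perfect matching'' follows, which completes the proof.
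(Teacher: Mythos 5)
Your proposal correctly handles NP membership, but the hardness part has a genuine gap: the gadget on which everything rests is never constructed. You yourself identify the obstruction precisely --- in a bipartite graph every $4$-cycle covers exactly two nodes of $S$ and two of $T$, so if a gadget had an idle state tiling its auxiliary nodes ($a$ of them in $S$, $b$ in $T$, forcing $a=b$), then a selecting state covering the same auxiliaries plus the three element-nodes (split $s$ in $S$, $t$ in $T$, $s+t=3$) would force $a+s=b+t$, hence $s=t$, contradicting $s+t$ odd. So the clean selecting/idle dichotomy on which your whole equivalence is built provably cannot exist for triples, and the ``switch'' you propose to absorb the imbalance through shared element-nodes is left entirely unspecified; with such entanglement across gadgets, the two-state semantics, the claim that no mixed $4$-cycle can cheat, and both directions of the equivalence would all have to be re-established from scratch. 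As it stands, the proposal is a plan whose hardest step --- by your own account ``where most of the work lies'' --- is missing.

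The paper avoids this obstacle altogether by reducing from the \emph{$4$-dimensional} matching problem~\cite{Karp72} rather than $3$-dimensional matching: with four elements per hyperedge the parity is exactly right. For each hyperedge $e=(v_1,v_2,v_3,v_4)$ it adds four private nodes $v_1^e,v_2^e,v_3^e,v_4^e$, the element cycle $v_1v_2v_3v_4$, the gadget cycle $v_1^ev_2^ev_3^ev_4^e$, and the rungs $v_iv_i^e$. Selecting $e$ means using the two bridge cycles $v_1,v_1^e,v_2^e,v_2$ and $v_3,v_3^e,v_4^e,v_4$, which cover all four elements and all four private nodes; idling means using the gadget cycle alone, leaving the elements to other hyperedges. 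The only cleanup needed in the reverse direction is a local swap when a cover uses both the element cycle and the gadget cycle of the same hyperedge. If you want to salvage your approach, replacing $3$-dimensional matching by $4$-dimensional matching as the source problem is precisely the missing idea.
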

\begin{proof}
  Given four disjoint sets $X_1,X_2,X_3,X_4$ and a set of hyperedges $\mc{E} \subseteq X_1 \times X_2 \times X_3 \times X_4$, a subset of the hyperedges $F~\subseteq~\mc{E}$ is called \emph{$4$-dimensional matching} if $u_1 \neq v_1, u_2 \neq v_2$, $u_3 \neq v_3$ and $u_4 \neq v_4$ for any two distinct hyperedges $(u_1,u_2,u_3,u_4)$, $(v_1,v_2,v_3,v_4) \in F$.
  It is NP-complete to decide whether there exists a $4$-dimensional matching of size $|X_1|$~\cite{Karp72}.

  We reduce the $4$-dimensional matching problem to the problem defined in the lemma statement.
  Let $H=(X_1 \cup X_2 \cup X_3 \cup X_4,\mc{E})$ denote the hypergraph given in the $4$-dimensional matching problem, and define an instance of the problem given in the statement of the lemma as follows.
  Let the node set of $G$ consist of the elements in $X_1 \cup X_2 \cup X_3 \cup X_4$ and, for each $e\in\mc{E}$, four additional nodes $v_1^e$, $v_2^e$, $v_3^e$ and $v_4^e$.
  For each hyperedge $(v_1,v_2,v_3,v_4) \in \mc{E}$ add the edges $v_1v_2, v_2v_3, v_3v_4$ and $v_4v_1$ to $G$.
  Also add the edges $v_1^ev_2^e$, $v_2^ev_3^e$, $v_3^ev_4^e$, $v_4^ev_1^e$, and $v_1v_1^e$, $v_2v_2^e$, $v_3v_3^e$, $v_4v_4^e$ to $G$ for all $e\in\mc{E}$.
  Figure~\ref{fig:distance2:vertexCoverByDisj4CyclesNPC:constrForHyperedge} illustrates the construction for the hyperedge $(v_1,v_2,v_3,v_4)\in\mc{E}$.
  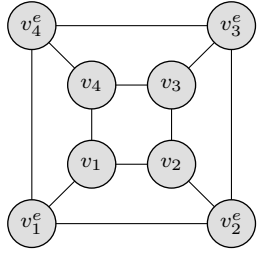
\begin{figure}
    \centering
    \begin{tikzpicture}[scale=.6]
      \SetVertexMath
      \tikzset{VertexStyle/.append style = {minimum size = 18pt,inner sep=0pt}}
      \Vertex[x=-1.75/2,y=-1.75/2,L=v_1]{v1}
      \Vertex[x=1.75/2,y=-1.75/2,L=v_2]{v2}
      \Vertex[x=1.75/2,y=1.75/2,L=v_3]{v3}
      \Vertex[x=-1.75/2,y=1.75/2,L=v_4]{v4}

      \draw[] (v1) -- (v2);
      \draw[] (v2) -- (v3);
      \draw[] (v3) -- (v4);
      \draw[] (v4) -- (v1);

      \begin{scope}[scale=2.5]
        \Vertex[x=-1.75/2,y=-1.75/2,L=v_1^e]{v1e}
        \Vertex[x=1.75/2,y=-1.75/2,L=v_2^e]{v2e}
        \Vertex[x=1.75/2,y=1.75/2,L=v_3^e]{v3e}
        \Vertex[x=-1.75/2,y=1.75/2,L=v_4^e]{v4e}
      \end{scope}

      \draw[] (v1e) -- (v2e);
      \draw[] (v2e) -- (v3e);
      \draw[] (v3e) -- (v4e);
      \draw[] (v4e) -- (v1e);

      \draw[] (v1) -- (v1e);
      \draw[] (v2) -- (v2e);
      \draw[] (v3) -- (v3e);
      \draw[] (v4) -- (v4e);

    \end{tikzpicture}
    \caption{Illustration for the reduction in the proof of Lemma~\ref{lem:distance2:vertexCoverByDisj4CyclesNPC} for the hyperedge $(v_1,v_2,v_3,v_4)$.}\label{fig:distance2:vertexCoverByDisj4CyclesNPC:constrForHyperedge}
  \end{figure}
  It is easy to see that $G$ is bipartite: let $S$ consist of the nodes in $X_2 \cup X_4$ and also the nodes $v^e$ for $v \in X_1 \cup X_3$, $e\in\mc{E}$, and let $T$ consist of the rest of the nodes.
  By the definition of $E$, neither of the two parts induces any edges.
  We need to show that there is a perfect $4$-dimensional matching in $H$ if and only if $G$ has a vertex cover with node-disjoint $4$-cycles.

  Firstly, given a perfect $4$-dimensional matching $F$ in $H$, we construct the set of $4$-cycles as follows.
  For any hyperedge $e=(v_1,v_2,v_3,v_4) \in F$, select the cycle formed by the nodes $v_1,v_1^e,v_2^e,v_2$ and the cycle on nodes $v_3,v_3^e,v_4^e,v_4$.
  For each hyperedge $(u_1,u_2,u_3,u_4) \in \mc{E}\setminus F$, also add the cycle formed by the nodes $u_1^e,u_2^e,u_3^e,u_4^e$.
  The $4$-cycles obtained this way are pairwise node-disjoint, because the hyperedges in $F$ form a $4$-dimensional matching.
  Clearly, they cover all the nodes of $G$, because $F$ is a perfect $4$-dimensional matching.

  Secondly, assume that there exists a cover with node-disjoint $4$-cycles in $G$.
  We modify the set of the cycles as follows.
  If for a hyperedge $e=(v_1,v_2,v_3,v_4)$ both of the cycles formed by nodes $v_1,v_2,v_3,v_4$ and $v_1^e,v_2^e,v_3^e,v_4^e$ are in the subgraph, then delete these two cycles and add the cycles formed by the nodes $v_1,v_1^e,v_2^e,v_2$ and $v_3,v_3^e,v_4^e,v_4$ instead.
  Clearly, the new cycles are node-disjoint and cover every node of $G$.
  Now, construct a $4$-dimensional matching $F$ in $H$ by adding the hyperedge $e=(v_1,v_2,v_3,v_4)$ to $F$ if and only if the cycle formed by the nodes $v_1^e,v_2^e,v_3^e,v_4^e$ is not in the set of the selected $4$-cycles.
  This is a perfect $4$-dimensional matching, because if the cycle on nodes $v_1^e,v_2^e,v_3^e,v_4^e$ is not selected, then the two cycles covering these four nodes cover the nodes $v_1,v_2,v_3,v_4$, and because all nodes of $G$ are in exactly one of the selected $4$-cycles.
  \qed
\end{proof}

Now we are ready to prove the analogous theorem for the cyclic case.

\begin{theorem}\label{thm:distance2:perm:cyclicDbMNPC}
  It is NP-complete to decide whether there exists a permutation of the nodes in $S$ such that there is a perfect cyclic $d$-distance $b'$-matching under this permutation, where $d=|S|/2$ and
  \begin{equation}\label{eq:distance2:perm:cyclicNPCDefOfB}
    b'(v)
    = \begin{cases}
      2&\text{ if } v\in S,\\
      \infty& \text{ if } v\in T
    \end{cases}
  \end{equation}
  for $v\in S\cup T$.
\end{theorem}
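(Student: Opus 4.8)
The plan is to reduce from the node-disjoint $4$-cycle cover problem, shown NP-complete in Lemma~\ref{lem:distance2:vertexCoverByDisj4CyclesNPC}; membership in NP is immediate, since a permutation together with a perfect cyclic $d$-distance $b'$-matching is a polynomially verifiable certificate. Given a bipartite instance $G_0=(S_0,T_0;E_0)$ of the $4$-cycle cover problem, I would first dispose of the degenerate cases: if $|S_0|\neq|T_0|$ or $|S_0|$ is odd, then no cover by node-disjoint $4$-cycles can exist (each such $4$-cycle meets $S_0$ and $T_0$ in exactly two nodes), so I map such inputs to a fixed no-instance. Otherwise write $N=|S_0|=|T_0|$ with $N$ even, and simply take $G=G_0$ with $d=|S|/2=N/2$; thus $n=|S|=N$ and $m:=n/2=d$.

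The heart of the argument is a structural description of any perfect cyclic $d$-distance $b'$-matching $M$ for this $d$. Because the cyclic constraint requires $d\leq|i-j|\leq n-d$ for the positions of any two $M$-neighbors of a common $t\in T$, and $d=n-d=m$, those positions must satisfy $|i-j|=m$ \emph{exactly}; three pairwise such positions are impossible, so $\deg_M(t)\leq 2$, and when $\deg_M(t)=2$ its neighbors occupy aligned positions $i$ and $i+m$. Since $b'(s)=2$ forces $|M|=\sum_{s\in S}\deg_M(s)=2N$, while $\sum_{t\in T}\deg_M(t)\leq 2|T|=2N$, every $t$ has degree exactly $2$; hence $M$ is $2$-regular, i.e.\ a disjoint union of even cycles covering $S\cup T$, with no path components. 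Finally, along any such cycle $s_{a_1}t_1 s_{a_2}t_2\cdots$ the alignment forces $a_{j+1}\equiv a_j+m \pmod{n}$, so $a_3\equiv a_1+2m\equiv a_1$, and every cycle has length exactly $4$. As $M\subseteq E_0$, these are genuine $4$-cycles of $G_0$, so $M$ is precisely a node-disjoint $4$-cycle cover of $G_0$.

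With this characterization both directions are short. From a cover $C_1,\dots,C_{N/2}$, whose $j$-th cycle has $S_0$-nodes $a_j,b_j$, I build a permutation placing $a_j,b_j$ at positions $j$ and $j+m$; each $t$ in $C_j$ then has its two neighbors at distance exactly $m$ (so $d\leq m\leq n-d$ holds with equality), and $M=\bigcup_j C_j$ is a perfect cyclic $d$-distance $b'$-matching. Conversely, any witnessing permutation and matching yield, by the structural claim, a node-disjoint $4$-cycle cover. I expect the main obstacle to be exactly this structural claim: one must argue carefully that the two-sided cyclic bound pins every pairwise distance to $m$ (not merely bounds it, as in the one-sided non-cyclic setting of Theorem~\ref{thm:distance2:perm:nonCyclicDbMNPC}), that the degree count rules out path components, and that iterating the shift by $m$ collapses every cycle to length $4$. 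This is precisely where the cyclic constraint buys length exactly $4$ rather than merely length divisible by $4$, which explains why reducing from the $4$-cycle cover problem (rather than from $C_{4k+2}$Free2Factor) is the right choice.
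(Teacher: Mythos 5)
Your proposal is correct and follows essentially the same route as the paper: a reduction from the node-disjoint $4$-cycle cover problem of Lemma~\ref{lem:distance2:vertexCoverByDisj4CyclesNPC}, with the same structural core (the two-sided cyclic constraint with $d=n-d$ pins the positions of co-neighbors to distance exactly $n/2$, degree counting via $|S|=|T|$ forces $2$-regularity, and the shift-by-$n/2$ argument collapses every cycle to length $4$) and the same constructions in both directions. Your explicit handling of the degenerate case $|S_0|\neq|T_0|$ is slightly more careful than the paper's ``assume $|S|=|T|$,'' but the argument is the same.
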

\begin{proof}
  Clearly, the problem is in NP.
  We reduce the problem defined in Lemma~\ref{lem:distance2:vertexCoverByDisj4CyclesNPC} to the problem in the theorem.
  Let $G=(S,T;E)$ be a bipartite graph, and let $d=|S|/2$.
  Assume that $|S|=|T|$.
  We show that the nodes of $G$ can be covered by node-disjoint cycles of length $4$ if and only if there exists a permutation of the nodes in $S$ under which a perfect cyclic $d$-distance $b'$-matching exists, where $b'$ is as defined in~(\ref{eq:distance2:perm:cyclicNPCDefOfB}).
  This implies the statement of the theorem, since the former problem is NP-complete by Lemma~\ref{lem:distance2:vertexCoverByDisj4CyclesNPC}.

  Firstly, assume that we have a permutation $s_1,\dots,s_n$ of $S$ under which a perfect $d$-distance $b'$-matching $M$ exists.
  Since the $M$-degree of every node in $S$ is exactly $2$, the size of $M$ is $2|S|$.
  Furthermore, $d=|S|/2$, therefore the degree of $t$ in $M$ is at most $2$.
  From this, the degree of $t$ in $M$ is exactly $2$ by $|S|=|T|$, therefore $M$ is the node-disjoint union of cycles.
  If one of the neighbors of a node $s\in S$ is $t \in T$, then $t$ has a uniquely defined other neighbor, since there exists exactly one node in $S$ whose cyclic distance is at least $|S|/2$ from $s$.
  The same holds for the other neighbor of $s$.
  So, for every $s \in S$, the two neighbors of it have the same two neighbors, which gives a cycle of length $4$.
  Hence the edges of the cyclic $d$-distance $b'$-matching cover all nodes and is the union of node-disjoint $4$-cycles.

  Secondly, if there is a set of node-disjoint cycles of length $4$ covering the nodes of $G$, then we can construct a proper permutation of $S$ as follows.
  For each cycle, take one of its nodes in $S$, and put them in the first $|S|/2$ positions of the permutation in arbitrary order.
  This clearly determines the order of the rest of the nodes in $S$, since for any node $s$ from the first $|S|/2$ nodes, there is exactly one other node in $S$ having a common neighbor with $s$, so this must be placed exactly $|S|/2$ positions after $s$.
  \qed
\end{proof}

Note that the proofs of Theorems~\ref{thm:distance2:perm:nonCyclicDbMNPC}~and~\ref{thm:distance2:perm:cyclicDbMNPC} also show that the problem is NP-complete when $b(s)=b(t)=2$ for all $s\in S$ and $t\in T$ both in the cyclic and the non-cyclic case.
Next, Theorems~\ref{thm:distance2:perm:nonCyclicDbMNPC}~and~\ref{thm:distance2:perm:cyclicDbMNPC} are extended to the case when $b(s)\geq 2$ for all $s\in S$ --- instead of $b(s)=2$.

\begin{theorem}
  Both in the cyclic and the non-cyclic case, it is NP-complete to decide whether there exists a permutation of the nodes in $S$ such that there is a perfect (cyclic) $d$-distance $b''$-matching under this permutation, where $b''(s)\geq 2$ for all $s\in S$ and $b''(t)=\infty$ for all $t\in T$.
\end{theorem}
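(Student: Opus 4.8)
The plan is to reduce from the $b'\equiv 2$ versions established in Theorems~\ref{thm:distance2:perm:nonCyclicDbMNPC}~and~\ref{thm:distance2:perm:cyclicDbMNPC}. Membership in NP is immediate, since a permutation of $S$ together with an edge set can be checked for being a perfect (cyclic) $d$-distance $b''$-matching in polynomial time. For the hardness I would start from an instance $G=(S,T;E)$ with $d=|S|/2$ of the problem in Theorem~\ref{thm:distance2:perm:nonCyclicDbMNPC} (respectively Theorem~\ref{thm:distance2:perm:cyclicDbMNPC}), where the degree bound is $2$ on $S$ and $\infty$ on $T$, together with the target bound function $b''$ satisfying $b''(s)\ge 2$ on $S$ and $b''(t)=\infty$ on $T$. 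The construction keeps $S$ unchanged --- hence the permutation space and the value $d=|S|/2$ are unchanged --- and, for every $s\in S$, adds $b''(s)-2$ new \emph{private pendant} nodes to $T$, each joined only to $s$. Denote the resulting graph by $G''$. All added nodes lie in $T$, so the requirements $b''(t)=\infty$ and $b''(s)\ge 2$ are respected, and (with the degree bounds polynomially bounded) $G''$ is built in polynomial time.

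For the forward direction, suppose some permutation $\pi$ of $S$ admits a perfect $d$-distance $b'$-matching $M$ in $G$, so that every $s$ has degree exactly $2$. Under the same $\pi$, I would take $M$ together with \emph{all} pendant edges. Each $s$ then has degree $2+(b''(s)-2)=b''(s)$, so the $S$-degrees are exactly as required; the distance constraints are untouched on the original edges, and every pendant node has a single neighbour and hence cannot violate any distance constraint. Thus the combined edge set is a perfect $d$-distance $b''$-matching under $\pi$.

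The crux is the backward direction, and the key observation is that the pendant edges need \emph{not} be forced into the matching. Suppose $\pi$ admits a perfect $d$-distance $b''$-matching $M''$ in $G''$, and restrict it to the original edges, $M=M''\cap E$. Since $s$ has only $b''(s)-2$ pendant neighbours, at most $b''(s)-2$ of its $M''$-edges are pendant edges, so $\deg_M(s)\ge b''(s)-(b''(s)-2)=2$. Now at each $s$ with $\deg_M(s)>2$ I would delete arbitrary original edges down to degree exactly $2$. Because deleting edges only shrinks the neighbourhood of every $t\in T$, the distance constraints that held for $M''$ still hold for the resulting set $\tilde M$; hence $\tilde M$ is a perfect $d$-distance $b'$-matching under the same $\pi$. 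Together with the forward direction this yields a correct reduction and proves NP-hardness. The subtle point to get right is exactly this: rather than trying to force the pendant edges (which one cannot, as $T$-nodes carry no lower bound), one exploits that restricting to the original edges already leaves every $S$-degree at least $2$, after which superfluous edges are simply removed.

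The cyclic case is handled verbatim, starting from Theorem~\ref{thm:distance2:perm:cyclicDbMNPC} and using the cyclic distance constraint throughout; the monotonicity of feasibility under edge deletion and the single-neighbour property of the pendants are insensitive to whether distances are measured linearly or cyclically. Since any profile of values $b''(s)\ge 2$ can be targeted by choosing the number of pendants accordingly, this also shows that the problem remains hard even when the degree bounds are required to be strictly larger than $2$, matching the claimed extension of Theorems~\ref{thm:distance2:perm:nonCyclicDbMNPC}~and~\ref{thm:distance2:perm:cyclicDbMNPC}.
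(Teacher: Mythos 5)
Your proposal is correct and follows essentially the same route as the paper's proof: the identical pendant construction (attach $b''(s)-2$ private nodes in $T$ to each $s\in S$), the same forward direction (add all pendant edges to the $b'$-matching), and the same backward direction (restrict to the original edges, note each $s$ retains degree at least $2$, then delete surplus edges, using that feasibility is preserved under edge deletion). No gaps; the handling of the cyclic case and the NP membership also match the paper.
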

\begin{proof}
  In both cases, the problem is in NP.
  We prove the theorem for the cyclic and the non-cyclic versions simultaneously by showing that finding a permutation under which a perfect (cyclic) $d$-distance $b''$-matching exists is more general than finding a permutation under which a perfect (cyclic) $d$-distance $b'$-matching exists, where $b'$ is as defined in~(\ref{eq:distance2:perm:cyclicNPCDefOfB}).

  We modify the input graph of the perfect (cyclic) $d$-distance $b'$-matching problem as follows.
  For each $s \in S$, add $(b''(s)-2)$ new nodes to $T$, and connect them to $s$.
  Let $F$ denote the set of the newly added edges.
  We prove that there is a permutation of $S$ under which a perfect (cyclic) $d$-distance $b'$-matching exists in the original graph if and only if there is a perfect (cyclic) $d$-distance $b''$-matching in the new graph under the same permutation of $S$.

  Firstly, if there is a perfect (cyclic) $d$-distance $b'$-matching $M'$ in the original graph under some permutation of $S$, then $M'\cup F$ is a perfect (cyclic) $d$-distance $b''$-matching in the new graph under the same permutation.

  Secondly, for the reverse direction, let $M''$ denote a perfect (cyclic) $d$-distance $b''$-matching in the new graph under some permutation of $S$. Take the edge set $M''\setminus F$, and remove all except two of the edges incident to each node $s\in S$.
  This way one gets a feasible perfect (cyclic) $d$-distance $b'$-matching in the original graph under the same permutation.

  This completes the proof both in the cyclic and the non-cyclic version by Theorems~\ref{thm:distance2:perm:nonCyclicDbMNPC}~and~\ref{thm:distance2:perm:cyclicDbMNPC}, respectively.
  \qed
\end{proof}

\subsection{(Cyclic) $d$-distance $b$-matchings for small $d$}

By Theorems~\ref{thm:distance2:perm:nonCyclicDbMNPC}~and~\ref{thm:distance2:perm:cyclicDbMNPC}, finding an optimal permutation is hard when $d=|S|/2$.
We show that the problem is also hard for any $d\geq 2$ that is polynomially smaller than $|S|$, which, as a special case, implies that even the case $d=2$ is NP-complete.

\begin{theorem}\label{thm:distance2:perm:nonCyclicDbMNPC_d2}
  It is NP-complete to decide whether there exists a permutation of the nodes in $S$ such that there is a perfect (cyclic) $d$-distance $b'$-matching under this permutation, where $|S|=d(\ell+1)-2$, $d\geq 2$, $\ell=\Omega(|S|^c)$ for some constant $c>0$, and
  \begin{equation}\label{eq:distance2:perm:cyclicNPCDefOfB_d2}
    b'(v)
    = \begin{cases}
      \deg(v)&\text{ if } v\in S,\\
      \infty& \text{ if } v\in T
    \end{cases}
  \end{equation}
  for $v\in S\cup T$.
\end{theorem}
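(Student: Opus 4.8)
The plan is to prove NP-completeness by reduction from the Hamiltonian path problem in the non-cyclic case and from the Hamiltonian cycle problem in the cyclic case, both NP-complete on general graphs. Membership in NP is clear, since a permutation together with a candidate edge set can be verified in polynomial time. The starting observation is that $b'(s)=\deg(s)$ for all $s\in S$ makes the problem rigid: a \emph{perfect} $d$-distance $b'$-matching must saturate every $s\in S$ to its full degree, so the only candidate for $M$ is the whole edge set $E$. Hence the question is purely about the permutation: is there an ordering $s_1,\dots,s_n$ of $S$ under which $E$ itself satisfies the (cyclic) distance constraints, i.e. under which every pair $s_i,s_j$ sharing a common neighbour $t\in T$ is placed at (cyclic) distance at least $d$?

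This reformulates cleanly via the \emph{common-neighbour graph} $\Gamma$ on the vertex set $S$, where $s$ and $s'$ are joined precisely when they have a common neighbour in $T$. A permutation is feasible if and only if every window of $d$ consecutive positions (cyclically, in the cyclic case) is an independent set of $\Gamma$, because two vertices lie within distance less than $d$ exactly when some window contains both, and such a pair must then be non-adjacent in $\Gamma$. Moreover, an arbitrary graph $\Gamma$ on $S$ can be realised as a common-neighbour graph: for each edge of $\Gamma$ introduce a private degree-two vertex of $T$ joined to its two endpoints, and no other $T$-vertices. With this realisation $\deg_G(s)$ equals the $\Gamma$-degree of $s$, so $b'(s)=\deg(s)$ is respected and $M=E$ is forced as above. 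Thus it suffices to prove that deciding whether a graph on $d(\ell+1)-2$ vertices admits a linear (respectively cyclic) ordering with all $d$-windows independent is NP-complete in the stated parameter range.

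To encode Hamiltonicity I would first force a rigid \emph{table structure}. Partition $S$ into $d$ cliques of $\Gamma$ (the \emph{rows}), where $d-2$ of them have size $\ell+1$ and two have size $\ell$; each row is made a clique by attaching one common neighbour in $T$ joined to all of its elements. In any feasible ordering each $d$-window contains at most one vertex of each row, hence --- every window being full --- exactly one of each row; a short sliding argument then shows that row-membership is periodic with period $d$, so the ordering is exactly the column-by-column reading of a $d\times(\ell+1)$ table whose last column is truncated to its first $d-2$ cells. This pins down the geometry: two cells lie at distance less than $d$ precisely when they are in the same column, or in consecutive columns with the later cell in a strictly higher row. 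On top of this skeleton I would label the cells of a distinguished row by the $\ell+1$ vertices of the Hamiltonicity instance $H$, so that the column a vertex occupies records its position in the sought ordering of $V(H)$, and add \emph{connector} gadgets (again degree-two $T$-vertices, i.e. extra $\Gamma$-edges) across the boundary cells of consecutive columns that are infeasible exactly when the two incident columns carry vertices that are non-adjacent in $H$. A feasible ordering then corresponds to a Hamiltonian path (a Hamiltonian cycle in the cyclic case) of $H$, the two short rows and the truncated last column accounting for the two path endpoints and the additive constant $-2$; the bound $\ell=\Omega(|S|^{c})$ merely reflects that $d$ is chosen polynomially smaller than $|S|=d(\ell+1)-2$, keeping the construction of polynomial size.

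The main obstacle is the gadget design in the last step. Because the table's distance pattern makes a single common-neighbour gadget forbid a configuration both within a column and across consecutive columns (a pair placed in the same column at distance $1$ is in conflict, but so is the shifted pair one column earlier and one row lower, at distance $d-1$), naive connector and label-agreement gadgets introduce spurious conflicts that would over-constrain the instance and collapse all columns to a single label. Making each gadget fire on exactly the intended pairs --- so that the induced constraints are precisely ``equal label within a column'' and ``$H$-adjacent labels between consecutive columns,'' with the harmful shifted pairs rendered always satisfiable --- is the delicate part, together with verifying both directions of the equivalence and checking that the two boundary rows and the truncated column correctly realise the path versus cycle distinction. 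Once the gadgets are shown to enforce exactly the Hamiltonicity constraint, NP-hardness follows for every admissible pair $(d,\ell)$, and in particular for $d=2$, completing the proof.
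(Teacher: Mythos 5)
Your setup is sound and matches the paper's starting point: NP membership, the observation that $b'(s)=\deg(s)$ forces $M=E$, the reformulation via the common-neighbour graph $\Gamma$ (every window of $d$ consecutive positions must be independent), and the forced table structure coming from $d$ row-cliques. But the step you yourself flag as ``the delicate part'' is a genuine gap, not a deferrable detail: the connector gadgets enforcing ``$H$-adjacent labels between consecutive columns'' are never constructed, and they are the entire content of the hardness proof. Worse, your placement makes them impossible to build directly: with all $\ell+1$ vertices of $H$ in a single distinguished row, two of them sitting in consecutive columns are at distance exactly $d$, a configuration the distance constraint never forbids. So no $\Gamma$-edge between two $H$-vertices can penalize the configuration you need to exclude; any gadget must be routed through cells of other rows, and an edge of $\Gamma$ forbids its endpoint pair in \emph{every} window, not just the intended one --- exactly the spurious-conflict problem you describe without resolving. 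As written, the reduction does not establish hardness.

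The paper's proof avoids gadget design entirely by a two-stage reduction. First it settles $d=2$: take $S=V'$ and one degree-two node of $T$ for every \emph{non-edge} of the Hamiltonian-path instance $G'$; then $E$ is feasible under an ordering iff consecutive nodes are adjacent in $G'$, i.e.\ iff the ordering is a Hamiltonian path. Then it lifts $d=2$ (on $|S'|=2\ell$ nodes) to general $d$: keep that instance intact and pad $S$ with $(d-2)(\ell+1)$ new nodes grouped into $d-2$ stars, each with a private center $t_r\in T$ of degree $\ell+1$. A counting argument shows that a neighbour of $t_r$ cannot occupy position $id-1$ or $id$ (otherwise the window structure caps $\deg(t_r)$ at $\ell$), so the $2\ell$ original nodes are forced into exactly the position pairs $\{id-1,id\}$ --- in your language, they fill the \emph{two} short rows of the table rather than one row. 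Consecutive pairs of the induced order on $S'$ then sit at distances $1$ and $d-1$, both less than $d$, so the inherited complement edges fire precisely on consecutive pairs and on nothing else; the ``gadget'' is just the $d=2$ instance spread over two adjacent rows. Relocating your $H$-encoding from one row to two adjacent rows in this way is the missing idea that would repair your construction.
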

\begin{proof}
  The problem is clearly in NP.
  We give a reduction from the \emph{Hamiltonian path problem} in an undirected simple graph $G'=(V',E')$ to the non-cyclic case when $d=2$, then we show that the problem stated in the theorem includes this as a special case.
  Construct a bipartite graph $G=(S,T;E)$ such that $S=V'$ and $T$ is the edge set of the complement of $G'$.
  For each $t=uv\in T$, add edges $ut$ and $vt$ to $G$.
  We prove that there exists a Hamiltonian path in $G'$ if and only if there is a permutation of the nodes in $S$ under which a perfect $d$-distance $b'$-matching exists, where $d=2$ and $b'$ is as defined in~(\ref{eq:distance2:perm:cyclicNPCDefOfB_d2}).

  Firstly, assume that $M$ is a perfect $d$-distance $b'$-matching in $G$ under the permutation $s_1,\dots,s_n$ of $S$.
  By the definition of $b'$, this means that $M=E$, therefore $\{s_it,s_{i+1}t\}\not\subseteq E$ for all $i\in\{1,\dots,n-1\}$ and for all $t\in T$.
  But then $s_is_{i+1}\in E'$ for all $i\in\{1,\dots,n-1\}$, which means that $s_1,\dots,s_n$ defines a Hamiltonian path in $G'$.

  Secondly, assume that a Hamiltonian path in $G'$ traverses the nodes of $G'$ in the order $s_1,\dots,s_n$.
  This means that there is an edge $e_i$ in $E'$ between $s_i$ and $s_{i+1}$ for all $i\in\{1,\dots,n-1\}$, that is, $\{s_it,s_{i+1}t\}\not\subseteq E$ for all $i\in\{1,\dots,n-1\}$ and for all $t\in T$.
  This means that $M=E$ is a perfect $d$-distance $b'$-matching in $G$ under the permutation $s_1,\dots,s_n$ of $S$, where $d=2$ and $b'$ is as defined above.
  This completes the proof for $d=2$ in the non-cyclic case.\\

  Now we prove that if $n=|S|=d(\ell+1)-2$ for $\ell=\Omega(n^c)$ and $d\geq 2$, then the non-cyclic problem includes the case $d=2$ for $G'=(S',T';E')$ with $|S'|=2\ell$.
  Let $G=(S,T;E)$ be a copy of $G'$, and add a new node $\tilde s_{q,r}$ to $S$ for all $q\in\{1,\dots,\ell+1\}$ and $r\in\{1,\dots,d-2\}$.
  For $r\in\{1,\dots,d-2\}$, also add a new node $t_{r}$ to $T$ and all edges between $t_r$ and $\{\tilde s_{r,1},\dots,\tilde s_{r,\ell+1}\}$.

  Firstly, we show that if there is a permutation $s_1,\dots,s_n$ of $S$ under which $E$ is feasible in $G$, then there is a permutation of $S'$ under which $E'$ is feasible in $G'$.
  We claim that $s_{id-1},s_{id}\in S'$ for all $i\in\{1,\dots,\ell\}$.
  By contradiction, suppose that $s_{id-1}=\tilde s_{q,r}$ or $s_{id}=\tilde s_{q,r}$ for some $q$ and $r$.
  Let $t_r$ denote the only neighbor of $\tilde s_{q,r}$.
  As $E$ is feasible, all neighbors of $t_r$ on the left of $\tilde s_{q,r}$ are among the nodes $s_1,\dots,s_{id-d}$.
  Similarly, all neighbors of $t_r$ on the right of $\tilde s_{q,r}$ are among the nodes $s_{id-1+d},\dots,s_{n}$.
  Therefore, the degree of $t$ must be at most
  \[\left\lceil \frac{id-d}{d} \right\rceil+1+\left\lceil \frac{n-(id-1+d)+1}{d} \right\rceil = \ell,\]
  which contradicts the fact that the degree of $t_r$ is $(\ell+1)$ by the construction of $G$.
  This means that $s_{id-1},s_{id}\in S'$ for all $i\in\{1,\dots,\ell\}$, as we claimed.
  Since the size of $S'$ is $2\ell$, the nodes $s_{id-1},s_{id}$ for $i\in\{1,\dots,\ell\}$ are exactly the nodes in $S'$.
  Therefore, every interval of length $d$ contains two nodes in $S'$, which means that $E'$ is a feasible solution for $G'$ and $d=2$ under the permutation of $S'$ obtained by restricting $s_1,\dots,s_n$ to $S'$.

  Secondly, we prove that if there is a permutation $s'_1,\dots,s'_{|S'|}$ of $S'$ under which $E'$ is feasible, then there is a permutation of $S$ under which $E$ is feasible.
  Concatenate the subsequences $\tilde s_{q,1},\dots,\tilde s_{q,d-2}$ for $q=1,\dots,\ell+1$.
  Then, insert $s'_{2q-1}$ and $s'_{2q}$ in this order right after $\tilde s_{q,d-2}$ for $q\in\{1,\dots,\ell\}$.
  We claim that $E$ is feasible under the permutation obtained this way.
  As an interval of length $d$ includes exactly two nodes in $S'$ and these nodes appear in the order given by $s'_1,\dots,s'_{|S'|}$, the edge set $E'$ is feasible.
  For each $r\in\{1,\dots,d-2\}$, the neighbors of $t_r$ are the nodes $\tilde s_{1,r},\dots,\tilde s_{\ell+1,r}$, which do not appear in an interval of length $d$ in the permutation defined above.
  This means that the edge set $E\setminus E'$ is feasible as well.
  Since $E'$ and $E\setminus E'$ are node-disjoint, this implies that $E$ is also feasible, which was to be shown.
  This completes the proof in the non-cyclic case.
  \\

  To show the hardness of the cyclic case, one can give a reduction from the \emph{Hamiltonian cycle problem}.
  The proof is a straightforward modification of the reduction for the non-cyclic case, therefore it is left to the reader.
  \qed
\end{proof}

In the previous theorem, we did not assume that the coordinates of $b$ are small, unlike in Theorems~\ref{thm:distance2:perm:nonCyclicDbMNPC}~and~\ref{thm:distance2:perm:cyclicDbMNPC}.
It remains open whether the problem becomes tractable when $d=2$ and the coordinates of $b$ are small, for example $b\equiv 2$.

\subsection{Approximation algorithms for finding a permutation}
In this section, we give $e$-approximation algorithms for finding the best (cyclic) permutation under which the weight of the optimal (cyclic) $d$-distance $b$-matching is as large as possible.
The approximation algorithms also give an $e$-approximate (cyclic) $d$-distance $b$-matching under the permutation.
Both algorithms are randomized, but they are easy to de-randomize, which we briefly discuss at the end of the section.
\begin{algorithm}[t]
  \caption{\hspace{0.5cm}{Randomized approximation algorithm for cyclic $S$-permutations}}\label{alg:distance2:sillyCyclicRandPerm}
  \begin{algorithmic}
    \State For $v\in S\cup T$, let $b'(v)=\begin{cases}
      b(v) & \text{if } v\in S,\\
      \min\{\floor{\frac{n}{d}},b(t)\} & \text{if } v\in T.
    \end{cases}$
    \State Find a maximum-weight $b'$-matching $\widehat M$ in $G$.
    \State Generate a cyclic permutation $s_1,\dots,s_n$ of $S$ uniformly at random.
    \State $M:=\emptyset$
    \For{$t\in T$}
    \For{$i=1,\dots,n$}
    \If{$s_it\in\widehat M$\text{ and }$s_{i-d+1}t,\dots,s_{i-1}t\notin\widehat M$}
    \State $M:=M\cup\{s_it\}$
    \EndIf
    \EndFor
    \EndFor
    \State \textbf{output} $s_1,\dots,s_n$ and $M$
  \end{algorithmic}
\end{algorithm}

First, consider the cyclic version of the problem.
Algorithm~\ref{alg:distance2:sillyCyclicRandPerm} finds a maximum-weight $b'$-matching $\widehat{M}$ in $G$ for the $b'$ defined in the algorithm
and takes a random cyclic permutation of $S$.
Then, for each $t\in T$, it adds an edge $s_it\in\widehat M$ to the solution if and only if $t$ and the $(d-1)$ nodes cyclically before $s_i$ induce no edges in $\widehat M$.
The algorithm returns the chosen permutation and the union of the selected edges.
Clearly, this edge set is a feasible cyclic $d$-distance $b$-matching under the chosen permutation, because 1) the degree of any node $s\in S$ is at most $b(s)$, since the found edge set is a subset of $\widehat M$, and 2) the distance constraints are met at each node $t\in T$, since no edge $st$ is added for which an edge $s't\in\widehat M$ exists such that $s'$ is one of the $(d-1)$ nodes before $s$ cyclically.

The following theorem gives a lower bound on the expected weight of the solution found by Algorithm~\ref{alg:distance2:sillyCyclicRandPerm}.
\begin{theorem}\label{thm:distance2:randCyclicPerm:apxMain}
  Algorithm~\ref{alg:distance2:sillyCyclicRandPerm} outputs a cyclic permutation $s_1,\dots,s_n$ of $S$ and a feasible cyclic $d$-distance $b$-matching $M$ whose expected weight is at least
  \begin{equation}\label{eq:distance2:randCyclicPerm:apxMain}
    \max\left\{\left(1-\frac{1}{d}\right)^{d-1}, \left(1-\frac{1}{k}\right)^{k-1}\right\}
  \end{equation}
  times the weight of the heaviest cyclic $d$-distance $b$-matching under all permutations, where $k=\max_{t\in T}{b'(t)}$ for the $b'$ defined in Algorithm~\ref{alg:distance2:sillyCyclicRandPerm}.
  This lower bound is tight.
\end{theorem}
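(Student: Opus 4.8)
The plan is to bound the algorithm's expected output against $w(\widehat{M})$ and to show that $w(\widehat{M})$ already dominates the optimum over all permutations. For the latter, I would first observe that every cyclic $d$-distance $b$-matching $M^*_\pi$, under an arbitrary permutation $\pi$, is automatically a $b'$-matching for the $b'$ defined in the algorithm: the bound $b$ is respected on $S$, while at each $t\in T$ the matched neighbours are pairwise at cyclic distance at least $d$, so their number is at most $\lfloor n/d\rfloor$ and also at most $b(t)$, hence at most $b'(t)$. Since $\widehat{M}$ is a maximum-weight $b'$-matching, this gives $w(M^*_\pi)\le w(\widehat{M})$ for every $\pi$, and therefore the optimum $\mathrm{OPT}$ over all permutations satisfies $\mathrm{OPT}\le w(\widehat{M})$.

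The second step reduces the theorem to a per-edge probability bound. Writing $p_e$ for the probability that a fixed edge $e=st\in\widehat{M}$ is retained, linearity of expectation together with $w\ge 0$ gives $\mathbb{E}[w(M)]=\sum_{e\in\widehat{M}}w_e\,p_e$, so it suffices to prove $p_e\ge\max\{(1-\tfrac1d)^{d-1},(1-\tfrac1k)^{k-1}\}$ for every $e$; combined with the first step this yields the stated guarantee. To evaluate $p_e$, set $m=\deg_{\widehat{M}}(t)$; the edge $e$ survives exactly when none of the other $m-1$ $\widehat{M}$-neighbours of $t$ lands in the $d-1$ positions preceding $s$ in the random cyclic permutation, which by a direct count equals the hypergeometric ratio $p_e=\binom{n-d}{m-1}/\binom{n-1}{m-1}$.

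The heart of the argument, and the step I expect to be the main obstacle, is extracting \emph{both} factors of the maximum from this single expression. The key is the symmetry $\binom{n-d}{m-1}/\binom{n-1}{m-1}=\binom{n-m}{d-1}/\binom{n-1}{d-1}$, which rewrites $p_e$ as a product of $d-1$ factors rather than $m-1$. Both forms are then bounded factor-by-factor using the single inequality $n\ge md$, which holds since $m\le b'(t)\le\lfloor n/d\rfloor$. In the first form each factor equals $1-\tfrac{d-1}{n-1-i}$ and, because $n-1-i>m(d-1)$, exceeds $1-\tfrac1m$; the $m-1$ factors give $p_e>(1-\tfrac1m)^{m-1}\ge(1-\tfrac1k)^{k-1}$, the last inequality because $m\le k$ and $x\mapsto(1-\tfrac1x)^{x-1}$ is decreasing. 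The second form is treated symmetrically: each of its $d-1$ factors exceeds $1-\tfrac1d$, so $p_e>(1-\tfrac1d)^{d-1}$. Together these give the required lower bound on $p_e$.

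Finally, for tightness I would use a single-node family: $T=\{t\}$, with $t$ joined to all of $S$, $b(s)=1$, $b(t)=k$, and $n=kd$. Here $\mathrm{OPT}=w(\widehat{M})=k$ and, by symmetry, every edge has the same survival probability, so $\mathbb{E}[w(M)]/\mathrm{OPT}$ equals the common value $p_e$. Letting $d\to\infty$ with $k$ fixed drives $p_e\to(1-\tfrac1k)^{k-1}$ (the binding term since then $k\le d$), while letting the number of neighbours grow with $d$ fixed makes the $(1-\tfrac1d)^{d-1}$ term binding and tight. This shows the constant cannot be improved, completing the proof.
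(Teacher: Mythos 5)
Your proof is correct, and its skeleton coincides with the paper's: both arguments upper-bound the optimum over all permutations by $w(\widehat M)$ (every cyclic $d$-distance $b$-matching is a $b'$-matching), both reduce the guarantee via linearity of expectation to a lower bound on the survival probability of a single edge of $\widehat M$, and both prove tightness asymptotically on single-node instances with $n=kd$. Where you genuinely diverge is the proof of the central inequality $p_e\ge\max\{(1-\tfrac1d)^{d-1},(1-\tfrac1k)^{k-1}\}$. The paper works with the product $P(n,d,m)=\prod_{i=1}^{d-1}\frac{n-m-(i-1)}{n-i}$, first reduces to the extremal case $n=dm$ by monotonicity in $n$, then telescopes $P(dm,d,m)$ into a product of $\min\{m-1,d-1\}$ factors, and extracts the two bounds from monotone decrease in $d$ (resp.\ $m$) together with the limits as $d\to\infty$ (resp.\ $m\to\infty$) --- steps it passes over with ``we immediately get''. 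You instead recognize $p_e$ as the hypergeometric ratio $\binom{n-d}{m-1}/\binom{n-1}{m-1}$, exploit the symmetry $\binom{n-d}{m-1}/\binom{n-1}{m-1}=\binom{n-m}{d-1}/\binom{n-1}{d-1}$, and bound every factor of the two resulting products (each of the $m-1$ factors exceeds $1-\tfrac1m$, each of the $d-1$ factors exceeds $1-\tfrac1d$) from the single inequality $n\ge md$ guaranteed by $b'(t)\le\lfloor n/d\rfloor$, finishing with the monotonicity of $x\mapsto(1-\tfrac1x)^{x-1}$; I checked the index ranges and the inequality $n-1-i\ge m(d-1)+1$ (resp.\ $\ge d(m-1)+1$), and they hold. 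Your route is more elementary and self-contained: no limits, no reduction to $n=dm$, a bound that is uniform in $n$, and it isolates exactly where the cap $\lfloor n/d\rfloor$ is used. What the paper's heavier computation buys is, first, the explicit telescoped form of $P(dk,d,k)$ on which the tightness claim rests (yours also needs these limits, but on the example family they are immediate from the product form), and second, machinery that extends to the non-cyclic analogue, Theorem~\ref{thm:distance2:randPerm:apxMain}, where the survival probability depends on the edge's position and the stated constant arises from averaging over positions, so a uniform per-edge bound like yours cannot reach it. Two cosmetic remarks: like the paper, you implicitly use $w\ge 0$ (the standing assumption of this section), and you do not re-verify feasibility of the output $M$; the paper's proof also only points back to the discussion of the algorithm for that, so neither is a gap.
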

\begin{proof}
  As we have already seen, Algorithm~\ref{alg:distance2:sillyCyclicRandPerm} returns a feasible solution under the randomly chosen permutation.
  Let $\E(n,d,k)$ denote the expected weight of the solution found by the algorithm, and let $\alpha(d,k)$ denote the lower bound given by~(\ref{eq:distance2:randCyclicPerm:apxMain}).
  First, we consider the unweighted case when $T=\{t\}$.
  Without loss of generality, one can assume that $k\geq 2$ and $d\leq n$.
  Note that $n\geq dk$ holds by the definition of $b'$.
  Let $P(n,d,k)$ be the probability that a given edge is added to the solution.
  By definition,
  \begin{equation}\label{eq:distance2:randCyclicPerm:defP}
    P(n,d,k)
    =\prod_{i=1}^{d-1}\frac{n-k-(i-1)}{n-i}.
  \end{equation}
  Clearly, $\E(n,d,k)=kP(n,d,k)$.
  Observe that
  \begin{equation}\label{eq:distance2:randCyclicPerm:PLargestWhenNIsSpec}
    P(n,d,k)\geq P(dk,d,k)
  \end{equation}
  holds for all $n,d,k\in\N$ provided that $n\geq dk$, because $dk$ is the smallest possible size of $S$ when the degree of $t$ can be $k$ in $\widehat M$, and if the number of nodes in $S$ is larger than $dk$, then all additional nodes must be isolated, hence the probability of an edge being added can be only larger.
  The value of $P(dk,d,k)$ can be expressed as follows.
  \begin{multline*}
    P(dk,d,k)
    =\prod_{i=1}^{d-1}\frac{kd-k-i+1}{kd-i}
    =\frac{\prod\limits_{j=\max\{1,d-k+1\}}^{d-1}kd-k-j+1}{\prod\limits_{i=1}^{\min\{k-1,d-1\}}kd-i}\\
    =\prod\limits_{i=1}^{\min\{k-1,d-1\}}\frac{kd-k-(d-i)+1}{kd-i}
    =\prod\limits_{i=1}^{\min\{k-1,d-1\}}\frac{(k-1)d-k+i+1}{kd-i},
  \end{multline*}
  where the first equation holds by~(\ref{eq:distance2:randCyclicPerm:defP}), and the third one because the product is telescopic.
  From this, we immediately get that $P(dk,d,k)$ is monotone decreasing in $d$ for all $k\in\N$, and it tends to $\left(1-\frac{1}{k}\right)^{k-1}$ as $d$ goes to infinity for all $k\in\N$.
  This implies that
  \begin{equation}\label{eq:distance2:randCyclicPerm:LimitInD}
    P(dk,d,k)\geq \left(1-\frac{1}{k}\right)^{k-1}
  \end{equation}
  holds for all $d,k\in\N$.
  Similarly, $P(dk,d,k)$ is monotone decreasing in $k$ for all $d\in\N$, and it tends to $\left(1-\frac{1}{d}\right)^{d-1}$ as $k$ goes to infinity for all $k\in\N$.
  This implies that
  \begin{equation}\label{eq:distance2:randCyclicPerm:LimitInK}
    P(dk,d,k)\geq \left(1-\frac{1}{d}\right)^{d-1}
  \end{equation}
  holds for all $d,k\in\N$.

  By~(\ref{eq:distance2:randCyclicPerm:PLargestWhenNIsSpec}),~(\ref{eq:distance2:randCyclicPerm:LimitInD})~and~(\ref{eq:distance2:randCyclicPerm:LimitInK}), $P(n,d,k)\geq\alpha(d,k)$
  holds for all $n,d,k\in\N$ provided that $n\geq dk$, which completes the proof of the unweighted case when $|T|=1$.
  The bounds given by~(\ref{eq:distance2:randCyclicPerm:LimitInD})~and~(\ref{eq:distance2:randCyclicPerm:LimitInK}) are (asymptotically) tight, therefore~(\ref{eq:distance2:randCyclicPerm:apxMain}) cannot be improved, as we stated in the theorem.

  Since all edges incident to $t$ appear in the output of Algorithm~\ref{alg:distance2:sillyCyclicRandPerm} with equal probability, the expected weight of the returned edges is at least $\alpha(d,k)w(\widehat M\cap\Delta(t))$, which was to be shown in the weighted case when $|T|=1$.\\

  We continue with the general weighted case, that is, when the size of $T$ is arbitrary.
  Let $\cyclicSymmgroup$ denote the set of all cyclic permutations of $S$, and let $M_{\pi}\subseteq\widehat M$ denote the feasible cyclic $d$-distance $b$-matching returned by the algorithm when it selects the cyclic permutation $\pi\in\cyclicSymmgroup$.
  The following computation leads to the bound stated in the theorem.
  \begin{multline*}
    \E(n,d,k)
    =\frac{\sum_{\pi\in\cyclicSymmgroup}w(M_{\pi})}{|\cyclicSymmgroup|}
    = \frac{\sum_{\pi\in\cyclicSymmgroup}\sum_{t\in T}w(M_{\pi}\cap \Delta(t))}{|\cyclicSymmgroup|}\\
    = \sum_{t\in T}\frac{\sum_{\pi\in\cyclicSymmgroup}w(M_{\pi}\cap \Delta(t))}{|\cyclicSymmgroup|}
    = \sum_{t\in T}P(n,d,\deg_{\widehat M}(t))w(\widehat M\cap\Delta(t))\\
    \geq \sum_{t\in T}\alpha(d,k)w(\widehat M\cap\Delta(t))
    =\alpha(d,k)w(\widehat M)
    \geq\alpha(d,k)w(M^*),
  \end{multline*}
  where $M^*$ is an optimal cyclic $d$-distance $b$-matching under all permutations.
  The first inequality holds by the case $|T|=1$, and the second one because any cyclic $d$-distance $b$-matching must respect the degree bounds posed by $b'$ and $\widehat M$ is a heaviest $b'$-matching for the $b'$ defined in Algorithm~\ref{alg:distance2:sillyCyclicRandPerm}.
  This completes the proof of the theorem.
  \qed
\end{proof}

It is well known that~(\ref{eq:distance2:randCyclicPerm:apxMain}) is monotone decreasing and tends to $\frac{1}{e}$ as $k$ and $d$ go to infinity.
This immediately implies the following.

\begin{theorem}
  Algorithm~\ref{alg:distance2:sillyCyclicRandPerm} outputs a cyclic permutation $s_1,\dots,s_n$ of $S$ and a feasible cyclic $d$-distance $b$-matching whose expected weight is at least $\frac{1}{e}$ times the weight of the heaviest cyclic $d$-distance matching under all permutations.
  This lower bound is tight.
\end{theorem}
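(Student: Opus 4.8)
The plan is to obtain this statement as an immediate corollary of Theorem~\ref{thm:distance2:randCyclicPerm:apxMain}. That theorem already guarantees that the expected weight of the output of Algorithm~\ref{alg:distance2:sillyCyclicRandPerm} is at least $\max\{(1-\frac{1}{d})^{d-1},(1-\frac{1}{k})^{k-1}\}$ times the optimum over all permutations, so the only thing left to do is to bound each factor of the form $(1-\frac{1}{x})^{x-1}$ from below by $\frac{1}{e}$, uniformly in $x$.

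First I would record the elementary analytic fact that $f(x)=(1-\frac{1}{x})^{x-1}$ is monotone decreasing for integer $x\ge 1$ and satisfies $\lim_{x\to\infty}f(x)=\frac{1}{e}$. The cleanest way to verify the monotonicity is to pass to logarithms, writing $\log f(x)=(x-1)\log(1-\frac{1}{x})$, and to check that this is decreasing in $x$; the limit is the familiar $\lim_{x\to\infty}(1-\frac{1}{x})^{x}=\frac{1}{e}$ together with $(1-\frac{1}{x})^{-1}\to 1$. Because $f$ decreases toward its infimum $\frac{1}{e}$, we get $f(x)\ge\frac{1}{e}$ for every $x\in\N$. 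Applying this to both $x=d$ and $x=k$ shows that the maximum in~(\ref{eq:distance2:randCyclicPerm:apxMain}) is at least $\frac{1}{e}$, and substituting into Theorem~\ref{thm:distance2:randCyclicPerm:apxMain} gives the desired $\frac{1}{e}$ guarantee in expectation for arbitrary $d$ and $k$.

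For the tightness, I would let $d$ and $k$ tend to infinity simultaneously: each of the two factors, and therefore their maximum, converges to $\frac{1}{e}$, so no constant exceeding $\frac{1}{e}$ can be a universal lower bound. The asymptotically tight single-vertex instances already produced in the proof of Theorem~\ref{thm:distance2:randCyclicPerm:apxMain} realise an expected-to-optimum ratio arbitrarily close to $\frac{1}{e}$, which certifies that $\frac{1}{e}$ cannot be improved.

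There is no real obstacle here; the entire content beyond citing Theorem~\ref{thm:distance2:randCyclicPerm:apxMain} is the textbook monotonicity and limit of $(1-\frac{1}{x})^{x-1}$. The only point I would state with a little care is that the benchmark in Theorem~\ref{thm:distance2:randCyclicPerm:apxMain} is the heaviest cyclic $d$-distance $b$-matching, which dominates the heaviest cyclic $d$-distance matching used here, so the $\frac{1}{e}$ factor transfers to the present statement without loss.
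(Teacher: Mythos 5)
Your proposal is correct and matches the paper's own treatment: the paper likewise derives this result as an immediate corollary of Theorem~\ref{thm:distance2:randCyclicPerm:apxMain}, invoking the well-known fact that the bound~(\ref{eq:distance2:randCyclicPerm:apxMain}) is monotone decreasing and tends to $\frac{1}{e}$ as $d$ and $k$ go to infinity, with tightness inherited from the asymptotic tightness established there. Your extra remark about the benchmark (heaviest $b$-matching dominating the heaviest matching) is a careful touch the paper leaves implicit, but it does not change the argument.
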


Now, we turn to the non-cyclic case.
\begin{algorithm}
  \caption{\hspace{0.5cm}{Randomized approximation algorithm for $S$-permutations}}\label{alg:distance2:sillyRandPerm}
  \begin{algorithmic}
    \State For $v\in S\cup T$, let $b'(v)=\begin{cases}
      b(v) & \text{if } v\in S,\\
      \min\{\ceil{\frac{n}{d}},b(t)\} & \text{if } v\in T.
    \end{cases}$
    \State Find a maximum-weight $b'$-matching $\widehat M$ in $G$.
    \State Generate a permutation $s_1,\dots,s_n$ of $S$ uniformly at random.
    \State $M:=\emptyset$
    \For{$t\in T$}
    \For{$i=1,\dots,n$}
    \If{$s_it\in\widehat M$\text{ and }$s_{\max\{1,{i-d+1}\}}t,\dots,s_{i-1}t\notin\widehat M$}
    \State $M:=M\cup\{s_it\}$
    \EndIf
    \EndFor
    \EndFor
    \State \textbf{output} $s_1,\dots,s_n$ and $M$
  \end{algorithmic}
\end{algorithm}

Algorithm~\ref{alg:distance2:sillyRandPerm} is the analog of Algorithm~\ref{alg:distance2:sillyCyclicRandPerm} for the non-cyclic $d$-distance $b$-matching problem.
First, the algorithm finds a maximum-weight $b'$-matching $\widehat{M}$ in $G$ for the $b'$ defined in Algorithm~\ref{alg:distance2:sillyRandPerm},
and takes a random permutation of $S$.
Then, for each $t\in T$, it selects an edge $s_it\in\widehat M$ if $t$ and the (at most) $(d-1)$ nodes before $s_i$ induce no edges in $\widehat M$.
It returns the chosen permutation and the union of the selected edges.
Similarly to Algorithm~\ref{alg:distance2:sillyCyclicRandPerm}, the edge set returned by the algorithm is a feasible $d$-distance $b$-matching under the chosen permutation.\\

The following theorem for the non-cyclic version is analogous to Theorem~\ref{thm:distance2:randCyclicPerm:apxMain}.

\begin{theorem}\label{thm:distance2:randPerm:apxMain}
  Algorithm~\ref{alg:distance2:sillyRandPerm} outputs a permutation $s_1,\dots,s_n$ of $S$ and a feasible $d$-distance $b$-matching whose expected weight is at least
  \begin{equation}\label{eq:distance2:randPerm:apxMain}
    \max\left\{\left(1-\frac{1}{d}\right)^{d-1}, \frac{1+(k-1)\left(1-\frac{1}{k-1}\right)^k}{k}\right\}
  \end{equation}
  times the weight of the heaviest $d$-distance $b$-matching under all permutations, where $k=\max_{t\in T}{b'(t)}$ for the $b'$ defined in Algorithm~\ref{alg:distance2:sillyRandPerm}.
  This lower bound is tight.
\end{theorem}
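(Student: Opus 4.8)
The plan is to mirror the proof of Theorem~\ref{thm:distance2:randCyclicPerm:apxMain}, modifying it to account for the boundary of the non-cyclic order. Write $\widehat M$ for the maximum-weight $b'$-matching produced by Algorithm~\ref{alg:distance2:sillyRandPerm}. As in the cyclic case, I would first analyse a single node $t\in T$ in the unweighted setting, with $k=\deg_{\widehat M}(t)$. Under a uniformly random permutation the $k$ neighbours of $t$ occupy a uniformly random $k$-subset of $\{1,\dots,n\}$, and because these neighbours are exchangeable, every edge incident to $t$ is kept with one common probability $Q(n,d,k)$ equal to $\frac1k$ times the expected number of kept edges. This symmetry is exactly what will later let me pass to the weighted case, even though — unlike in the cyclic setting — the keep-events for individual positions are not equiprobable.

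The first step is to evaluate this expectation. An edge $s_it$ is kept precisely when $s_i$ is the leftmost neighbour of $t$ in its window of length $d$, i.e.\ when $s_i$ is the first neighbour in the permutation (which always occurs) or the preceding neighbour lies at distance at least $d$. Passing from the order statistics to the $k+1$ exchangeable gaps summing to $n-k$, the event that a fixed internal gap has length at least $d-1$ has probability $\binom{n-d+1}{k}/\binom{n}{k}$, so that
\[
Q(n,d,k)=\frac1k\Bigl(1+(k-1)\,\tfrac{\binom{n-d+1}{k}}{\binom{n}{k}}\Bigr),\qquad \frac{\binom{n-d+1}{k}}{\binom{n}{k}}=\prod_{i=0}^{d-2}\frac{n-k-i}{n-i}.
\]
The isolated ``$+1$'', recording that the first neighbour is always kept, is the sole source of the discrepancy between this bound and the cyclic one.

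Next I would locate the worst case in $n$. Each factor $\frac{n-k-i}{n-i}=1-\frac{k}{n-i}$ increases with $n$, so $Q(n,d,k)$ is increasing in $n$; since $k\le b'(t)\le\lceil n/d\rceil$ forces $n\ge d(k-1)+1$, the minimum occurs at $n_{\min}=d(k-1)+1$. Substituting this value turns the factors into $\frac{(d-1)(k-1)-i}{d(k-1)+1-i}$, and I would then establish the two monotonicities that yield the two terms of~(\ref{eq:distance2:randPerm:apxMain}): for fixed $d$, the quantity $Q(n_{\min},d,k)$ decreases in $k$ to $\left(1-\frac1d\right)^{d-1}$, and for fixed $k$ it decreases in $d$ to $\frac{1+(k-1)\left(1-\frac1{k-1}\right)^k}{k}$. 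I expect these monotonicity statements to be the main technical obstacle, just as the study of $P(dk,d,k)$ was in Theorem~\ref{thm:distance2:randCyclicPerm:apxMain}; they should again follow from a telescoping rewriting of the product, now complicated by the dependence of $n_{\min}$ on both $d$ and $k$.

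Finally I would assemble the global bound. Writing $\alpha(d,k)$ for~(\ref{eq:distance2:randPerm:apxMain}), a node of degree $j\le k$ satisfies $Q(n,d,j)\ge\alpha(d,j)\ge\alpha(d,k)$, where the last inequality uses that the second term of $\alpha$ is decreasing in the degree while the first is degree-independent. The weighted single-node case is immediate from the common keep-probability, and the general case follows by linearity of expectation over $t\in T$ together with $w(\widehat M)\ge w(M^*)$, which holds because every $d$-distance $b$-matching is a $b'$-matching (the degree of each $t$ is at most $\lceil n/d\rceil$ and at most $b(t)$). Tightness is witnessed, as in the cyclic case, by single-node instances at $n=n_{\min}$: letting $k\to\infty$ realises the first term and letting $d\to\infty$ realises the second, so~(\ref{eq:distance2:randPerm:apxMain}) cannot be improved.
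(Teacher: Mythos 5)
Your plan follows the paper's proof in every structural respect: reduce to a single node $t$ in the unweighted case, use exchangeability to get one common keep-probability, minimise over $n$ at $n_{\min}=(k-1)d+1$, obtain the two terms of~(\ref{eq:distance2:randPerm:apxMain}) as monotone limits in $d\to\infty$ and $k\to\infty$, and then assemble the weighted multi-node case by linearity, monotonicity of the bound in the degree, and $w(\widehat M)\ge w(M^*)$, with only asymptotic tightness. Your closed form
\[
Q(n,d,k)=\frac1k\Bigl(1+(k-1)\prod_{i=0}^{d-2}\frac{n-k-i}{n-i}\Bigr)
\]
is correct and equals the paper's $P(n,d,k)$, and the gap-statistics derivation is genuinely neater than the paper's route, which averages position-conditional probabilities and invokes the identity $\sum_{q=0}^{N}\binom{q}{K}=\binom{N+1}{K+1}$ twice (see~(\ref{eq:distance2:randPerm:PExpr}) and~(\ref{eq:distance2:randPerm:computeF})) to reach an equivalent expression; it also makes the reduction to $n_{\min}$ transparent, where the paper argues informally.

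The genuine gap is the step you explicitly defer, and it is not symmetric between the two directions in the way your ``telescoping'' remark suggests. For fixed $k$ your plan does go through: rewriting
\[
\prod_{i=0}^{d-2}\frac{n_{\min}-k-i}{n_{\min}-i}
=\binom{(k-2)d+2}{k}\Big/\binom{(k-1)d+1}{k}
=\prod_{j=0}^{k-1}\frac{(k-2)d+2-j}{(k-1)d+1-j},
\]
each of these $k$ factors is decreasing in $d$ with limit $\frac{k-2}{k-1}$, which yields the monotone decrease to the second term of~(\ref{eq:distance2:randPerm:apxMain}). For fixed $d$, however, no factor-wise or telescoping argument can establish $Q(n_{\min},d,k)\ge\left(1-\frac1d\right)^{d-1}$, because
\[
\frac{(k-1)(d-1)-i}{(k-1)d+1-i}-\frac{d-1}{d}
=-\frac{i+d-1}{d\left((k-1)d+1-i\right)}<0 ,
\]
so every factor of your product at $n_{\min}$ lies strictly \emph{below} $\frac{d-1}{d}$ (and increases with $k$); hence the product alone is smaller than $\left(1-\frac1d\right)^{d-1}$ and increasing in $k$, and the claimed bound and the monotone decrease of $Q(n_{\min},d,k)$ in $k$ hold only through the trade-off with the additive term $\frac1k\bigl(1-\text{product}\bigr)$. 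Quantifying that trade-off is exactly where the paper's proof does its real work --- the chain from~(\ref{eq:distance2:randPerm:PExpr}) to~(\ref{eq:distance2:randPerm:PBound2}), combining the fixed-$n$ monotonicity of $P(n,d,k)$ in $k$ with the explicit evaluation of $P(\bar n,d,k)$ --- so as written your proposal stalls precisely at the first term of the bound; this step must be carried out, not just announced.
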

\begin{proof}
  The outline of the proof is similar to that of Theorem~\ref{thm:distance2:randCyclicPerm:apxMain}, but the technical details are slightly more complicated.
  Let $\E(n,d,k)$ denote the expected weight of the solution returned by the algorithm, and let $\beta(d,k)$ denote the lower bound given by~(\ref{eq:distance2:randPerm:apxMain}).
  Similarly to the proof of Theorem~\ref{thm:distance2:randCyclicPerm:apxMain}, consider the case when $T=\{t\}$ and the problem is unweighted.
  Without loss of generality, one can assume that $k\geq 2$ and $d\leq n$.
  Let $P(n,d,k)$ be the probability that a given edge is added to the solution, and let $P(n,d,k,i)$ denote the probability that the edge incident to $s_i$ is added to the solution.
  By definition,
  \begin{equation*}
    P(n,d,k)
    = \frac{1}{n}\sum_{i=1}^{n}P(n,d,k,i),
  \end{equation*}
  and
  \begin{equation*}
    P(n,d,k,i)
    =\prod_{j=1}^{\min\{d-1,i-1\}}\frac{n-k-(j-1)}{n-j}.
  \end{equation*}
  Clearly, $\E(n,d,k)=kP(n,d,k)$.
  Observe that
  \begin{equation*}
    P(n,d,k)\geq P((d-1)k+1,d,k)
  \end{equation*}
  holds for all $n,d,k\in\N$ provided that  $n\geq (d-1)k+1$, because $((d-1)k+1)$ is the smallest possible size of $S$ when the degree of $t$ can be $k$ in $\widehat M$, and if the number of nodes in $S$ is larger, then all the extra nodes are isolated, hence the probability that an edge is added can be only larger.
  Let $\bar{n}=(d-1)k+1$.
  The value of $P(\bar{n},d,k)$ can be expressed as follows.
  \begin{multline}\label{eq:distance2:randPerm:PExpr}
    P(\bar{n},d,k)
    = \frac{1}{\bar{n}}\sum_{i=1}^{\bar{n}}P(\bar{n},d,k,i)
    = \frac{1}{\bar{n}}\sum_{i=1}^{\bar{n}}\prod_{j=1}^{\min\{d-1,i-1\}}\frac{\bar{n}-k-(j-1)}{\bar{n}-j}\\
    = \frac{1}{\bar{n}}\sum_{i=1}^{d-1}\prod_{j=1}^{i-1}\frac{\bar{n}-k-(j-1)}{\bar{n}-j} + \frac{\bar{n}-d+1}{\bar{n}}\prod_{j=1}^{d-1}\frac{\bar{n}-k-(j-1)}{\bar{n}-j}\\
    = \frac{1}{(k-1)d+1}\sum_{i=1}^{d-1}\prod_{j=1}^{i-1}\frac{(k-1)d-k-j+2}{(k-1)d+1-j}\\
    + \frac{(k-1)d-d+2}{(k-1)d+1}\prod_{j=1}^{d-1}\frac{(k-1)d-k-j+2}{(k-1)d+1-j}\\
    = \frac{1}{(k-1)d+1}\sum_{i=1}^{d-1}\prod_{j=1}^{\min\{i-1,k-1\}}\frac{(k-1)d+j-i-k+2}{(k-1)d-j+1}\\
    + \frac{(k-2)d+2}{(k-1)d+1}\prod_{j=1}^{\min\{k-1,d-1\}}\frac{(k-2)d-k+j+2}{(k-1)d-j+1},
  \end{multline}
  where the last equation holds by rearranging the products.
  Let $f(d,k)$ and $g(d,k)$ denote the first and the second summand in the right hand-side of (\ref{eq:distance2:randPerm:PExpr}), respectively.
  Clearly,
  \begin{equation}\label{eq:distance2:randPerm:limitOfG}
    \lim_{d\to\infty}g(d,k)
    =\lim_{d\to\infty}\frac{(k-2)d+2}{(k-1)d+1}\prod_{j=1}^{\min\{k-1,d-1\}}\frac{(k-2)d-k+j+2}{(k-1)d-j+1}\\
    =\left(\frac{k-2}{k-1}\right)^k.
  \end{equation}
  To derive the limit of $f(d,k)$, we need the following computation.
  \begin{multline}\label{eq:distance2:randPerm:computeF}
    ((k-1)d+1)f(d,k)
    =\sum_{i=1}^{d-1}\prod_{j=1}^{\min\{i-1,k-1\}}\frac{(k-1)d+j-i-k+2}{(k-1)d-j+1}\\
    =\sum_{i=1}^{d-1}\prod_{j=1}^{\min\{i-1,k-1\}}\frac{(k-1)d-k-j+2}{(k-1)d-j+1}
    =\frac{\sum_{i=1}^{d-1}{(k-1)d-i+1 \choose k-1}}{{(k-1)d \choose k-1}}\\
    =\frac{{(k-1)d+1 \choose k} - {(k-2)d+2 \choose k}}{{(k-1)d \choose k-1}}
    =\frac{((k-1)d+1){(k-1)d \choose k-1} - ((k-2)d+2){(k-2)d+1 \choose k-1}}{k{(k-1)d \choose k-1}}\\
    =\frac{(k-1)d+1}{k}-\frac{((k-2)d+2){(k-2)d+1 \choose k-1}}{k{(k-1)d \choose k-1}}\\
    =\frac{(k-1)d+1}{k}-\frac{(k-2)d+2}{k}\prod_{j=1}^{k-1}\frac{(k-2)d-j+2}{(k-1)d-j+1},
  \end{multline}
  where the first equation holds by the definition of $f$, the second one by rearranging the product, and the fourth one by applying the binomial identity ${\sum_{q=0}^{N}{q \choose K}={N+1 \choose K+1}}$ twice.
  Using~(\ref{eq:distance2:randPerm:computeF}), we get that
  \begin{multline}\label{eq:distance2:randPerm:limitOfF}
    \lim_{d\to\infty}f(d,k)
    =\lim_{d\to\infty}\frac{1}{k}-\frac{(k-2)d+2}{k((k-1)d+1)}\prod_{j=1}^{k-1}\frac{(k-2)d-j+2}{(k-1)d-j+1}\\
    =\frac{1-\left(\frac{k-2}{k-1}\right)^k}{k}
  \end{multline}
  for all $k\in\N$.
  By~(\ref{eq:distance2:randPerm:PExpr}),~(\ref{eq:distance2:randPerm:limitOfG})~and~(\ref{eq:distance2:randPerm:limitOfF}), $P(\bar{n},d,k)$ tends to
  \begin{equation*}
    \left(\frac{k-2}{k-1}\right)^k+\frac{1-\left(\frac{k-2}{k-1}\right)^k}{k}=\frac{1+(k-1)\left(\frac{k-2}{k-1}\right)^k}{k}
  \end{equation*}
  as $d$ goes to infinity for all $k\geq 2$.
  Observe that $P(n,d,k,i)$ is non-increasing in $d$, and therefore so is $P(n,d,k)$ for all $n,k\in\N$.
  This implies that
  \begin{equation}\label{eq:distance2:randPerm:PBound1}
    P(n,d,k)\geq\frac{1+(k-1)\left(\frac{k-2}{k-1}\right)^k}{k}
  \end{equation}
  holds for all $n,d,k\in\N$ provided that $n\geq (k-1)d+1$.
  Similarly, $P(n,d,k)$ is non-increasing in $k$ for all $n,d\in\N$.
  From~(\ref{eq:distance2:randPerm:PExpr}), it is easy to see that $P(\bar{n},d,k)$ tends to $\left(1-\frac{1}{d}\right)^{d-1}$ as $k$ goes to infinity, therefore,
  \begin{equation}\label{eq:distance2:randPerm:PBound2}
    P(n,d,k)\geq\left(1-\frac{1}{d}\right)^{d-1}
  \end{equation}
  holds.
  By~(\ref{eq:distance2:randPerm:PBound1})~and~(\ref{eq:distance2:randPerm:PBound2}), $P(n,d,k)\geq\alpha(d,k)$
  follows for all $n,d,k\in\N$ provided that $n\geq (k-1)d+1$, which completes the proof of the unweighted case when $|T|=1$.
  The bounds given in~(\ref{eq:distance2:randPerm:PBound1})~and~(\ref{eq:distance2:randPerm:PBound2}) are (asymptotically) tight, therefore~(\ref{eq:distance2:randPerm:apxMain}) cannot be improved, as we stated in the theorem.

  Since all edges incident to $t$ appear in the output of Algorithm~\ref{alg:distance2:sillyRandPerm} with equal probability, the expected weight of the returned edges is at least $\beta(d,k)w(\widehat M\cap\Delta(t))$, which was to be shown in the weighted case when $|T|=1$.\\

  The general weighted case, when the size of $T$ is arbitrary, can be handled by a computation similar to the end of the proof of~Theorem~\ref{thm:distance2:randCyclicPerm:apxMain}.
  From this, we get that $\E(n,d,k)\geq\beta(d,k)w(M^*)$, which completes the proof.
  \qed
\end{proof}

It is easy to see that~(\ref{eq:distance2:randPerm:apxMain}) is monotone decreasing and tends to $\frac{1}{e}$ as $k$ and $d$ go to infinity.
This immediately implies the following.

\begin{theorem}
  Algorithm~\ref{alg:distance2:sillyRandPerm} outputs a permutation $s_1,\dots,s_n$ of $S$ and a feasible $d$-distance $b$-matching whose expected weight is at least $\frac{1}{e}$ times the weight of the heaviest $d$-distance matching under all permutations.
  This lower bound is tight.
\end{theorem}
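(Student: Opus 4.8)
The plan is to derive this statement as a direct corollary of Theorem~\ref{thm:distance2:randPerm:apxMain}, so that the only remaining work is an elementary analysis of the bound $\beta(d,k)=\max\left\{\left(1-\frac{1}{d}\right)^{d-1},\ \frac{1+(k-1)\left(1-\frac{1}{k-1}\right)^k}{k}\right\}$. First I would invoke Theorem~\ref{thm:distance2:randPerm:apxMain} to write $\E(n,d,k)\geq\beta(d,k)\,w(M^*)$, where $M^*$ is a heaviest $d$-distance $b$-matching over all permutations; it then suffices to show that $\beta(d,k)\geq\frac1e$ for all $d,k$ and that $\frac1e$ cannot be replaced by any larger constant.

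For the lower bound I would focus on the first term $A(d)=\left(1-\frac1d\right)^{d-1}$ alone, since $\beta(d,k)\geq A(d)$. The key analytic fact is that $A(d)$ is monotone decreasing in $d$ with $\lim_{d\to\infty}A(d)=\frac1e$, so that $A(d)>\frac1e$ for every finite $d$. This follows by examining $\ln A(d)=(d-1)\ln\frac{d-1}{d}$: a short calculus argument shows the corresponding real function is decreasing and tends to $-1$. Consequently $\beta(d,k)\geq A(d)\geq\frac1e$, which yields $\E(n,d,k)\geq\frac1e\,w(M^*)$ and proves the bound.

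For tightness I would reuse the asymptotic tightness of the two bounds already established in the proof of Theorem~\ref{thm:distance2:randPerm:apxMain}. Along the family of single-vertex, unweighted instances with $T=\{t\}$ and $n=(d-1)k+1$, the approximation ratio equals exactly $P(n,d,k)$, whose exact value is given by~(\ref{eq:distance2:randPerm:PExpr}). The remaining ingredient is that the second term $B(k)=\frac{1+(k-1)\left(1-\frac1{k-1}\right)^k}{k}$ is likewise monotone decreasing in $k$ with $\lim_{k\to\infty}B(k)=\frac1e$; since both coordinates of $\beta(d,k)$ then converge to $\frac1e$ as $d,k\to\infty$, the ratio $P(n,d,k)$ can be driven arbitrarily close to $\frac1e$, so no larger constant is achievable.

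The main obstacle is modest, since the heavy lifting was done in Theorem~\ref{thm:distance2:randPerm:apxMain}: one must be careful about the order of limits in the tightness argument. Sending only $d\to\infty$ with $k$ fixed leaves $B(k)$ — and hence $\beta(d,k)$ — bounded away from $\frac1e$, so both $d$ and $k$ must tend to infinity simultaneously (for instance $d=k\to\infty$) to force $\beta(d,k)\to\frac1e$. Establishing the monotonicity of $B(k)$ together with its limit is the one genuinely new, though routine, computation needed beyond the previous theorem.
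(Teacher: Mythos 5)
Your proposal is correct and takes essentially the same route as the paper: the paper also derives this theorem as an immediate corollary of Theorem~\ref{thm:distance2:randPerm:apxMain}, simply observing that the bound~(\ref{eq:distance2:randPerm:apxMain}) is monotone decreasing and tends to $\frac{1}{e}$ as $d$ and $k$ go to infinity. Your write-up just makes explicit the routine details the paper leaves implicit --- the monotone convergence of $\left(1-\frac{1}{d}\right)^{d-1}$ to $\frac{1}{e}$ from above for the lower bound, and the need to send $d$ and $k$ to infinity together (not $d$ alone) when reusing the tight instances for the tightness claim.
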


By Theorems~\ref{thm:distance2:randCyclicPerm:apxMain}~and~\ref{thm:distance2:randPerm:apxMain}, the expected approximation guarantees achieved by Algorithms~\ref{alg:distance2:sillyCyclicRandPerm}~and~\ref{alg:distance2:sillyRandPerm} are better than $e$ when any of $d$, $\frac{n}{d}$ or the largest degree in $T$ is small.
For example, if $d=2$, then both algorithms return a $2$-approximate solution in expectation.

Both algorithms are easy to de-randomize using conditional probabilities as follows.
Observe that the conditional probability of an edge being added to the solution can be easily computed under the condition that the positions of some of the nodes are already fixed.
Therefore, one can try to put each node to the first place, and choose the one that gives the highest (conditional) expectation.
Then try each of the remaining nodes at the second position and put the best one there, and so on.
The weights of the outputs of the de-randomized algorithms are clearly at least as large as the expected weight of the solutions found by the randomized algorithms.\\

In the rest of this section, an improved approximation algorithm for the non-cyclic case is presented.
Algorithm~\ref{alg:distance2:sillyRandPerm} includes an edge $st$ in the solution set if and only if $t$ has no neighbors among the $(d-1)$ nodes in front of $s$.
A more efficient approach is to run a greedy algorithm enumerating the edges incident to $t$ from left to right for each $t\in T$, in other words, the modified algorithm generates a random permutation of $S$ and executes algorithm \textsc{$T$-Greedy} as described in~\cite{madarasi2021matchings}.
Clearly, the solution returned by the modified algorithm is at least as good as the one found by Algorithm~\ref{alg:distance2:sillyRandPerm}, provided that they choose the same random permutation.
We propose the following conjecture:
\begin{conjecture}\label{conj:distance2:perm:approxConj}
  Generating a random permutation of $S$, algorithm \textsc{$T$-Greedy} finds a feasible $d$-distance $b$-matching whose expected weight is at least $\frac{1}{2}\frac{d^2+d+2}{d^2+d}>\frac{1}{2}$ times the weight of the heaviest $d$-distance $b$-matching under all permutations.
\end{conjecture}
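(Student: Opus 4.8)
The plan is to mirror the architecture of the proof of Theorem~\ref{thm:distance2:randPerm:apxMain}. First I would reduce to the case $T=\{t\}$: by linearity of expectation over $t\in T$ and the symmetry of the uniform random permutation, every edge of $\widehat M$ incident to a fixed $t$ is retained by \textsc{$T$-Greedy} with one common probability $Q(n,d,\deg_{\widehat M}(t))$, so $\E[w(M)]=\sum_{t\in T}Q(n,d,\deg_{\widehat M}(t))\,w(\widehat M\cap\Delta(t))$. Since any $d$-distance $b$-matching respects the bounds imposed by $b'$, we have $w(M^*)\le w(\widehat M)$, and because the conjectured factor depends only on $d$ it suffices to prove the \emph{uniform} bound $Q(n,d,k)\ge\frac12+\frac1{d(d+1)}$ for every $k$. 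Here $Q(n,d,k)=\E[\gamma]/k$, where $\gamma$ is the size of the selection produced by running the left-to-right greedy on $k$ points dropped uniformly at random into $\{1,\dots,n\}$. The key simplification is that this greedy returns a \emph{maximum}-cardinality subset of the $k$ points whose consecutive gaps are all $\ge d$, so $\gamma$ is a purely combinatorial functional of the random gap sequence.

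Next I would reduce $n$ to its minimal value. I expect $Q(n,d,k)$ to be non-decreasing in $n$, provable by a coupling that inserts one extra unmarked position: this preserves the order of the $k$ marked points while only enlarging gaps, so $\gamma$ cannot decrease. Hence the worst case is the minimal span $n=(k-1)d+1$, and it remains to bound $Q((k-1)d+1,d,k)$. The extremal instance is $k=2$: with $n=d+1$ the only placement giving $\gamma=2$ is $\{1,d+1\}$, so $\E[\gamma]=1+\frac{1}{\binom{d+1}{2}}$ and $Q(d+1,d,2)=\frac12+\frac1{d(d+1)}$, which is exactly the conjectured factor. The whole conjecture is therefore equivalent to the inequality $Q((k-1)d+1,d,k)\ge Q(d+1,d,2)$ for all $k\ge2$.

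To attack this I would compute $\E[\gamma]$ through the renewal structure of the greedy: each selected point opens a window of $d$ consecutive positions, and the points consumed in that round are precisely the marked ones inside the window. Conditioning on the first jump of length $\ge d$ yields a recursion for $\E[\gamma]$ in $k$, which I expect to resolve into a closed form built from binomial/hypergeometric sums, exactly in the spirit of the manipulations in~(\ref{eq:distance2:randPerm:PExpr})–(\ref{eq:distance2:randPerm:limitOfF}) and the binomial identity used there. The target inequality would then become a polynomial inequality in $k$ for each fixed $d$, whose non-negativity one would verify.

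The main obstacle is precisely this last step, and it is genuinely delicate because, unlike in Theorem~\ref{thm:distance2:randCyclicPerm:apxMain}, $Q$ is \emph{not} monotone in $k$. For $d=2$ one checks $Q=\frac23$ at $k=2$, $Q=0.7$ at $k=3$, and the Poisson/i.i.d.\ approximation gives the limit $\frac{d}{2d-1}$ as $k\to\infty$, which equals $\frac23$ for $d=2$ but is \emph{strictly above} the target $\frac12+\frac1{d(d+1)}$ for $d\ge3$. Thus $k=2$ is the true worst case, but no soft monotonicity argument can establish this; one must control the finite-$k$ expression uniformly, and for $d=2$ show that the sequence approaches its limit $\frac23$ strictly from above and never dips below it. Proving that single polynomial inequality for all $k$ simultaneously — with the tight $d=2$ boundary handled separately — is the crux, and it is the reason the statement is offered only as a conjecture.
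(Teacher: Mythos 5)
Your proposal cannot be compared against a proof in the paper, because the paper does not contain one: the statement you were given is Conjecture~\ref{conj:distance2:perm:approxConj}, which the paper explicitly leaves open, supporting it only by computational evidence --- a dynamic programming routine that computes the exact expectation in the case $|S|=(k-1)d+1$, $|T|=1$, checked for all $|S|\leq 1000$ and $d\leq 100$. That said, your setup is sound and retraces what the paper implicitly relies on to make that verification meaningful: linearity over $t\in T$ plus symmetry of the random permutation reduces to $|T|=1$; a coupling reduces to the minimal span $n=(k-1)d+1$; and your computation of the $k=2$ case, $Q(d+1,d,2)=\frac{1}{2}+\frac{1}{d(d+1)}=\frac{1}{2}\frac{d^2+d+2}{d^2+d}$, correctly identifies the extremal instance, matching the paper's closing remark that for every $d$ there is an example achieving exactly this ratio. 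Your numerical checks ($Q=\frac{2}{3}$ at $k=2$, $Q=0.7$ at $k=3$ for $d=2$, and the renewal limit $\frac{d}{2d-1}$ as $k\to\infty$) are also correct.

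The genuine gap is the one you name yourself: the uniform bound $Q((k-1)d+1,d,k)\geq Q(d+1,d,2)$ for all $k\geq 2$ is asserted but never established, and since $Q$ is not monotone in $k$ --- unlike the probabilities controlled in Theorems~\ref{thm:distance2:randCyclicPerm:apxMain} and~\ref{thm:distance2:randPerm:apxMain}, which decrease to their limits --- none of the limiting or monotonicity arguments used in those proofs can close it. Your plan to derive a closed form from the renewal structure and then verify ``a polynomial inequality in $k$ for each fixed $d$'' is a program, not an argument: verifying that family of inequalities uniformly in $k$ and $d$ is precisely the open problem, and for $d=2$ the inequality is tight both at $k=2$ and in the limit $k\to\infty$, so there is no slack for crude estimates. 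In short, your attack is set up correctly and is consistent with everything the paper reports, but it proves nothing beyond what the paper already knows, and the statement remains a conjecture.
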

Note that the conjecture is based on an extensive computational study.
We computer-verified the statement in all cases when $|S|\leq 1000$ and $d\leq 100$.
Enumerating all such instances directly is hopeless, but one can design a non-trivial dynamic programming algorithm for computing the exact expected value in the case $|S|=(k-1)d+1$ and $|T|=1$.
Similarly to the proof of Theorem~\ref{thm:distance2:randPerm:apxMain}, this confirms the conjecture for all problem instances with $|S|\leq 1000$ and $d\leq 100$.

Note that it is not difficult to construct an example for each $d$ in which the expected weight of the returned edges is exactly $\frac{1}{2}\frac{d^2+d+2}{d^2+d}$ times the optimum, so one cannot hope to improve the bound above.

\section{Open questions}
An FPT algorithm parameterized by $d$ was given for the $d$-distance matching problem~\cite{madarasi2021matchings}.
A straightforward generalization of this approach gives an FPT algorithm parameterized by both $d$ and $\max_{v\in V}b(v)$ for the $d$-distance $b$-matching problem.
It is not clear whether an FPT algorithm parameterized only by $d$ exists for this problem.
The $d$-distance matching problem was shown to be solvable in polynomial time when the size of $T$ is a constant~\cite{madarasi2021matchings}.
Is the problem polynomial-time solvable when the size of $T$ is taken as a parameter?

It remains open whether an optimal permutation can be found when both $d$ and the coordinates of $b$ are constants.
Improving the $e$-approximation algorithms for finding the best permutation --- and in particular proving Conjecture~\ref{conj:distance2:perm:approxConj} --- seems to be a challenging problem.

By Theorems~\ref{thm:distance2:perm:nonCyclicDMSolvable}~and~\ref{thm:distance2:perm:cyclicDMSolvable}, finding a permutation maximizing the weight of the heaviest (cyclic) $d$-distance $b$-matching can be solved when $b(s)= 1$ for all $s$, and it is NP-hard when $b(s)=2$ for all $s\in S$.
What can we say about the cases between these two extremes?
It is easy to see that the problem is solvable when $b(s)\in\{1,2\}$ and there are only a constant number of nodes for which $b$ is $2$.
A natural question is whether an FPT algorithm parameterized by the number of nodes for which $b$ is $2$ exists.

The construction given in Theorem~\ref{thm:distance2:perm:cyclicDbMNPC} seems to be easy to modify to show that the optimization version of the cyclic problem is APX-hard.
We do not know whether the non-cyclic optimization problem behaves differently in this regard.

The integrality gap of~{(LP1')} is at most $(2-\frac{1}{d})$ in the cyclic case when the size of $S$ is divisible by $(2d-1)$.
We believe that this bound holds regardless of the size of $S$, but the proof of Theorem~\ref{thm:distance2:gapCyclicDistance} does not seem to generalize to this case.

In the non-cyclic case, the integrality gap is at most $(2-\frac{2}{d})$, but this does not seem to be tight.
The exact value of the integrality gap remains unknown.

In a natural generalization of the problem, a bound $g(t,I)\in\Z_+$ is given on the number of edges induced by $I$ and $t$ for each $t\in T$ and for each interval $I$ of length $d$ in $S$.
When $g\equiv 1$, we get back the $d$-distance $b$-matching problem.
Some of the results presented in this paper also apply to other special cases of this more general setting.
For example,~(\ref{lp:distance2:dmLp}) easily extends to the more general problem by changing the right hand-side of~(\ref{lp:distance2:dmLp:eq:distanceConstr}) to $g(s_i,R_d(s_i))$, and adding $x\leq\1$.
It is not hard to see that the bounds on the integrality gap given by Theorems~\ref{thm:distance2:gapCyclicDistance}~and~\ref{thm:distance2:gapDistance} hold for any uniform $g$.

The problem has several other natural generalizations.
For example, pose distance constraints on both node classes, or drop some of the distance constraints, etc., which are subjects for further research.

Motivated by the position-based scheduling problem on a single machine~\cite{horvathKis2020positionBasedScheduling}, we introduce the \emph{position-based optimal permutation problem}, in which placing $s\in S$ at each position has an associated cost, and the goal is to find a minimum-cost permutation of $S$ under which a perfect $d$-distance matching exists.
When the cost function is uniform, the problem can be solved by Theorem~\ref{thm:distance2:perm:nonCyclicDMSolvable}.
This approach does not seem to work for other cost functions, so it is an exciting open question whether this problem is polynomial-time solvable.

\section{Acknowledgement}
The author is grateful to Sára Hanna Tóth for discussions and for finding the gadget used in the proof of Theorem~\ref{thm:distance2:doubleUnweightedApxHard}.
This research has been implemented with the support provided by the Ministry of Innovation and Technology of Hungary from the National Research, Development and Innovation Fund, financed under the ELTE TKP 2021-NKTA-62 funding scheme, by the Ministry of Innovation and Technology NRDI Office within the framework of the Artificial Intelligence National Laboratory Program, and by the Lend\"ulet Programme of the Hungarian Academy of Sciences -- grant number LP2021-1/2021.\\


\bibliographystyle{spmpsci}      
\newpage
\bibliography{matchingsUnderDistanceConstrII_revised}   


\end{document}